\newcommand{\todo}[1]{{\bf\color{red} Todo: #1}}
\newcommand{\quest}[1]{}
\newcommand{\longversion}[1]{}
\newcommand{\K}{\mathbb{K}}
\newcommand{\N}{\mathbb{N}}
\newcommand{\UL}{\underline{L}}
\newcommand{\US}{\underline{S}}
\newcommand{\B}[1]{\mathfrak{B}(#1)}
\newcommand{\BB}{\mathfrak{B}}
\newcommand{\UBB}{\underline{\mathfrak{B}}}
\let\cref\Cref
\newif\ifhideproofs
\newtheorem{theorem}{Theorem}[section] 
\newtheorem{corollary}[theorem]{Corollary}
\newtheorem{lemma}[theorem]{Lemma}
\newtheorem{proposition}[theorem]{Proposition} 
\newtheorem{definition}[theorem]{Definition}
\newtheorem{example}[theorem]{Example}
\begin{document}
\begin{frontmatter}    
	
	\title{Factorizing Lattices by Interval Relations}
	
	\author{Maren Koyda\corref{cor1} \orcidlink{0000-0002-8903-6960}} \ead{koyda@cs.uni-kassel.de}
	
	\author{Gerd Stumme\corref{cor2} \orcidlink{0000-0002-0570-7908}} \ead{stumme@cs.uni-kassel.de}
	\cortext[cor1]{Corresponding author}
	\address{Knowledge and Data Engineering Group,\\[0.5ex]
		Research Center for Information System Design,\\[0.5ex]
		University of Kassel, Wilhelmshöher Allee 73, D-34121 Kassel,
		Germany}
	
\begin{abstract}
	This work investigates the factorization of finite lattices to implode selected intervals while preserving the remaining order structure.
	We examine how complete congruence relations and complete tolerance relations can be utilized for this purpose and answer the question of finding the finest of those relations to implode a given interval in the generated factor lattice.
	To overcome the limitations of the factorization based on those relations, we introduce a new lattice factorization that enables the imploding of selected disjoint intervals of a finite lattice.
	To this end, we propose an \emph{interval relation} that generates this factorization.
	To obtain lattices rather than arbitrary ordered sets, we restrict this approach to so-called \emph{pure intervals}.
	For our study, we will make use of methods from Formal Concept Analysis (FCA).
	We will also provide a new FCA construction by introducing
	the \emph{enrichment} of an incidence relation by a set of intervals in a formal context, to investigate the approach for lattice-generating interval relations on the context side.
\end{abstract}

	\begin{keyword}Formal Concept Analysis \sep Lattices \sep Intervals \sep Factorization \sep Order \sep Crowns
	\end{keyword}
	
\end{frontmatter}

\section{Introduction}
\label{sec:introduction}

A lattice $\UL$ consists of several intervals 
(which means subsets $[u,v]\coloneqq\{c\in\UL\mid u\le c, c\le v\}$).
Such an interval is often considered as a unity.
Therefore, a transformation of the lattice that implodes such an interval (or a set of intervals) is a suitable way to obtain a more compact representation of the original structure.
In this work, we investigate different ways to factorize a lattice to ``implode'' selected intervals,
meaning to condense an interval into a single element in the following manner:

\begin{definition}
	\label{def:implosion}
	An \emph{implosion} of a set of disjoint intervals $\US_1,\dots,\US_k$ in an ordered set $(P,\le_P)$ is a surjective, order-preserving map $f\colon (P,\le_P)\rightarrow (Q\le_Q)$ on an ordered set $Q$ so that $|f(\US_i)|=1$ for all $i\in\{1,\dots,k\}$.
\end{definition}


We pursue two -- partially conflicting -- goals for such an implosion.
The first goal is that the function $f$ is as compatible as possible with the lattice structure -- in the best case, f is a lattice homomorphism.
The second goal is that the remaining part of the lattice remains intact as much as possible -- in the best case, $f$ is injective in $\UL\setminus\US$.
As we will discuss in the sequel, there are different ways to realize an implosion of one  or more intervals with different trade-offs regarding our goals.

After recalling some definitions in~\cref{sec:foundations} and discussing the state of the art in~\cref{sec:related-work}, we will, in~\cref{sec:fact_substr}, examine how lattice factorizations based on complete congruence relations and complete tolerance relations can be utilized for imploding intervals in a lattice.
In particular, we answer the question of finding the finest of those relations 
that implodes a given interval in the generated factor lattice while preserving as many of the other elements of the lattice as possible.
While both of these approaches result in an infimum- and supremum-preserving factor lattice, the imploded intervals are often larger than just the selected interval.
It is even possible for the factor lattice to implode the whole lattice and thus to contain only one element.
To preserve all elements of the lattice except those in selected intervals we investigate then, in~\cref{sec:new_relation}, a new kind of factorization based on \emph{interval relations}.
\longversion{
}

We will use methods of Formal Concept Analysis (FCA)~\cite{fca-book} to study the different types of implosion.
In this field, the basic structures are \emph{formal contexts} which consist of a set of objects, a set of attributes, and an incidence relation, that represents which object \emph{has} which attribute.
The maximal sets of objects that share an identical maximal set of attributes are called the \emph{formal concepts} and can be ordered by the subset relation.
This determines a complete lattice, the \emph{concept lattice} corresponding to the formal context.
We introduce \emph{enrichments of the incidence relation by intervals} and show their one-to-one correspondence to interval relations. 
Those structures can be utilized to implode selected intervals in the lattice while preserving the original order relation.
By restricting the approach to \emph{pure intervals} we also ensure the lattice properties in the generated structure.

Since every finite lattice is isomorphic to a concept lattice, all statements can be translated to finite lattices in general.
For readability reasons we often omit the word ``finite'' in the following.
However, \textbf{all stetements are about finite structures only}.´


\section{Foundations}
\label{sec:foundations}

In the following, we recall basic notions from FCA and give notations that are used in this work.
For a more detailed introduction, we refer the reader to \cite{fca-book}.
A \emph{formal context} $\K\coloneqq(G,M,I)$ consists of 
a set $G$ whose elements are called \emph{objects}, a set $M$ whose elements are called \emph{attributes} and an \emph{incidence relation} $I\subseteq G\times M$. 
In this work, we assume that $G$ and $M$ are both finite.
Two operations, called \emph{derivations}, are defined on the power set of the objects and the power set of the attributes as follows:
$
\cdot'\colon\mathcal{P}(G)\to\mathcal{P}(M),~A\mapsto A'\coloneqq
\{m\in M\mid \forall g\in A\colon (g,m)\in I\}$ and $ 
\cdot'\colon\mathcal{P}(M)\to\mathcal{P}(G),~ B\mapsto B'\coloneqq\{g\in
G\mid \forall m\in B\colon (g,m)\in I\}
$. 
Instead of $A'$ we also write $A^I$ to point out the used incidence relation.
A pair $c=(A,B)$ consisting of an object subset $A\subseteq G$ and an attribute subset $B\subseteq M$ satisfying $A'=B$ and $B'=A$ is called a \emph{formal concept} of the context $(G,M,I)$.
$A$ is called the \emph{extent} and $B$ is called the \emph{intent} of $c$.
A concept of the form $c=(g'',g')$ ($c=(m',m'')$) is \emph{generated} by $g\in G$ (by $m\in M$).
\longversion{Further a \emph{minimal object generator} of a concept $c=(A,B)$ is the set $O\subseteq G$ with $O''=A$ so that $P''\not=A$ holds for every proper subsets $P\subsetneq O$.
The \emph{minimal attribute generator} of $c$ is defined in an analogously.
The set of all minimal object generators of $c$ is denoted by $minG_{obj}(c)$. 
The set $minG_{att}(c)$ stands for the set of all minimal attribute generators.}

On the set of all concepts, $\BB(\K)$, of a formal context $\K$, the order defined by $(A_1,B_1)\leq(A_2,B_2)$ iff $A_1\subseteq A_2$ determines the \emph{concept lattice} $\underline{\mathfrak{B}}(\mathbb{K})\coloneqq(\mathfrak{B}(\mathbb{K}),\leq)$.
For a lattice $\underline{L}$ we call the formal context $(\UL,\UL,\leq)$ the \emph{generic formal context of $\underline{L}$}. 
It holds that $\UL\cong\UBB(\UL,\UL,\leq)$ where
every concept is generated by a single object (a single attribute).
%

For the structural investigation of a formal context $\mathbb{K}=(G,M,I)$ we can \emph{reduce} the context. Since we consider only finite structures in this work, the following statements hold unconditionally:
We call an object $g\in G$ \emph{reducible} if an object set $X\subseteq G$ with $g\not\subseteq X $ and $g'=X'$ exists.
Otherwise, we call $g$ \emph{irreducible}. 
The definition applys analogously to the attribute set.
A formal context $\K$ containing no reducible attributes and objects is called \emph{reduced}. 
Such a context is called the \emph{standard context} of $\underline{\BB}(\K)$ it is unique up to isomorphism.
The concept lattice of a \emph{standard context} $\K$ is isomorphic to the lattice of every context that is constructed by adding reducible objects or attributes to $\K$.

An element of a lattice $v\in\UL$ is called $\bigvee$-reducible if $v=\bigvee\{x\in\UL\mid x< v\}$.
If $v=\bigwedge\{x\in\UL\mid x> v\}$ holds, $v$ is called $\bigwedge$-reducible.
Otherwise $v$ is called $\bigvee$-irreducible/ $\bigwedge$-irreducible.
The set of all $\bigvee$-irreducible ($\bigwedge$-irreducible) of a lattice $\UL$ is denoted by $J(\UL)$ ($M(\UL)$).
The formal context $(J(\UL),M(\UL)),\le$ is isomorphic to the standard context of the lattice $\UL$.

A part of a formal context $\K=(G,M,I)$ can 
is called
a \emph{subcontext} $\mathbb{S}=(H,N,J)$ with $H\subseteq G$, $N\subseteq M$ and $J= I\cap (H \times N)$.
We use the notions $\mathbb{S}\le\K$ and $[H,N]\coloneqq (H,N,I\cap (H\times N))$ to enable a better readability.
A special kind of subcontext is a \emph{Boolean subcontext of dimension $k$}  which is isomorphic to the context $(\{1,\dots,k\},\{1,\dots,k\},\ne)$.
A subcontext $[H,N]\le\K$ is called \emph{compatible subcontext} if for every concept $(A,B)\in\underline{\BB}(\K)$ the pair $(A\cap H,B\cap N)$ is a concept of $[H,N]$.
An expansion $J\supseteq I$ of the incidence relation is called \emph{block relation} of the formal context $\K=(G,M,I)$, if $g^J$ is an intent in $\K$ and $m^J$ is an extent in $\K$ for all $g\in G$ and respectively all $m\in M$.

For a (concept) lattice $\underline{L}=(L,\leq)$ and $S\subseteq L$, we call $\underline{S}=(\text{S},\le\cap(S\times S))$ \emph{suborder} of $\underline{L}$, and denote this by $\underline{S}\leq \underline{L}$.
Suborders can be generated by single elements:
The \emph{ideal} of $c\in L$ is defined as $(c]:=\{x\in L\mid x\le c\}$;
The \emph{filter} of $c\in L$ is defined as $[c):=\{x\in L\mid c\le x\}$.
For two elements $c,d \in L$ with $c\le d$  we call $[c,d]:=\{x\in L\mid c\le x\le d\}$ the \emph{interval} between $c$ and $d$.
A special kind of suborder are the \emph{crowns} of order $k\ge 3$:
A crown of order $k$ is a partially ordered set $\{x_1,y_1,\dots,x_k,y_k\}$ in which $x_i\le y_i$ for $i\in\{1,\dots,k\}$ and $x_i\le y_{i+1}$ for $i\in\{1,\dots,k-1\}$ and $x_1 \le y_k$ are the only relations in the order.






\section{Related Work}
\label{sec:related-work}

For $\bigwedge$-sublattices, $\bigvee$-sublattices and lattices -- and in general for algebras (i.e., a set with operations defined on its elements) -- homomorphism, congruence relation, and factor algebra are defined explicitly.
In the field of lattice theory, lattice congruences, as defined in~\cite{berghammer2012ordnungen}, where the requirement is compatibility with suprema and infima for finite sets, are examined.
In the realm of ordered sets, no such operations can be utilized.
Thus, there are different approaches to expand these theoretical aspects on ordered sets.

Moorth and Karpagavalli introduce a congruence relation on partially ordered sets that is not a lattice congruence~\cite{ordercongruence}. 
In particular, the congruence classes of this relation do not have to be intervals, a strong change of the original structure is valid.
Other approaches, as given by Snasel and Jukl~\cite{snavsel2009congruences} or Kolibiar~\cite{kolibiar1987congruence} aim to define congruence relations on ordered sets, that are precisely the lattice congruence if applied to lattices.

In the field of Formal Concept Analysis, a plenitude of approaches aim to reduce the size of a data set.
A standard method is the alteration of the object set, the attribute set, or the incidence relation of a formal context or combinations of those approaches.
E.g. in~\cite{Kumar}, a procedure based on random projection is provided;
Dias and Vierira~\cite{diasreducing} explore the replacement of similar objects by a prototypical one, and
a complementary approach relating on the attribute set is investigated by Kuitche et~al.~\cite{Kuitche2018}.
In~\cite{qi2019multi}, the principles of granulation, as introduced in~\cite{zadeh1997toward}, are applied to consider different levels of accuracy in formal contexts and concept lattices.
Besides alteration, there are methods based on identifying and selecting meaningful substructures.
Considering a formal context, Hanika et al.~\cite{hanika2019relevant} select the most relevant attributes related to their ability to reflect the distribution of the objects in the concepts.
Measuring the attributes on their appearance in contranominal scales, D{\"u}rrschnabel et al.~\cite{Durrschnabel} select a set of attributes. 
Another approach is the direct selection of entire concepts.
This can be done by random sampling~\cite{sampling} or the application of various measures.
To this end, Kuznetsov~\cite{kuzuetsov1990stability} introduced a stability measure for formal concepts based on the sets of extent subsets that generate the same intent.
The support measure of association rule mining was borrowed by Stumme et~al.~\cite{STUMME2002189} to generate iceberg concept lattices, special sub-semilattices of the original concept lattice.
Since we want to preserve existing substructures of a lattice via factorization, we turn away from measuring the importance of different parts of a formal context or a (concept) lattice in general. 


%

\section{Imploding Intervals with Congruences and Tolerances}
\label{sec:fact_substr}

This section presents two methods for lattice factorization to implode selected intervals while preserving certain structural properties of the original lattice.
Since we utilize approaches of FCA, we phrase the statements mostly for concept lattices.
However, all statements can be translated to finite lattices in general.

\subsection{Complete Congruence Relations}
\label{sec:congruenz}

Following Ganter and Wille~\cite{fca-book}, we define a (complete) congruence relation of a complete lattice $\UL$ as an equivalence relation $\theta$ on $\UL$ that satisfies the following condition:
$x_t\theta y_t \text{ for } t\in T \Rightarrow (\bigvee_{t\in T} x_t)\theta (\bigvee_{t\in T} y_t) $ and $(\bigwedge_{t\in T} x_t)\theta(\bigwedge_{t\in T} y_t)$.
Thus, congruence relations preserve the $\bigwedge$ and $\bigvee$ operators of $\UL$ in $\UL/ \theta$.

For every lattice $\UL$ and every interval $\US\le\UL$ at least one congruence relation $\theta$ on $\UL$ exists %
that implodes $\US$, meaning that $f\colon\UL\rightarrow \UL/\theta$ with $f(x)=[x]\theta$ (the equivalence class of $\theta$ including x) is an implosion of $\US$ in $\UL$.
This is always the case for the trivial congruence relation $\theta$ that has a single $\theta$-class $[x]\theta=L$,
meaning that $|f(\UL)|=1$.
To utilize this method to our aim of imploding specific intervals while preserving as much of the remaining structure as possible, the following question arises:
Given a lattice $\UL$ and an interval $\US\le\UL$, which congruence relation $\theta$ on $\UL$
is the finest (meaning that $|\UL/\theta|$ is as large as possible) 
so that $f\colon\UL\rightarrow \UL/\theta$ is an implosion of $\US$ in $\UL$
and how can we determine this $\theta$?

The congruence relations on a given lattice $\UL$ are a closure system. 
So a unique finest congruence with the required property exists.
Also, in the finite case, the congruence relations and the compatible subcontexts of the reduced formal context $\K$ with $\B\K\cong \UL$ have a one-to-one correspondence~\cite{fca-book}. 
We adapt this statement to our question setting as follows:

\begin{lemma}
	\label{lem:congruence}
	Let $\mathbb{K}=(G,M,I)$ be a reduced formal context 
	and $\US=[(A,B),(C,D)]\le\underline{\BB}(\K)$ an interval.
	Let $H=\{A\cup\{g\in G\mid g\not\in C\}\}$ and $N=\{D\cup\{m\in M\mid m\not\in B\}\}$.
	The set of all compatible subcontexts $[O,P]\le \K$ with $O\subseteq N$ and $P\subseteq H$ corresponds to the set of all congruence relations $\theta$ on $\underline{\BB}(\K)$ 
	with $f\colon\UL\rightarrow \UL/\theta$ is an implosion of $\US$ in $\UL$.
	The largest compatible subcontext $[O,P]\le \K$ with $O\subseteq N$ and $P\subseteq H$ corresponds to the finest of those congruence relations.
\end{lemma}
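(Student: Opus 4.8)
The plan is to deduce the lemma from the one-to-one correspondence, recalled above, between the complete congruences of $\underline{\BB}(\K)$ and the compatible subcontexts of the reduced context $\K$, by making this correspondence explicit and then reading off, on the context side, exactly when the induced factor map collapses $\US$. Recall that this correspondence matches a compatible subcontext $\mathbb{S}=[O,P]\le\K$ with the congruence $\theta_{\mathbb{S}}$ that is the kernel of the map $\pi_{\mathbb{S}}\colon\underline{\BB}(\K)\to\underline{\BB}(\mathbb{S})$, $(X,Y)\mapsto(X\cap O,Y\cap P)$; here compatibility of $\mathbb{S}$ is exactly what makes $(X\cap O,Y\cap P)$ a concept of $\mathbb{S}$, and $\pi_{\mathbb{S}}$ is a surjective complete homomorphism, so $(X_1,Y_1)\,\theta_{\mathbb{S}}\,(X_2,Y_2)$ holds iff $X_1\cap O=X_2\cap O$ and $Y_1\cap P=Y_2\cap P$. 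Let $f$ be the canonical order-preserving surjection $\UL\to\UL/\theta_{\mathbb{S}}$. The first key point is that, since $\US=[(A,B),(C,D)]$ is an interval and $f$ is order preserving, $f(\US)\subseteq[f(A,B),f(C,D)]$, so $|f(\US)|=1$ iff $f(A,B)=f(C,D)$ (if the two images agree the enclosing interval is a singleton, and conversely $(A,B),(C,D)\in\US$). Hence $\theta_{\mathbb{S}}$ gives an implosion of $\US$ exactly when $(A,B)\,\theta_{\mathbb{S}}\,(C,D)$, \ie exactly when $A\cap O=C\cap O$ and $B\cap P=D\cap P$.

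The next step is to rewrite these two equalities as the containments in the statement. From $(A,B)\le(C,D)$ we have $A\subseteq C$ and $D\subseteq B$; hence $A\cap O=C\cap O$ is equivalent to $(C\setminus A)\cap O=\emptyset$, and since $H=A\cup(G\setminus C)=G\setminus(C\setminus A)$ this is precisely $O\subseteq H$. Dually, $B\cap P=D\cap P$ is equivalent to $(B\setminus D)\cap P=\emptyset$, \ie to $P\subseteq M\setminus(B\setminus D)=N$. So $\theta_{\mathbb{S}}$ implodes $\US$ precisely when $O\subseteq H$ and $P\subseteq N$, and restricting the above correspondence to such subcontexts yields a bijection onto the complete congruences $\theta$ on $\underline{\BB}(\K)$ for which $f\colon\UL\to\UL/\theta$ is an implosion of $\US$. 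This proves the first assertion.

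For the second assertion, observe that the complete congruences imploding $\US$ are closed under arbitrary intersection, since an intersection of congruences that each identify $(A,B)$ with $(C,D)$ still identifies them; hence, by the closure-system property recalled above, there is a unique finest such congruence, $\theta^{\ast}$, namely the intersection of all of them. The congruence--subcontext correspondence reverses inclusion -- a larger compatible subcontext gives a finer congruence, as one sees by intersecting the defining equalities of $\theta_{\mathbb{S}_2}$ with $O_1$ and $P_1$ whenever $\mathbb{S}_1\le\mathbb{S}_2$ -- and in fact is an anti-isomorphism of the two complete lattices. Therefore $\theta^{\ast}$ corresponds to the largest compatible subcontext $[O,P]\le\K$ satisfying $O\subseteq H$ and $P\subseteq N$; in particular such a largest subcontext exists. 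This proves the second assertion.

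I expect the only genuinely non-routine point to be the reduction in the first paragraph: that the whole interval $\US$ collapses under $f$ precisely when its two endpoints are identified, so that ``imploding $\US$'' reduces to the single relation $(A,B)\,\theta\,(C,D)$. After that, one only substitutes the explicit projection into this relation and unwinds the definitions of $H$ and $N$; the remaining care needed is to confirm that the restriction of the congruence--subcontext anti-isomorphism to the two sub-families is again an anti-isomorphism, which is immediate.
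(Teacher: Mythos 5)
Your proof is correct, and it establishes both assertions of the lemma. You have also (silently, and rightly) corrected what is evidently a typo in the statement: with $H\subseteq G$ and $N\subseteq M$ as defined, the containments must read $O\subseteq H$ and $P\subseteq N$ for a subcontext $[O,P]$ with object set $O$ and attribute set $P$; your derivation and the paper's worked examples both confirm this reading. Your route differs from the paper's in how the condition on $[O,P]$ is extracted. The paper uses the other half of the Ganter--Wille correspondence: each object $g$ of a compatible subcontext generates the \emph{smallest} element $(g'',g')$ of its $\theta$-class (dually for attributes), and one then argues that no $g$ with $(g'',g')\le(C,D)$ but $(g'',g')\not\le(A,B)$ can have this property once $(A,B)\,\theta\,(C,D)$, which excludes exactly the objects of $C\setminus A$ and the attributes of $B\setminus D$. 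You instead write the congruence as the kernel of the projection $(X,Y)\mapsto(X\cap O,Y\cap P)$ and compute directly that $(A,B)\,\theta_{[O,P]}\,(C,D)$ is equivalent to $O\subseteq G\setminus(C\setminus A)=H$ and $P\subseteq M\setminus(B\setminus D)=N$; together with your observation that imploding the interval is equivalent to identifying its two endpoints (which the paper leaves implicit), this gives a clean biconditional for the ``set of all'' claim rather than only the exclusion direction. For the second assertion both arguments rest on the same facts -- the congruences identifying a fixed pair are an intersection-closed order filter, and the subcontext/congruence correspondence is a dual isomorphism -- so there is no substantive difference there. The one point worth double-checking in your write-up is the claim that larger subcontexts yield finer congruences; your verification by intersecting the defining equalities with the smaller object and attribute sets is exactly right and consistent with the dual isomorphism the paper cites.
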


\begin{proof}
	The compatible subcontexts of $\K$ correspond to the complete congruences of $\BB(\K)$ so that they induce the complete congruences, and the ordered set of the compatible subcontexts is dually isomorphic to the congruence lattice~\cite[p.107]{fca-book}.
	Due to the order of the congruence lattice, for a given interval $\US$, the set of all congruence relations $\theta$ on $\UL$ with 
	$f\colon\UL\rightarrow \UL/\theta$ is an implosion of $\US$ in $\UL$
	are an order filter,
	meaning $\exists [x]\theta$ with $\US\subseteq [x]\theta$.
	This filter is generated by the unique finest complete congruence with this property.
	Analogously, there is a corresponding compatible subcontext that is the unique greatest one generating an order ideal of all compatible subcontexts corresponding to the former mentioned congruences.
	For every object $g$ of a compatible subcontext the concept $(g'',g')$ is the smallest element of a $\theta$-class of the corresponding congruence $\theta$. 
	Analogously for every attribute $m$ of a compatible subcontext the concept $(m',m'')$ is the greatest element of a $\theta$-class of the corresponding congruence $\theta$~\cite[Prop. 40]{fca-book}.
	Thus, the compatible subcontexts corresponding to the congruences with a $\theta$-class containing $\US$ must not contain the object set $\{\{g\in G \mid (g'', g')\in \US\}\setminus A\}$ and the attribute set $\{\{m\in M \mid (m', m'')\in \US\}\setminus D\}$.
	So the greatest compatible subcontext fulfilling this requirement corresponds to the finest congruence with a $\theta$-class containing $\US$.
	However, the possible selection of objects and attributes can be reduced as follows: 
	Let $g\not\in H$. 
	Then $(g'',g')\le(C,D)$ and $(g'',g')\not\le(A,B)$ and therefore $(g'',g')\wedge(C,D)=(g'',g')$ and $(g'',g')\wedge(A,B)<(g'',g')$ hold.
	So $(g'',g')$ is not the smallest element of a $\theta$-class, and therefore $g$ is not contained in the compatible subcontext.  
\end{proof}


\longversion{
\begin{figure}[t]
	\begin{minipage}{0.45\textwidth}
		\centering
		\begin{cxt}%
			\att{a}%
			\att{b}%
			\att{c}%
			\attc{d}%
			\att{e}%
			\attc{f}%
			\att{g}%
			\obj{xxBxB.x}{1} %
			\obj{xBxxB.x}{2} %
			\objc{xxxcxPx}{3} %
			\objc{...Pxc.}{4} %
			\obj{BxxxxxB}{5} %
			\obj{BBBxB.x}{6} %
		\end{cxt}
	\end{minipage}
	\begin{minipage}{0.44\textwidth}
		\centering
		
		{\unitlength 0.6mm
			
			\begin{picture}(70,75)%
			\put(0,0){%
				
				\begin{diagram}{70}{75}
				\Node{1}{30}{0}
				\Node{2}{15}{15}
				\Node{3}{45}{15}
				\Node{4}{0}{30}
				\Node{5}{15}{30}
				\Node{6}{30}{30}
				\Node{7}{67.5}{37.5}
				\Node{8}{0}{45}
				\Node{9}{15}{45}
				\Node{10}{30}{45}
				\Node{11}{52.5}{52.5}		
				\Node{12}{15}{60}
				\Node{13}{30}{75}
				\Node{14}{7.5}{52.5}
				
				\Edge{1}{2}
				\Edge{1}{3}
				\Edge{2}{4}
				\Edge{2}{5}
				\Edge{2}{6}
				\Edge{3}{6}
				\Edge{3}{7}
				\Edge{4}{8}
				\Edge{4}{9}
				\Edge{5}{8}
				\Edge{5}{10}
				\Edge{6}{9}
				\Edge{6}{10}
				\Edge{6}{11}
				\Edge{7}{11}
				\Edge{8}{14}
				\Edge{9}{12}
				\Edge{10}{12}
				\Edge{12}{13}
				\Edge{12}{14}
				\Edge{11}{13}

				\leftObjbox{2}{3}{1}{3}
				\rightObjbox{3}{3}{1}{5}
				\leftObjbox{4}{3}{1}{1}
				\leftObjbox{5}{3}{1}{2}
				\rightObjbox{7}{3}{1}{4}
				\rightObjbox{14}{2}{1}{6}

				\leftAttbox{8}{3}{1}{a}
				\rightAttbox{9}{3}{1}{b}
				\NoDots\rightAttbox{10}{3}{1}{c}
				\leftAttbox{12}{3}{1}{d}
				\rightAttbox{11}{3}{1}{e}
				\rightAttbox{7}{3}{1}{f}
				\leftAttbox{14}{3}{1}{g}
				
				\end{diagram}}
			
			\put(15,15){\ColorNode{red}}
			\put(0,30){\ColorNode{red}}
			\put(15,30){\ColorNode{red}}
			\put(30,30){\ColorNode{red}}
			\put(0,45){\ColorNode{red}}
			\put(15,45){\ColorNode{red}}
			\put(30,45){\ColorNode{red}}
			\put(15,60){\ColorNode{red}}	
			\put(7.5,52.5){\ColorNode{red}}
			
			\put(30,0){\ColorNode{blue}}
			\put(45,15){\ColorNode{blue}}
			
			\put(52.5,52.5){\ColorNode{green}}
			\put(30,75){\ColorNode{green}}
			
			\put(67.5,37.5){\ColorNode{yellow}}


			\end{picture}}			
	\end{minipage}
	\longversion{\begin{minipage}{0.25\textwidth}
			\centering
			
			{\unitlength 0.6mm
				
				\begin{picture}(55,55)%
				\put(0,0){%
					
					\begin{diagram}{55}{55}
					\Node{1}{37.5}{37.5}
					\Node{2}{0}{30}
					\Node{3}{30}{0}
					\Node{4}{22.5}{52.5}
					\Node{5}{52.5}{22.5}
					\Node{6}{15}{15}

					\Edge{3}{6}
					\Edge{3}{5}
					\Edge{6}{1}
					\Edge{5}{1}
					\Edge{2}{4}
					\Edge{1}{4}
					\Edge{6}{2}

					\rightObjbox{3}{3}{1}{5}
					\rightObjbox{5}{3}{1}{4}
					\leftObjbox{6}{3}{1}{3}
					\leftObjbox{2}{3}{1}{2}
					
					\rightAttbox{2}{3}{-1}{a,b,c,d}
					\rightAttbox{1}{3}{1}{e}
					\rightAttbox{5}{3}{1}{f}
					
					\end{diagram}}
				

				\end{picture}}			
	\end{minipage}}
	\caption{
		Small example of a reduced formal context $\K=(G,M,I)$ (left) and its corresponding context lattice $\underline{\BB}(\K)$ (right). 
		The highlighted subcontext of $\K$ represents the compatible subcontext corresponding to the congruence relation on $\underline{\BB}(\K)$ thats different equivalence classes are presented through the four different colours. 
		The incidences represented by $\bullet$ are the ones that have to be added for arising a block relation that refers to the tolerance relation that is identical to the pictured congruence relation.
	}
	\label{fig:congruence}
\end{figure}

}

\longversion{
\begin{example}
	As despicted in \cref{fig:congruence}(right) the finest congruence relation $\theta$ so that $\underline{\BB}(\K)/\theta$ implodes the red highlighted interval 
	is given through the four different equivalence classes (represented by the different colors).
	For the search of the compatible subcontext referring to $\theta$ the application of~\cref{lem:congruence} results in the sets $S=\{3,4\}$ and $T=\{d,f\}$ and therefore in the in the (highlighted) compatible subcontext $[\{3,4\},\{d,e\}]$ in the corresponding formal context $\K$ (left).
\end{example}
}

\begin{figure}[t!]
	\centering
	\begin{minipage}{0.47\textwidth}
		\centering
		{\unitlength 0.6mm
			\begin{picture}(60,80)%
			\put(0,0){%
				
				\begin{diagram}{60}{80}
				\Node{1}{40}{0}
				\Node{2}{15}{15}
				\Node{3}{65}{15}
				\Node{4}{40}{30}
				\Node{5}{28}{55}
				\Node{6}{52}{55}
				\Node{7}{40}{80}
				\Node{8}{3}{40}
				\Node{9}{77}{40}
				\Node{10}{15}{65}
				\Node{11}{65}{65}		
				\Node{12}{27}{40}
				\Node{13}{53}{40}
				\Node{14}{56}{68}
				\Node{15}{60}{73}
				
				\Edge{1}{2}
				\Edge{1}{3}
				\Edge{3}{4}
				\Edge{2}{4}
				\Edge{4}{5}
				\Edge{4}{6}
				\Edge{7}{5}
				\Edge{7}{6}
				\Edge{8}{5}
				\Edge{8}{2}
				\Edge{9}{6}
				\Edge{9}{3}
				\Edge{10}{7}
				\Edge{8}{10}
				\Edge{9}{11}
				\Edge{12}{2}
				\Edge{12}{6}
				\Edge{12}{10}
				\Edge{13}{3}
				\Edge{13}{5}
				\Edge{13}{11}
				\Edge{11}{14}
				\Edge{11}{15}
				\Edge{14}{7}
				\Edge{15}{7}

				\leftObjbox{2}{3}{1}{4}
				\rightObjbox{3}{3}{1}{3}
				\leftObjbox{8}{3}{1}{2}
				\rightObjbox{9}{3}{1}{5}
				\leftObjbox{12}{3}{1}{1}
				\rightObjbox{13}{3}{1}{6}
				\NoDots\leftObjbox{14}{3}{3}{7}
				\rightObjbox{15}{3}{0}{8}
				
				\leftAttbox{10}{3}{1}{a}
				\leftAttbox{5}{3}{1}{c}
				\leftAttbox{6}{3}{1}{b}
				\NoDots\leftAttbox{14}{3}{-2}{d}
				\rightAttbox{15}{3}{1}{e}				
				
				\end{diagram}}
			
			\put(15,65){\ColorNode{red}}		
			\put(15,15){\ColorNode{red}}
			\put(28,55){\ColorNode{red}}
			\put(52,55){\ColorNode{red}}
			\put(40,80){\ColorNode{red}}
			\put(27,40){\ColorNode{red}}
			\put(40,30){\ColorNode{red}}
			\put(3,40){\ColorNode{red}}	
			
			\put(65,15){\ColorNode{blue}}
			\put(65,65){\ColorNode{blue}}
			\put(60,73){\ColorNode{blue}}
			\put(77,40){\ColorNode{blue}}
			\put(53,40){\ColorNode{blue}}
			
			\put(37,27){\dashbox{1}(6,6)}
			\put(25,52){\dashbox{1}(6,6)}
			\put(49,52){\dashbox{1}(6,6)}
			\put(37,77){\dashbox{1}(6,6)}
			\put(62,12){\dashbox{1}(6,6)}
			\put(74,37){\dashbox{1}(6,6)}
			\put(50,37){\dashbox{1}(6,6)}
			\put(62,62){\dashbox{1}(6,6)}
			\put(53,65){\dashbox{1}(6,6)}
			\put(57,70){\dashbox{1}(6,6)}
			\end{picture}}			
	\end{minipage}
	\begin{minipage}{0.47\textwidth}
		\centering
		\setlength{\tabcolsep}{4pt}
		\begin{cxt}%
			\att{a}%
			\att{b}%
			\att{c}%
			\att{d}%
			\att{e}%
			\obj{xxBBB}{1} %
			\obj{xBxBB}{2} %
			\obj{.xxxx}{3} %
			\obj{xxxBB}{4} %
			\obj{.xBxx}{5} %
			\obj{.Bxxx}{6} %
			\obj{.BBxB}{7} %
			\obj{.BBBx}{8} %
		\end{cxt}
	\end{minipage}
%
%
%
%
%
%
%
%
%
%
%
%
%
%
%
%
%
	\caption{
	A (concept) lattice $\UBB(\K)$(left) and its corresponding reduced formal context $\K=\GMI$ (right).	
	The intervals $\US_1$(red) and $\US_2$(blue) are highlighted in $\UBB(\K)$.
	The finest congruence relation that implodes $\US_2$
	partitions the concepts in two intervals, the one highlighted with dotted boxes and the remaining one.
	Adding the $\bullet$-marked incidences to $I$ results in the block relation that corresponds to the finest tolerance relation that implodes $\US_2$.
	}
	\label{fig:running_exp}
\end{figure}

\begin{example}
	Considering the red highlighted interval $\US_1$ 
	in~\cref{fig:running_exp}(left), the sets $H=\{4\}$, $N=\{d,e\}$ result from applying~\cref{lem:congruence}.
	The compatible subcontexts of the corresponding reduced formal context $\K$ (top left) are $[\emptyset,\emptyset]$, $[3,a]$, and $[G,M]$.
	Thus, the compatible subcontext corresponding to the finest (and the only) congruence relation $\theta$, 
	that implodes $\US_1$ is $[\emptyset,\emptyset]$.
	So, the congruence relation we looked for is the trivial one that contains every concept in the same equivalence class.
	For $\US_2$ the sets $H=\{1,2,3,4,7\}$ and $N=\{a,e\}$ arise.
	Thus, the compatible subcontext $[3,a]$ corresponds to the finest congruence relation $\theta$ 
	that implodes $\US_2$.
	$\theta$ partitions the concepts in two intervals, the one highlighted with dotted boxes and the remaining one.
\end{example}

\subsection{Complete Tolerance Relations}
\label{sec:tolleranz}

Another tool for lattice factorization is the (complete) tolerance relation
$\theta$, a reflexive and symmetric relation on a complete lattice $\UL$ that satisfies the following condition:
$x_t\theta y_t \text{ for } t\in T \Rightarrow (\bigvee_{t\in T} x_t)\theta (\bigvee_{t\in T} y_t) \text{ and }(\bigwedge_{t\in T} x_t)\theta(\bigwedge_{t\in T} y_t)$.
As discovered by Czedli~\cite{czedli1982factor} they also generate a factor lattice $\UL/\theta$,
in which, similar to the congruence relations, the $\bigwedge$ and $\bigvee$ operators of $\UL$ are preserved.

Since a tolerance relation does not have to be transitive, we can not expect to find a lattice homomorphism between $\UL$ and $\UL/\theta$.
Instead, we have to arrange with a sublattice homomorphism.
The two possible maps are given by Ganter and Wille in~\cite[Prop. 56]{fca-book}.
This entails to consider sublattice homomorphism as implosions in this chapter, meaning $\exists [x]\theta$ with $\US\subseteq [x]\theta$.

Since every congruence relation is a tolerance relation, the trivial tolerance relation to implode a given interval exists on every lattice.
Thus, the question of finding the finest tolerance relation with this property also arises.
The tolerance relations of a lattice are in an one-to-one correspondence with the block relations of its corresponding formal context.
Hence, we propose the following statement to search for the finest block relation that implodes a chosen interval.

\begin{lemma}
	\label{lem:tollerance}
	Let $\mathbb{K}$ be a reduced formal context and $\US=[(A,B),(C,D)]\le\underline{\BB}(\K)$ an interval.
	Let $\widetilde{J}=I\cup (C\times B)$.
	The set of all block relations $J\supseteq I$ with $J\supseteq\widetilde{J}$ corresponds to the set of all tolerance relations $\theta$ on $\underline{\BB}(\K)$ with 
	with $\UBB(\K)\rightarrow \UBB(\K)/\theta$ being an implosion of $\US$.
	The finest block relation $J$ with $J\supseteq\widetilde{J}$ corresponds to the finest of those congruence relations.
\end{lemma}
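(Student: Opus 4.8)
The plan is to mirror the proof of \cref{lem:congruence}, exploiting the known one-to-one correspondence between tolerance relations of $\UBB(\K)$ and block relations of $\K$ (see Ganter and Wille), together with the explicit description of the blocks of the factor lattice in terms of the block relation. First I would recall that for a block relation $J\supseteq I$ the corresponding tolerance $\theta$ has the property that the block of $\theta$ containing a concept $(A_0,B_0)$ has largest element $(B_0^J{}',B_0^J)$ (equivalently, determined by $m^J$ for attributes $m\in B_0$) and smallest element $(A_0^J,A_0^J{}')$; in particular two concepts lie in a common block exactly when the corresponding derivations under $J$ agree. The coarser the block relation (the more incidences added), the coarser the tolerance, so $J\mapsto\theta$ is order-reversing and the block relations form a closure system, which is what gives a unique finest block relation once we fix the imploding requirement.

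Next I would translate the requirement ``$\UBB(\K)\to\UBB(\K)/\theta$ is an implosion of $\US=[(A,B),(C,D)]$'' into a condition on $J$. The interval $\US$ being contained in a single $\theta$-block means $(A,B)$ and $(C,D)$ lie in one block, and since blocks are intervals, the whole of $\US$ does too. By the description above, $(A,B)$ and $(C,D)$ are in one block iff $B^J=D^J$ (and dually $A^J=C^J$); since $I\subseteq J$ and $D\subseteq B$ already force $B^J\subseteq D^J$ on the attribute side, the only extra constraint is $C\subseteq B^{J}{}'$, i.e.\ every object in $C$ must be $J$-incident with every attribute in $B$ — precisely $C\times B\subseteq J$, that is $J\supseteq I\cup(C\times B)=\widetilde J$. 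Conversely I must check that $\widetilde J$-containment is sufficient: if $J\supseteq\widetilde J$ then all of $A$ through $C$ and all of $D$ through $B$ collapse, so $(A,B)$ and $(C,D)$ — and hence every concept between them — share a block, giving $|f(\US)|=1$. This establishes the claimed bijection between $\{J\mid J\supseteq\widetilde J,\ J \text{ a block relation}\}$ and $\{\theta\mid f \text{ implodes }\US\}$, as the general tolerance–block correspondence restricts to these subsets.

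For the second assertion I would use that the block relations of $\K$ form a closure system (intersection of block relations is a block relation), hence so does the subfamily of those containing the fixed relation $\widetilde J$; therefore this subfamily has a least element, the finest block relation $J$ with $J\supseteq\widetilde J$. Under the order-reversing correspondence, smaller $J$ corresponds to finer $\theta$, so this least $J$ corresponds to the finest tolerance $\theta$ with $f$ an implosion of $\US$ — the factor lattice $\UBB(\K)/\theta$ then has as many elements (blocks) as possible. I would also note that $\widetilde J$ itself need not be a block relation, so the finest block relation is its block-relation closure; this is the analogue of the passage in \cref{lem:congruence} from the naive object/attribute sets to the actual compatible subcontext.

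The main obstacle I anticipate is the sufficiency direction in the middle paragraph: one has to argue carefully that adding exactly the incidences $C\times B$ forces the \emph{entire} interval $\US$ — not merely its endpoints — into one block, and that it does not inadvertently merge $\US$ with something outside it in a way that would contradict being ``the finest such''. The first point follows because $\theta$-blocks are intervals of $\UBB(\K)$, so once $(A,B)\theta(C,D)$ every concept in $[(A,B),(C,D)]$ is $\theta$-equivalent to both; the second point is not actually required for the statement, since we only claim $J$ is finest \emph{among those that implode $\US$}, and that is handled by the closure-system argument. A secondary subtlety is making precise which of the two sublattice homomorphisms from \cite[Prop.~56]{fca-book} we take as $f$; but since both map a concept to its $\theta$-block, ``$|f(\US)|=1$'' is the same condition for either, so this does not affect the argument. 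The remaining verifications (that $B^J\subseteq D^J$ from $I\subseteq J$, $D\subseteq B$, and the dual) are routine and I would not spell them out in detail.
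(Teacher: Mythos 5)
Your proof is correct and follows essentially the same route as the paper's: the closure-system property of block relations yields the unique finest $J\supseteq\widetilde{J}$, and the tolerance--block correspondence (Thm.~15 in Ganter--Wille) translates the inclusion $C\times B\subseteq J$ into $(A,B)$ and $(C,D)$ sharing a $\theta$-block, whence the whole interval does. You are in fact somewhat more complete than the published proof, which only verifies sufficiency of $J\supseteq\widetilde{J}$ and leaves implicit the converse direction of the correspondence that you establish via $(A,B)\,\theta_J\,(C,D)\Leftrightarrow C\times B\subseteq J$.
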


\begin{proof}
	Since the set of all block relations is a closure system~\cite[p.122]{fca-book} there is a unique finest block relation $J$ which includes $\widetilde{J}$.
	The tolerance relation $\theta$ corresponding to $J$ has a $\theta$-class containing $\US$ due to the initial inclusion of $(C\times B)$ in $J$~\cite[Thm. 15]{fca-book}.
\end{proof}

In \cref{alg1} we give a strategy to find those relations:
Given a finite lattice $\UL$ and an interval $\US=[(A,B),(C,D)]\le \UL$, in the first step $(C\times D)$ is added to the incidence relation to ensure, that $(A,B)$ and $(C,D)$ are in the same equivalence-class~\cite[Thm. 15]{fca-book} and therefore are mapped to the same element by the factorization.
Then for every object $g\in C$ and every attribute $m\in B$ it is checked whether it satisfies the conditions for block relations with the new incidence relation $J=I\cup (C\times B)$. 
If this is not the case for an object $g$, for the smallest intent $N\subseteq M$ in $\UBB(\K)$ with $g^J\subset N$ all incidences $(g,n)$ with $n\in N\setminus g^J$ are added to $J$.
The method for attributes is analogous.
This process is repeated iteratively until the conditions for block relations hold for every object and attribute.
Note that since the intersection of two intents is an intent itself, the smallest intent selected in every step is unique. The same holds for extents.

\begin{algorithm}[t]
	\SetKw{break}{break}
	\SetKw{or}{or}
	\SetKwFor{loop}{loop}{}{}
	\small \SetKwComment{Comment}{}{} \SetKw{Kwin}{in}
	\DontPrintSemicolon \SetAlgoVlined 
	\KwIn{$\mathbb{K}=(G,M,I)$, $\US=[(A,B),(C,D)]$} 
	\KwOut{$J$}
	
	$J\coloneqq I\cup (C\times B)$\\
	$ext\coloneqq \{H|(H,N)\in\B\K\}$\\
	$int\coloneqq \{N|(H,N)\in\B\K\}$\\
	$check\coloneqq C\cup B$\\
	
	\While{$|check|>0$}{
				
				$x\coloneqq first(check)$
				
				\If{$x\in G$}{
					\If{$x^J\not\in int$}{
						
						$candidates\coloneqq \{y| y\in int, x\subset y\}$
						
						\For{$y\in candidates$}{
						$m_y\coloneqq y\setminus x^J$
						}
					$add\coloneqq min_{|m_y|}\{m_y\}$\\
					$J\coloneqq J\cup \{(x,m)|m\in add \}$\\
					$check\coloneqq check\cup add$
						
					}	
				}
				
				\If{$x\in M$}{
					\If{$x^J\not\in ext$}{
						
						$candidates\coloneqq \{y| y\in ext, x\subset y\}$
						
						\For{$y\in candidates$}{
							$g_y\coloneqq i\setminus x^J$
						}
						$add\coloneqq min_{|g_i|}\{g_i\}$\\
						$J\coloneqq J\cup \{(g,x)|g\in add \}$\\
						$check\coloneqq check\cup add$
					}	
				}
				
				$check\coloneqq check\setminus x$
	
	}
	\KwRet{$J$}
	\caption{Generation of the finest block relation to implode $\US$}
	\label{alg1}
\end{algorithm}

\begin{lemma}
	\Cref{alg1} results in $J$, the finest block relation that implodes $\US$ .
\end{lemma}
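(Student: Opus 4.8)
The statement to prove is that Algorithm~\ref{alg1} terminates and outputs the finest block relation containing $\widetilde{J}=I\cup(C\times B)$, which by \cref{lem:tollerance} is exactly the finest block relation that implodes $\US$. The plan is to verify three things in turn: (i) every incidence the algorithm adds must be present in \emph{any} block relation containing $\widetilde J$; (ii) the relation $J$ the algorithm returns is itself a block relation; (iii) the algorithm terminates. Combining (i) and (ii) gives minimality, and (iii) makes the statement meaningful.

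**Step 1 (soundness of each addition).** I would argue by induction on the sequence of additions that at every stage of the loop, the current $J$ is contained in every block relation $J^\ast \supseteq \widetilde J$. The base case is $J = \widetilde J$, which holds by hypothesis. For the inductive step, suppose $J \subseteq J^\ast$ and the algorithm processes an object $x$ whose row $x^J$ is not an intent of $\K$. In any block relation $J^\ast$, $x^{J^\ast}$ must be an intent; since $J \subseteq J^\ast$ we have $x^J \subseteq x^{J^\ast}$, so $x^{J^\ast}$ is an intent containing $x^J$. The algorithm adds the set $add$, chosen as $y \setminus x^J$ for the intent $y$ in $candidates$ of minimal such difference. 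Here I expect a subtlety worth spelling out: one must check that the \emph{minimal-size} difference actually corresponds to the \emph{smallest} intent above $x^J$, which the remark after the algorithm justifies by noting that intents are closed under intersection, so the intent-closure of $x^J$ is the unique smallest intent $\supseteq x^J$, and it is among the candidates. Hence $add \subseteq x^{J^\ast} \setminus x^J$, so adding the incidences $\{(x,m)\mid m\in add\}$ keeps $J \subseteq J^\ast$. The attribute case is dual, using that $x^{J^\ast}$ must be an extent. This establishes that the returned $J$ is a lower bound for all block relations containing $\widetilde J$.

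**Step 2 (the output is a block relation).** Here I would use the loop invariant maintained by the $check$ set: the loop only removes $x$ from $check$ after confirming (or making) $x^J$ an intent/extent, and crucially, whenever $J$ is enlarged the newly affected attributes/objects ($add$) are put back into $check$. So at termination $check=\emptyset$, and I would argue that at that point $g^J$ is an intent for every $g\in G$ and $m^J$ is an extent for every $m\in M$. The rows/columns not in $C\cup B$ were never changed from $I$, so $g^J=g^I$ is an intent of $\K$ (resp.\ $m^J=m^I$ an extent) for those — using that $g^I=g''$ is always an intent in the original context. For the rows/columns that were touched, the emptiness of $check$ at the end certifies the block-relation condition directly. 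Since $J\supseteq I$ and both conditions hold, $J$ is a block relation; it contains $\widetilde J$ by construction. Together with Step 1 this forces $J$ to be the unique finest block relation $\supseteq\widetilde J$, which by \cref{lem:tollerance} corresponds to the finest tolerance relation imploding $\US$.

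**Step 3 (termination) and main obstacle.** Termination follows from a monotonicity/potential argument: $J$ only grows and is bounded by $G\times M$; each time an element is (re-)processed and triggers an addition, $|J|$ strictly increases, and each processing that does \emph{not} trigger an addition strictly decreases $|check|$ without changing $J$. So the pair $(|G\times M|-|J|,\ |check|)$ decreases lexicographically. I expect the main obstacle to be Step 1 — specifically, being careful that the algorithm's greedy local choice (take the smallest intent above the current row, patch that row, repeat) genuinely computes the \emph{global} closure rather than getting stuck in a suboptimal order; the argument that each individual patch is forced in every block relation is what rules this out, but it relies on the closure-system structure of block relations (\cite[p.122]{fca-book}) and on the uniqueness of the smallest intent above a given set, so I would make sure both are cited and invoked explicitly.
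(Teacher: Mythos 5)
Your proposal is correct and follows essentially the same route as the paper: the core of both arguments is that each incidence the algorithm adds is forced in every block relation containing $\widetilde{J}$ because the unique smallest intent (resp.\ extent) above the current row (resp.\ column) is selected, combined with the closure-system property of block relations to identify the result as the finest one. Your version is somewhat more careful — you phrase the forcing argument as a direct induction rather than the paper's minimal-counterexample form, and you add explicit termination and block-relation-at-exit arguments that the paper only asserts — but no new idea is involved.
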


\begin{proof}
	$J$ is a block relation, since for every $g\in G$ and every $m\in M$ with $(g,m)\in J$ and $(g,m)\not\in I$ holds that $g^J$ is an intent in $\K$ and $m^J$ is an extent in $\K$.
	Also the tolerance relation $\theta$ corresponding to $J$ has a $\theta$-class containing $\US$ due to the initial inclusion of $(C\times B)$ in $J$~\cite[Thm. 15]{fca-book}.
	
	Since the set of all block relations is a closure system~\cite[p.122]{fca-book} there is a unique finest block relation $\theta$ with the requested properties.
	Assume $\widetilde{J}$ with $\widetilde{J}\subset J$ to be this finest block relation so that $I\subset\widetilde{J}\subset J$.
	Let $(g,m)\in J$ with $(g,m) \not\in \widetilde{J}$ be the first incidence that is added by the algorithm to $J$ while $\widetilde{J}$ does not contain it.
	In each iteration of the strategy, the unique minimal intent containing the current derivation of $g$ is selected as the new derivation of $g$.
	Following $J\subseteq\widetilde{J}$ holds.
	This means $\widetilde{J}=J$.	
\end{proof}

Note that in some cases, the addition of the incidences $(C\times B)$ results in the wanted outcome already. 
In general, this is not the case, e.g., see~\cref{expl:tollerance}.

\longversion{
\begin{example}
	\label{expl:tollernce_1}
	Consider the lattice given in~\cref{fig:congruence}(right) and $\US$, the interval highlighted red.
	The corresponding formal context $\K=(G,M,I)$ (left) is examined to find the finest block relation and, therefore, the finest tolerance relation. 
	$(C,D)=(\{1,2,3,5,6\},d)$ and $(A,B)=(3,\{a,b,c,d,e,g\})$ are the top and the bottom element of $\US$.
	Therefore, $\widetilde{J}=I\cup \{1,2,3,5,6\}\times\{a,b,c,d,e,g\} \cup 3\times d$.
	For the attributes hold $a^{\widetilde{J}}=b^{\widetilde{J}}=c^{\widetilde{J}}=d^{\widetilde{J}}=g^{\widetilde{J}}=d^I$, $e^{\widetilde{J}}=\emptyset^I$ and $f^{\widetilde{J}}=f^I$.
	For the objects $1^{\widetilde{J}}=2^{\widetilde{J}}=3^{\widetilde{J}}=6^{\widetilde{J}}=3^I$, $4^{\widetilde{J}}=4^I$ and $5^{\widetilde{J}}=\emptyset^I$ hold.
	So no further incidences have to be added to generate a block relation and $\widetilde{J}$ (pictured with $\bullet$) corresponds to the finest tolerance relation imploding $\US$.
\end{example}
}

\begin{example}
	\label{expl:tollerance}
	 As for the lattice $\underline{\BB}(\K)$ given in~\cref{fig:running_exp}(left) and the red highlighted interval $\US_{1}$ the corresponding formal context $\K=(G,M,I)$ (right) is examined to find the finest tolerance relation $\theta$ imploding $\US_1$.
	 Since $\US_1=[(\{a\}, \{a,b,c\}),(G,\emptyset)]$ the incidence relation $\widetilde{J}=I\cup G\times \{a,b,c\}$ is generated.
	 After this step, the conditions for block relations have to be checked iteratively.
	 As for the attribute set the condition holds for every attribute since $a^{\widetilde{J}}=b^{\widetilde{J}}=c^{\widetilde{J}}=\emptyset^I$, $d^J=d^I$ and $e^{\widetilde{J}}=e^I$.
	 For the objects $1,2,\dots,6$ also the condition holds.
	 This is not the case for the objects $7$ and $8$.
	 Therefore the incidences $(7,e)$ and $(8,d)$ have to be added to $\widetilde{J}$.
	 After this step, the attributes $e$ and $d$ have to be considered again.
	 Thus the finest block relation $J$ with $\widetilde{J}\subseteq J$ is $J=G\times M$. This block relation corresponds to the trivial tolerance relation.
	 
	 When considering the blue highlighted interval $\US_2=[(\{3\}, \{b,c,d,e\}),(\{3,5,6,8\},\{e\})]$,  we have $\widetilde{J}=I\cup(\{3,5,6,8\}\times\{b,c,d,e\})$.
	 The finest block relation $J$ with $\widetilde{J}\subseteq J$ is depicted in~\cref{fig:running_exp}(right) by the additional $\bullet$-incidences.
\end{example}

\longversion{
Note, that in the case of imploding Boolean sublattices (or intervals generated by a Boolean suborder)
the approach of filling the incidences in $(C\times B)$ results
in the filling of at least the associated Boolean subcontext~\cite{Koyda2021} (and therefore a contranominal scale).
However, an exclusive filling of those incidences does not result in the imploding of a Boolean substructure. 
In particular it is possible that the lattice size increases, as well.

}

\longversion{
\section{Filling Contranominal Scales and Boolean Subcontexts}
\label{sec:contranominal_scales}

In the previous chapter, we investigate lattice factorisation using congruence relations and tolerance relations that include a given interval (and, therefore, possibly a Boolean suborder) in one relation class.
By this, we maintain the $\bigwedge$ and $\bigvee$ operators of the original lattice.
However, the resulting \todo{factor lattice} can collapse not only the chosen interval but also significant parts of the former lattice and, worst case, the whole lattice as seen in~\cref{expl:tollerance}.

Now we want to turn to another approach, to preserve more elements of the original lattice while collapsing a Boolean suborder.
We utilise the strong connection between Boolean substructures in a (concept) lattice and the corresponding formal context. 
For a formal context $\mathbb{K}=(G,M,I)$ and its subcontext $\mathbb{S}=[H,N]$ every structure in $\underline{\mathfrak{B}}(\mathbb{S})$ is also contained in $\underline{\mathfrak{B}}(\mathbb{K})$~\cite[Prop. 32]{fca-book}.
This holds in particular for a subcontext $\mathbb{S}\cong \N_k^c$ and a $k$-dimensional Boolean suborder contained in $\underline{\mathfrak{B}}(\mathbb{K})$.
The other way around a formal context $\K$ contains a contranominal scale $\N_k^c$ as a subcontext if $\BB(\K)$ contains a Boolean suborder dimension $k$~\cite[Prop. 1]{albano2015}.
Therefore as a first step, we consider changes in the contranominal scales as those are directly related to the Boolean substructures of the corresponding lattice.
Since the objects in a contranominal scale are nearly identically on the attributes in the same contranominal scale, the adding of the missing incidences in this scale (meaning uniting the original incidence relation with $H\times N$ for a contranominal scale $[H,N]$) could be used to collapse the Boolean substructure from the lattice.
If the concepts of the contranominal scale are independent of other concepts, this approach does work like pictured in~\cref{bsp:cns}.
However, in general, a Boolean suborder in the concept lattice corresponds not only to a single contranominal scale.
On the one hand, there may be a clarifiable object $g\in G$ contained in the contranominal scale.
In this case, all objects with identical derivation as $g$ have to be considered.
This case can be avoided by considering only clarified formal contexts.
Therefore it is useful only to consider standard contexts for this approach.
On the other hand, a different case independent of the clarification and reduction of the context can occur as follows.
An object not contained in the considered contranominal scale $[H,N]$ may have the same derivation as an object $g\in H$ restricted to the attribute set $N$. 
This can be the case if an atom of the Boolean suborder in the concept lattice is not irreducible, meaning a single irreducible object can not generate it.
Both cases do also hold for attributes.
To avoid this problem, it is necessary not only to consider a single contranominal scale but a complete Boolean subcontext $\US\le\K$ that refers to a previously selected Boolean sublattice $\underline{S}$ in the corresponding concept lattice.
To find such a lattice \todo{selbstzitat }Koyda and Stumme~\cite{Koyda2021} introduced a construction to generate the Boolean subcontext $\psi(\underline{S}):=[H,N]\le \K$ associated to a given Boolean suborder $\underline{S}\le \underline{\BB}(\K)$.
The selected object set and attribute set are given by $H\coloneqq\bigcup_{C\in At(\underline{S})} minG_{obj}(C) $ and $N\coloneqq\bigcup_{C\in CoAt(\underline{S})} minG_{att}(C)$.
The incidence relation of the original formal context $\K$ can be united with $H\times N$.

\begin{example}
	As for the formal context $\K=(G,M,I)$ in~\cref{fig:congruence} and the red highlighted Boolean sublattice in the corresponding concept lattice the associated Boolean subcontext is $[H,N]=[\{1,2,3,5\},\{a,b,c\}]\le\K$. Filling this subcontext results in the context $(G,M,I\cup (H\times N))$. Its corresponding concept lattice is presented on the right.
\end{example}

\begin{example}
	As for the formal context $\K=(G,M,I)$ in~\cref{fig:running_exp} and the red highlighted Boolean sublattice in the corresponding concept lattice the associated Boolean subcontext is $[H,N]=[\{1,2,3,4\},\{a,b,c\}]\le\K$. Filling this subcontext results in the context $(G,M,I\cup (H\times N))$. Its corresponding concept lattice is presented on the right.
\end{example}

However, by only considering a random contranominal scale as well as a single associated Boolean subcontext, additional structures in the concept lattice can arise, as seen in~\cref{bsp:cns_not}.
Here the filling of a two-dimensional contranominal scale constructs a new four-dimensional contranominal scale, and therefore the size of the concept lattices increases. 

To collapse a Boolean substructure in a lattice by adding incidences, those must be chosen beyond the associated Boolean subcontext. 
Therefore, in the following, we again focus on the lattice side and investigate an appropriate relation to transfer this afterwords to the context side.

\begin{figure}[t]
	\begin{minipage}{0.25\textwidth}
		\begin{cxt}%
			\att{a}%
			\att{b}%
			\att{c}%
			\att{d}%
			\att{e}%
			\att{f}%
			\obj{xxB...}{1} %
			\obj{xBx...}{2} %
			\obj{Bxx...}{3} %
			\obj{...x..}{4} %
			\obj{....x.}{5} %
			\obj{.....x}{6} %
		\end{cxt}
	\end{minipage}
	\begin{minipage}{0.35\textwidth}
		{\unitlength 0.6mm
			\begin{diagram}{70}{55}
				
				\Node{1}{6}{10}
				\Node{2}{0}{22.5}
				\Node{3}{12}{22.5}
				\Node{4}{18.5}{17.5}
				\Node{5}{6}{35}
				\Node{6}{12.5}{30}
				\Node{7}{24.5}{30}
				\Node{8}{18.5}{42.5}
				
				\Node{9}{40}{25}
				\Node{10}{55}{25}
				\Node{11}{70}{25}
				\Node{12}{30}{55}
				\Node{13}{30}{0}
				
				\Edge{1}{2}
				\Edge{1}{3}
				\Edge{1}{4}
				\Edge{2}{5}
				\Edge{2}{6}
				\Edge{3}{5}
				\Edge{3}{7}
				\Edge{4}{6}
				\Edge{4}{7}
				\Edge{5}{8}
				\Edge{6}{8}
				\Edge{7}{8}
				
				\Edge{9}{12}
				\Edge{10}{12}
				\Edge{11}{12}
				\Edge{9}{13}
				\Edge{10}{13}
				\Edge{11}{13}
				\Edge{1}{13}
				\Edge{8}{12}
				\leftObjbox{2}{3}{1}{1}
				\leftObjbox{3}{3}{1}{2}
				\rightObjbox{4}{3}{1}{3}
				\leftAttbox{5}{3}{1}{a}
				\rightAttbox{6}{3}{1}{b}
				\rightAttbox{7}{3}{1}{c}
				
				\rightObjbox{9}{3}{1}{4}
				\rightObjbox{10}{3}{1}{5}
				\rightObjbox{11}{3}{1}{6}
				\rightAttbox{9}{3}{1}{d}
				\rightAttbox{10}{3}{1}{e}
				\rightAttbox{11}{3}{1}{f}
		\end{diagram}}
	\end{minipage}
	\begin{minipage}{0.35\textwidth}
		{\unitlength 0.6mm
			\begin{diagram}{70}{55}
				
				\Node{1}{10}{25}
				
				\Node{2}{40}{25}
				\Node{3}{55}{25}
				\Node{4}{70}{25}
				\Node{5}{30}{55}
				\Node{6}{30}{0}
				
				\Edge{2}{5}
				\Edge{3}{5}
				\Edge{4}{5}
				\Edge{2}{6}
				\Edge{3}{6}
				\Edge{4}{6}
				\Edge{1}{6}
				\Edge{1}{5}
				\rightObjbox{1}{3}{1}{1,2,3}
				\rightAttbox{1}{3}{1}{a,b,c}
				
				\rightObjbox{2}{3}{1}{4}
				\rightObjbox{3}{3}{1}{5}
				\rightObjbox{4}{3}{1}{6}
				\rightAttbox{2}{3}{1}{d}
				\rightAttbox{3}{3}{1}{e}
				\rightAttbox{4}{3}{1}{f}
		\end{diagram}}
	\end{minipage}
	\caption{A formal context $\K$ (left) and its corresponding concept lattice $\underline{\BB}(\K)$ (middle). The concept lattice on the right corresponds to the context that arises by adding the incidences represented by $\bullet$ to $\K$ and filling a  $3$-dimensional Boolean Subcontext of $\K$. In this case a $3$-dimensional Boolean suborder in $\underline{\BB}(\K)$ collapses.}
	\label{bsp:cns}
\end{figure}

\begin{figure}[t]
	\begin{minipage}{0.25\textwidth}
		\begin{cxt}%
			\att{a}%
			\att{b}%
			\att{c}%
			\att{d}%
			\obj{Bxx.}{1} %
			\obj{xB.x}{2} %
			\obj{x.xx}{3} %
			\obj{.xxx}{4} %
		\end{cxt}
	\end{minipage}
	\begin{minipage}{0.35\textwidth}
		{\unitlength 0.6mm
			\begin{diagram}{60}{60}
				
				\Node{1}{30}{0}
				\Node{2}{15}{15}
				\Node{3}{45}{15}
				\Node{4}{0}{30}
				\Node{5}{30}{30}
				\Node{6}{60}{30}
				\Node{7}{15}{45}
				\Node{8}{45}{45}
				\Node{9}{30}{60}
				
				\Edge{1}{2}
				\Edge{1}{3}
				\Edge{2}{4}
				\Edge{2}{5}
				\Edge{3}{5}
				\Edge{3}{6}
				\Edge{4}{7}
				\Edge{5}{7}
				\Edge{5}{8}
				\Edge{6}{8}
				\Edge{7}{9}
				\Edge{8}{9}
				\leftObjbox{2}{3}{1}{3}
				\leftObjbox{4}{3}{1}{2}
				\rightObjbox{3}{3}{1}{4}
				\rightObjbox{6}{3}{1}{1}
				\leftAttbox{7}{3}{1}{d}
				\leftAttbox{4}{3}{1}{a}
				\rightAttbox{8}{3}{1}{c}
				\rightAttbox{6}{3}{1}{b}
		\end{diagram}}
	\end{minipage}
	\begin{minipage}{0.35\textwidth}
		{\unitlength 0.6mm
			\begin{diagram}{70}{70}
				\Node{1}{35}{0}
				\Node{2}{10}{20}
				\Node{3}{25}{15}
				\Node{4}{45}{15}
				\Node{5}{60}{20}
				\Node{6}{0}{35}
				\Node{7}{20}{35}
				\Node{8}{35}{30}
				\Node{9}{35}{40}
				\Node{10}{50}{35}
				\Node{11}{70}{35}
				\Node{12}{10}{50}
				\Node{13}{25}{55}
				\Node{14}{45}{55}
				\Node{15}{60}{50}
				\Node{16}{35}{70}
				\Edge{1}{2}
				\Edge{1}{3}
				\Edge{1}{4}
				\Edge{1}{5}
				\Edge{16}{15}
				\Edge{16}{14}
				\Edge{16}{13}
				\Edge{16}{12}
				\Edge{6}{12}
				\Edge{6}{13}
				\Edge{7}{12}
				\Edge{7}{14}
				\Edge{8}{12}
				\Edge{8}{15}
				\Edge{9}{13}
				\Edge{9}{14}
				\Edge{10}{13}
				\Edge{10}{15}
				\Edge{11}{14}
				\Edge{11}{15}
				\Edge{2}{6}
				\Edge{2}{7}
				\Edge{2}{9}
				\Edge{3}{6}
				\Edge{3}{8}
				\Edge{3}{10}
				\Edge{4}{7}
				\Edge{4}{8}
				\Edge{4}{11}
				\Edge{5}{9}
				\Edge{5}{10}
				\Edge{5}{11}
				\leftObjbox{2}{3}{1}{1}
				\rightObjbox{3}{3}{-1}{2}
				\leftObjbox{4}{3}{-1}{3}
				\rightObjbox{5}{3}{1}{4}
				\leftAttbox{12}{3}{1}{a}
				\rightAttbox{13}{3}{-1}{b}
				\leftAttbox{14}{3}{-1}{c}
				\rightAttbox{15}{3}{1}{d}
		\end{diagram}}
	\end{minipage}
	
	\caption{A formal context $\K$ (left) and its corresponding concept lattice $\underline{\BB}(\K)$ (middle). The concept lattice on the right corresponds to the context that arises by adding the incidences represented by $\bullet$ to $\K$ and filling a  $3$-dimensional Boolean Subcontext of $\K$. In this case, the size of the concept lattice expands by filling a contranominal scale.}
	\label{bsp:cns_not}
\end{figure}
}

\section{Interval Factorization of Lattices}
\label{sec:new_relation}

Our goal in this section is to generate, from a set of intervals $\US_1,\dots,\US_k\le\UL$, a factorization $\UL/\theta$ that can be obtained by an
implosion $f$ of the intervals such that $f$ is injective on $\UL\setminus \bigcup_{i=1}^k\US_i$ (i.e. $|f(\UL\setminus\bigcup_{i=1}^k\US_i)|=|\UL\setminus\bigcup_{i=1}^k\US_i|$ holds) and $f(S_i)\not= f(S_j)$ for $i,j\in \{1,\dots,k\}$ with $i\not=j$.
To this end, we present the \emph{interval relation} $\theta$ on $\underline{L}$, which enables us to generate factor sets that implode exactly the chosen intervals.
By restricting the interval relations to \emph{pure intervals} (see~\cref{sec:lattice_generating_interval_relations}) the generation of lattices is guaranteed.

\subsection{Interval Relations}

To overcome the problem of imploding more than the selected interval, we now introduce a new equivalence relation.

\begin{definition}
	\label{def:new_relation}
	Let $\UL$ be 
	an ordered set 
	and $\{\US_1,\US_2,\dots,\US_k\}$ a set of pairwise disjoint intervals of $\UL$.
	We call the equivalence relation 
	\[\theta_{\US_1,\US_2,\dots,\US_k}\coloneqq \bigcup_{i=1}^k S_i\times S_i \cup \{(x,x)\mid x\in \UL\}\]
	an \emph{interval relation on $\UL$}.
	If $k=1$ holds, we call $\theta=\theta_{\US_1}$ a \emph{1-generated} interval relation.
	For an interval relation $\theta\coloneqq \theta_{\US_1,\US_2,\dots,\US_k}$ we denote the factor set with $\UL/\theta\coloneqq\{[x]\theta\mid x\in \UL\}$ and the equivalence classes of the interval relation by $[x]\theta\coloneqq\{y\in \UL\mid x\theta y\}$. 
\end{definition}

Note that $\theta$ truly is an equivalence relation.
The reflexivity is provided by $\{(x,x)\mid x\in \UL\}$, the symmetry is given by using only the $\times$ operator and the transitivity is based on $\US_1,\US_2,\dots,\US_k$ being disjoint intervals. 

Since each equivalence class is an interval, it includes its supremum and its infimum.
We denote them by $x^\theta\coloneqq\bigvee[x]\theta$ and $x_\theta\coloneqq\bigwedge[x]\theta$, respectively.
We also use the notations $[\US]_{\theta}$ and $[\US]^{\theta}$ for the infimum and supremum of the equivalence class $\{x\in\UL\mid x\in \US\}$ that is generated by $\US$.

Note that every congruence relation on a complete lattice $\UL$ is an interval relation on $\UL$ as well.
More precisely, it is a special case of the \emph{lattice interval relations} that are defined in~\cref{def:lattice_interval_relation}.

\begin{lemma}
	\label{lem:beidedef}
	Let $\theta$ be an equivalence relation on lattice $\UL$.
	The following statements are equivalent:
	\begin{itemize}
		\item[a)] $\theta$ is an interval relation.
		\item[b)] The two following conditions hold for all $x_1,x_2,y_1,y_2\in \UL$:
		\begin{itemize}
			\item[i)] $x_1\theta x_2 \Rightarrow (x_1\vee x_2)\theta x_1$ and $(x_1\wedge x_2)\theta x_1$
			\item [ii)] $x_1\theta x_2$ and $y_1\theta y_2$, $(x_1,y_1)\not\in\theta$ and $ x_1>y_1 \Rightarrow x_2\not<y_2$
		\end{itemize}
	\end{itemize}
\end{lemma}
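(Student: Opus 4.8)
The plan is to prove the two implications $a) \Rightarrow b)$ and $b) \Rightarrow a)$ separately. For $a) \Rightarrow b)$, suppose $\theta = \theta_{\US_1,\dots,\US_k}$ is an interval relation, so every nontrivial equivalence class is one of the intervals $\US_i = [u_i, v_i]$. To verify $i)$: if $x_1 \theta x_2$ with $x_1 \neq x_2$, then both lie in some common $\US_i = [u_i,v_i]$, hence $u_i \le x_1, x_2 \le v_i$, so $u_i \le x_1 \vee x_2 \le v_i$ and $u_i \le x_1 \wedge x_2 \le v_i$; thus $x_1 \vee x_2$ and $x_1 \wedge x_2$ also lie in $\US_i$ and are $\theta$-related to $x_1$. (If $x_1 = x_2$ the conclusion is trivial.) To verify $ii)$: suppose $x_1 \theta x_2$, $y_1 \theta y_2$, $(x_1,y_1) \notin \theta$, and $x_1 > y_1$; I must show $x_2 \not< y_2$. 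The pairs $x_1,x_2$ and $y_1,y_2$ lie in classes $\US_i$ and $\US_j$ respectively (allowing $x_1=x_2$ or $y_1=y_2$, in which case the "class" is a singleton), and $(x_1,y_1) \notin \theta$ forces $i \neq j$ (or at least one of the classes to be a singleton distinct from the other). If we had $x_2 < y_2$, then combined with $x_1 > y_1$ we get $x_1 > y_1$ and $y_2 > x_2$; since $\US_i, \US_j$ are intervals, $x_1, x_2 \in [u_i, v_i]$ and $y_1, y_2 \in [u_j, v_j]$, and then $x_1 > y_1 \ge u_j$ and $y_2 > x_2 \ge u_i$ together with $x_1 \le v_i$, $y_2 \le v_j$ would produce an element common to both intervals (concretely $x_1 \vee y_2$ or $x_2 \vee y_1$, or a direct comparison argument), contradicting disjointness of $\US_i$ and $\US_j$. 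This is the step to carry out carefully.

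For $b) \Rightarrow a)$, assume $\theta$ is an equivalence relation satisfying $i)$ and $ii)$; I must show each equivalence class is an interval and that distinct nontrivial classes are disjoint intervals (disjointness of distinct classes is automatic for an equivalence relation; the content is that each class is an interval). Fix a class $C = [x]\theta$. First, $C$ is closed under the existing binary $\vee$ and $\wedge$ by $i)$ (for any $y,z \in C$ we have $y \theta z$, hence $y \vee z \,\theta\, y$ and $y \wedge z \,\theta\, y$, so $y \vee z, y \wedge z \in C$); since $\UL$ is finite this makes $C$ a sublattice with a top element $v := \bigvee C$ and bottom element $u := \bigwedge C$, both in $C$. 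It remains to show $C = [u,v]$, i.e. every $w$ with $u \le w \le v$ satisfies $w \in C$. Suppose not: pick such a $w \notin C$, so $(w, u) \notin \theta$ while $u \theta v$ and trivially $w \theta w$. Applying $ii)$ with the pair $(v, u)$ (both in $C$, so $v \theta u$) and the pair $(w,w)$: from $v > u$ — wait, I need $w$ and a class element strictly comparable with $w$. Since $u \le w \le v$ and $w \notin C$, we cannot have $w = u$ or $w = v$, so at least one of $u < w$ or $w < v$ holds; say $u < w$ (the other case is dual). Then apply $ii)$ with $x_1 := w, x_2 := w, y_1 := u, y_2 := v$: we have $x_1 \theta x_2$, $y_1 \theta y_2$, $(x_1, y_1) = (w,u) \notin \theta$, and $x_1 = w > u = y_1$; condition $ii)$ then yields $x_2 \not< y_2$, i.e. $w \not< v$ — contradicting $w < v$. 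If instead $w < v$ held but $u \not< w$ (so $u = w$, excluded) we are in the symmetric case handled by the dual of $i)$/$ii)$ or by applying $ii)$ with the roles interchanged. So $C = [u,v]$ is an interval, and $\theta$ is an interval relation.

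The main obstacle is the direction $b) \Rightarrow a)$, specifically pinning down precisely how conditions $i)$ and $ii)$ force a class to contain the full interval between its infimum and supremum; one must be careful to exhaust the case analysis on whether the candidate intermediate element $w$ is strictly below $v$ or strictly above $u$, and to invoke $ii)$ with the correct choice of the four variables (including using a trivial pair $(w,w)$). A secondary subtlety is ensuring, in $a) \Rightarrow b) ii)$, that the disjointness of the intervals is what rules out the two elements being "crossed" ($x_1 > y_1$ but $x_2 < y_2$); the cleanest route is to show that $x_1 > y_1$ and $x_2 < y_2$ with all four points in the two intervals forces the intervals to overlap. I expect both directions to go through once the interval/disjointness bookkeeping is done carefully, with no need for the lattice operations beyond what $i)$ provides and finiteness (which guarantees the suprema/infima used are attained and lie in the relevant class).
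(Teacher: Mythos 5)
Your proposal is correct and follows essentially the same route as the paper: for $b)\Rightarrow a)$ you use closure of each class under $\vee$ and $\wedge$ plus finiteness to get $u,v$ in the class, and your instantiation of ii) with $x_1=x_2=w$, $y_1=u$, $y_2=v$ is exactly the contradiction the paper invokes; for $a)\Rightarrow b)$ both arguments rest on disjointness of the intervals. The one step you flagged as ``to carry out carefully'' closes precisely with the witness you named: writing $\US_i=[u_i,v_i]\ni x_1,x_2$ and $\US_j=[u_j,v_j]\ni y_1,y_2$, the element $x_2\vee y_1$ satisfies $u_i\le x_2\vee y_1\le v_i$ (because $x_2\le v_i$ and $y_1<x_1\le v_i$) and $u_j\le x_2\vee y_1\le v_j$ (because $y_1\le v_j$ and $x_2<y_2\le v_j$), so it lies in $\US_i\cap\US_j$, contradicting disjointness; this single uniform argument even avoids the comparability case split the paper performs at this point and covers the singleton-class cases you mention.
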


\begin{proof}
	$a) \Rightarrow b)$ Let $\theta=\theta_{\US_1,\US_2,\dots,\US_k}$ be an interval relation on $\UL$.
	If $x_1\theta x_2$ then $x_1, x_2\in \US_i$ for some $i$ and therefore $(x_1\vee x_2),(x_1\wedge x_2) \in \US_i$ hold. 
	Therefore i) holds.
	Let $x_1, x_2\in \US_i$ and $y_1, y_2\in \US_j$ with $\US_i\not =\US_j$ and $x_1>y_1$.
	Assumed $x_2<y_2$.
	If $x_1\le x_2$ then $y_1<x_1\le x_2<y_2$ and thus $\US_i=\US_j$ holds. \Lightning
	
	In the case of $x_1> x_2$ and $y_1\ge y_2$ we have $x_2<y_2\le y_1<x_1$ also $\US_i=\US_j$. \Lightning
	
	If $x_1> x_2$ and $y_1< y_2$ then $y_2$ and $x_1$ are both upper bounds of $y_1$ and $x_2$. Since $\UL$ is a lattice we have $y_1\vee x_2\le y_2$ and $y_1\vee x_2\le y_2$ and therefore $y_1\vee x_2\in \US_i$ and $y_1\vee x_2\in \US_j$. \Lightning
	
	In all cases, this is a contradiction to the assumptions.
	Hence ii) holds.\\
	$b) \Rightarrow a)$: Let $\theta$ be an equivalence relation on $\UL$ as in b).
	For an arbitrary equivalence class $[x]\theta$ the supremum $x^\theta$ and the infimum $x_\theta$ exist in $[x]\theta$ because of i).
	Let $y\in [x_\theta,x^\theta]$ with $y\not\in [x]\theta$. 
	Then $y\in [x_\theta,x^\theta] \Rightarrow y\le x^\theta,y\ge x_\theta$.
	This is a contradiction to ii).
	So the equivalence classes of $\theta$ are intervals.	
\end{proof}

\begin{figure}[t]
	\begin{minipage}{0.49\textwidth}
		\centering
		{\unitlength 0.6mm
			\begin{picture}(60,60)%
			\put(0,0){%
				\begin{diagram}{60}{60}
				
				\Node{1}{0}{30}
				\Node{2}{15}{15}
				\Node{3}{30}{0}
				\Node{4}{15}{45}
				\Node{5}{30}{30}
				\Node{6}{45}{15}
				\Node{7}{30}{60}
				\Node{8}{45}{45}
				\Node{9}{60}{30}
				
				\Edge{1}{2}
				\Edge{2}{3}
				\Edge{1}{4}
				\Edge{2}{5}
				\Edge{3}{6}
				\Edge{4}{5}
				\Edge{5}{6}
				\Edge{4}{7}
				\Edge{5}{8}
				\Edge{6}{9}
				\Edge{7}{8}
				\Edge{8}{9}

				\leftObjbox{1}{3}{-1}{$x$}
				\rightObjbox{8}{3}{-1}{$y$}
				
				\end{diagram}}
			
			\put(15,45){\ColorNode{red}}
			\put(30,30){\ColorNode{red}}			
			
			\end{picture}}			
	\end{minipage}
	\begin{minipage}{0.49\textwidth}
		\centering
		{\unitlength 0.6mm
			\begin{picture}(60,60)%
			\put(0,0){%
				\begin{diagram}{60}{60}
				
				\Node{1}{15}{15}
				\Node{2}{22.5}{7.5}
				\Node{3}{30}{0}
				\Node{4}{30}{30}
				\Node{5}{45}{15}
				\Node{6}{30}{60}
				\Node{7}{45}{45}
				\Node{8}{60}{30}
				
				\Edge{1}{2}
				\Edge{2}{3}
				\Edge{1}{4}
				\Edge{3}{5}
				\Edge{5}{4}
				\Edge{4}{7}
				\Edge{5}{8}
				\Edge{6}{7}
				\Edge{8}{7}				
				
				\leftObjbox{1}{3}{-1}{$[x]\theta_{\US_1,\US_2}$}
				\rightObjbox{7}{3}{-1}{$[y]\theta_{\US_1,\US_2}$}				
				
				\end{diagram}}
			
			\put(30,30){\ColorNode{red}}			
			
			\end{picture}}			
	\end{minipage}
	\caption{Example of a lattice $\UL$ (left) with a red highlighted interval $\US\le\UL$ and two elements $x,y$ that are not comparable. In the factor set $\UL/\theta_{\US}$ (right) the equivalence class corresponding to $\US$ is highlighted in red. Now the elements $[x]\theta_{\US}$ and $[y]\theta_{\US}$ are comparable.
	}
	\label{fig:kleine_treppe}
\end{figure}
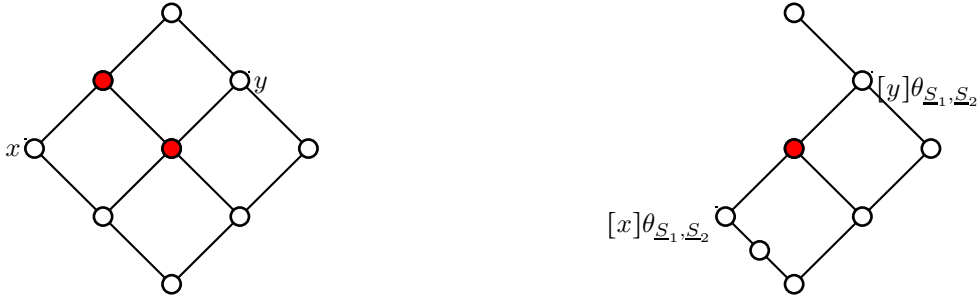

\subsubsection{Defining $\le_{\theta}$ for 1-generated interval relations}
\ \\
Since implosions are defined as order preserving maps, on the factor set $\UL/\theta$ the question remains which order the factorization should impose.
To answer this 
we now define the relation $\le_{\theta}$ on the factor set $\UL/\theta$. 
We start with the case of a 1-generated interval relation $\theta=\theta_{\US}$ as presented in~\cref{def:faktorization_order_relation}.
Our construction is motivated by the aim to preserve all comparabilities of $\UL$ in $\UL/\theta$.
So, for an element $x\in \UL$ that is smaller than at least one element of $\US$, it should also $[x]\theta \le_{\theta}[\US]\theta$ hold.
Dually, for an element $y\in \UL$ that is larger than any element of $\US$ should also $[\US]\theta\le_{\theta}[y]\theta$ hold.
Therefore also the elements $[x]\theta$ and $[y]\theta$ should become comparable in $\UL/\theta$.
This is illustrated in~\cref{fig:kleine_treppe}.

To define the order $\le_{\theta}$ with the required properties, we first have to define some areas of 
an ordered set 
generated by an interval.

\begin{definition}
	\label{def:interval}
	Let $\UL$ be 
	an ordered set
	and $\US$ an interval of $\UL$.
	We define the sets 
	\begin{align*}
	S^{\uparrowtail }&\coloneqq \{x\in \UL\setminus\US\mid \exists y\in \US : y<x\},\\
	S^{\downarrowtail}&\coloneqq \{x\in \UL\setminus\US\mid \exists y\in \US : x<y\} 
	\text{ and } \\
	S^{\parallel}&\coloneqq \{x\in \UL\setminus\US \mid \nexists y\in \US : y<x \text{ or } x<y\}.
	\end{align*}
\end{definition}

\begin{proposition}
	Let $\UL$ be 
	an ordered set
	and let $\US$ be an interval of $\UL$.
	Then $S$, $S^{\uparrowtail}$, $S^{\downarrowtail}$ and $S^{\parallel}$ are pairwise disjoint and together cover $\UL$.
	In other words,	$\{S,S^{\uparrowtail},S^{\downarrowtail},S^{\parallel}\}$ is a partition of the elements in $\UL$, with possibly empty classes.
\end{proposition}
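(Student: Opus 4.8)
The plan is to verify two things directly from the definitions: (i) the four sets are pairwise disjoint, and (ii) every element of $\UL$ lies in at least one of them. Both follow by unwinding \cref{def:interval}, so the proof is essentially a case analysis on how a given element relates to $\US$.

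First I would observe that $S$, $S^{\uparrowtail}$, $S^{\downarrowtail}$ and $S^{\parallel}$ are defined so that the latter three are subsets of $\UL\setminus\US$, hence each is disjoint from $S$ immediately. For the remaining pairwise disjointness, I would take an arbitrary $x\in\UL\setminus\US$ and consider the three mutually exclusive possibilities generated by the defining predicates: either there exists $y\in\US$ with $y<x$, or there exists $y\in\US$ with $x<y$, or neither holds. The third case is exactly the defining condition of $S^{\parallel}$, so $S^{\parallel}$ is disjoint from both $S^{\uparrowtail}$ and $S^{\downarrowtail}$ by construction. The only genuine point to check is that $S^{\uparrowtail}\cap S^{\downarrowtail}=\emptyset$: if $x$ belonged to both, there would be $y_1,y_2\in\US$ with $y_1<x<y_2$; but $\US=[u,v]$ is an interval, so $y_1\ge u$ and $y_2\le v$, giving $u\le y_1<x<y_2\le v$, whence $x\in[u,v]=\US$, contradicting $x\in\UL\setminus\US$. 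That contradiction is the crux of the argument.

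For the covering claim, I would take any $x\in\UL$. If $x\in\US$ we are done. Otherwise $x\in\UL\setminus\US$, and the trichotomy above—''$\exists y\in\US\colon y<x$'', ''$\exists y\in\US\colon x<y$'', or ''neither''—places $x$ in $S^{\uparrowtail}$, $S^{\downarrowtail}$, or $S^{\parallel}$ respectively. (These three predicates are not a priori mutually exclusive as stated, but the previous paragraph shows the first two cannot hold simultaneously, and $S^{\parallel}$ is defined precisely as the negation of their disjunction, so every $x$ lands in exactly one class.) Combining (i) and (ii) gives that $\{S,S^{\uparrowtail},S^{\downarrowtail},S^{\parallel}\}$ is a partition of $\UL$ into possibly empty blocks, as claimed.

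The main obstacle is essentially the $S^{\uparrowtail}\cap S^{\downarrowtail}=\emptyset$ step, and it is mild: one must remember that an interval $[u,v]$ is \emph{convex}, i.e. closed under betweenness, which is what forces an element sandwiched between two interval elements to itself lie in the interval. Everything else is bookkeeping with the definitions, so I would keep the write-up short and not belabor the trivial disjointness-from-$S$ observations.
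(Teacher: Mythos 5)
Your proof is correct and follows essentially the same route as the paper's: the only substantive step in both is that $S^{\uparrowtail}\cap S^{\downarrowtail}=\emptyset$ because an element sandwiched between two members of $\US=[u,v]$ would itself lie in $[u,v]$, with everything else being definitional bookkeeping. The paper merely phrases this more compactly by identifying $S^{\uparrowtail}=[u)\setminus S$ and $S^{\downarrowtail}=(v]\setminus S$, but the underlying argument is identical to yours.
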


\begin{proof}
	Let $\US=[u,v]$.
	Then $S^{\uparrowtail}=[u)\setminus S$ and $S^{\downarrowtail}=(v]\setminus S$.
	Since $[u,v]$ is an interval $S^{\uparrowtail}\cap S^{\downarrowtail}=\emptyset$.
	Also $S^{\parallel}=L\setminus(S\cup S^{\uparrowtail} \cup S^{\downarrowtail})$ holds.
\end{proof}

An example for this division of the 
order 
elements can be seen in~\cref{fig:congruence_not_pure}.

\begin{figure}[t]
	\centering
	{\unitlength 0.6mm
		\begin{picture}(30,45)%
		\put(0,0){%
			\begin{diagram}{30}{45}
			
			\Node{1}{15}{0}
			\Node{2}{0}{15}
			\Node{3}{15}{15}
			\Node{4}{30}{15}
			\Node{5}{0}{30}
			\Node{6}{15}{30}
			\Node{7}{30}{30}
			\Node{8}{15}{45}

			\Edge{1}{2}
			\Edge{1}{3}
			\Edge{1}{4}
			\Edge{2}{5}
			\Edge{2}{6}
			\Edge{3}{5}
			\Edge{3}{7}
			\Edge{4}{6}
			\Edge{4}{7}
			\Edge{5}{8}
			\Edge{6}{8}
			\Edge{7}{8}	
			
			\end{diagram}}
		
		\put(0,15){\ColorNode{red}}
		\put(0,30){\ColorNode{red}}
		
		\put(15,0){\ColorNode{green}}
		\put(15,15){\ColorNode{green}}
		
		\put(15,30){\ColorNode{blue}}
		\put(15,45){\ColorNode{blue}}
		
		\put(30,15){\ColorNode{yellow}}
		\put(30,30){\ColorNode{yellow}}
		
		\end{picture}}			
	
	\caption{For the interval $\US$ (red), the sets $S^{\uparrowtail}$ (blue), $S^{\downarrowtail}$ (green) and $S^{\parallel}$ (yellow) are highlighted.
		In this example, all four sets are intervals. They are not pure but build up a lattice-generating interval relation.
	}
	\label{fig:congruence_not_pure}
\end{figure}
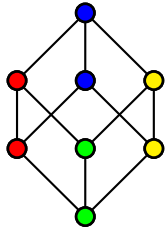

\begin{definition}
	\label{def:faktorization_order_relation}
	Let $\UL$ be 
	an ordered set
	and $\US\le\UL$ an interval.
	We define $[x]\theta\le_{\theta}[y]\theta:\Leftrightarrow (x_\theta\le y^\theta$ 
	or $x\in S^{\downarrowtail}, y\in S^{\uparrowtail})$.
\end{definition}

\begin{example}
	Utilizing congruence or tolerance relations to implode the red highlighted interval in~\cref{fig:running_exp}, the trivial factor lattice, consisting of a single element, arose. With the new construction, we preserve everything outside of the interval:
	In~\cref{fig:running_exp_teil2} the original lattice $\UBB(\K)$ and the factor set $\UBB(\K)/\theta_{\US}$ for the interval relation $\theta_{\US}$ are depicted.
	In $\UBB(\K)/\theta_{\US}$ the interval $\US$ implodes, but the rest of the lattice is preserved. On the other hand, not all joins and meets are preserved.
\end{example}

\begin{figure}[t]
	\centering
	\begin{minipage}{0.33\textwidth}
		\centering
		{\unitlength 0.6mm
			\begin{picture}(60,80)%
			\put(0,0){%
				
				\begin{diagram}{60}{80}
				\Node{1}{40}{0}
				\Node{2}{15}{15}
				\Node{3}{65}{15}
				\Node{4}{40}{30}
				\Node{5}{28}{55}
				\Node{6}{52}{55}
				\Node{7}{40}{80}
				\Node{8}{3}{40}
				\Node{9}{77}{40}
				\Node{10}{15}{65}
				\Node{11}{65}{65}		
				\Node{12}{27}{40}
				\Node{13}{53}{40}
				\Node{14}{56}{68}
				\Node{15}{60}{73}
				
				\Edge{1}{2}
				\Edge{1}{3}
				\Edge{3}{4}
				\Edge{2}{4}
				\Edge{4}{5}
				\Edge{4}{6}
				\Edge{7}{5}
				\Edge{7}{6}
				\Edge{8}{5}
				\Edge{8}{2}
				\Edge{9}{6}
				\Edge{9}{3}
				\Edge{10}{7}
				\Edge{8}{10}
				\Edge{9}{11}
				\Edge{12}{2}
				\Edge{12}{6}
				\Edge{12}{10}
				\Edge{13}{3}
				\Edge{13}{5}
				\Edge{13}{11}
				\Edge{11}{14}
				\Edge{11}{15}
				\Edge{14}{7}
				\Edge{15}{7}

				\NoDots\leftObjbox{4}{3}{-1}{a}
				\NoDots\rightObjbox{3}{3}{1}{c}
				\NoDots\leftObjbox{13}{3}{1}{b}
				\NoDots\leftObjbox{5}{3}{-2}{d}
				
				%
				%
				%
				%
				\end{diagram}}

			\put(15,65){\ColorNode{red}}		
			\put(15,15){\ColorNode{red}}
			\put(28,55){\ColorNode{red}}
			\put(52,55){\ColorNode{red}}
			\put(40,80){\ColorNode{red}}
			\put(27,40){\ColorNode{red}}
			\put(40,30){\ColorNode{red}}
			\put(3,40){\ColorNode{red}}	
			
			\end{picture}}			
	\end{minipage}
	\begin{minipage}{0.45\textwidth}
		\centering
		
		{\unitlength 0.6mm
			
			\begin{picture}(80,80)%
			\put(0,0){%
				
				\begin{diagram}{80}{80}
				\Node{1}{40}{0}
				\Node{2}{65}{15}
				\Node{3}{40}{80}
				\Node{4}{77}{40}
				\Node{5}{65}{65}		
				\Node{6}{53}{40}
				\Node{7}{56}{68}
				\Node{8}{60}{73}

				\Edge{1}{2}
				\Edge{4}{2}
				\Edge{4}{5}
				\Edge{6}{2}
				\Edge{6}{5}
				\Edge{5}{7}
				\Edge{5}{8}
				\Edge{7}{3}
				\Edge{8}{3}

				\NoDots\leftObjbox{3}{3}{-1}{[a]$\theta$=[d]$\theta$}
				\NoDots\rightObjbox{2}{3}{1}{[c]$\theta$}
				\NoDots\leftObjbox{6}{3}{1}{[b]$\theta$}
				
				
				\end{diagram}}
			
			\put(40,80){\ColorNode{red}}	
			
			\end{picture}}			
	\end{minipage}
	%
	%
	%
	%
	%
	%
	%
	%
	%
	\caption{
		A (concept) lattice $\underline{\BB}(\K)$ with a pure interval $\US$ (highlighted red) on the left.
		The lattice $\UBB(\K)/\theta_{\US}$ is pictured on the right.
		Some elements of the lattices are labeled.
	}
	\label{fig:running_exp_teil2}
\end{figure}
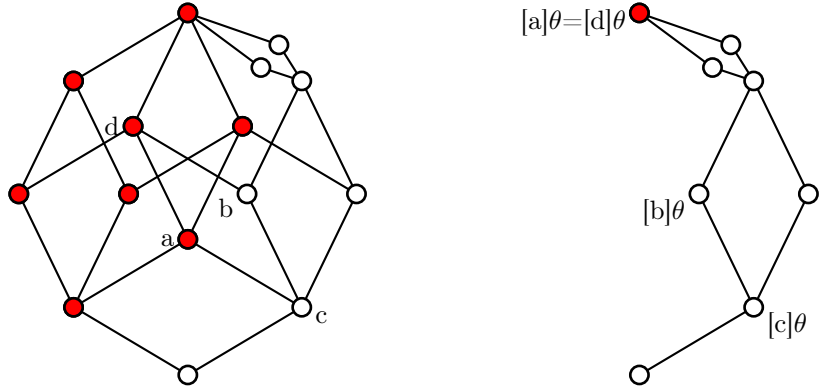

Following, we take a look into the order properties of the factor set with the previously defined relation. For a lattice $\UL$, in general $\UL/\theta$ is neither a lattice (see \cref{fig:kein_verband} below) nor even an ordered set (see \cref{fig:keine_ordnung}).
However, considering only one interval with size larger one, $\UL/\theta$ is always an ordered set: 

\begin{figure}[t]
\begin{minipage}{0.47\textwidth}
	\centering	
	\begin{minipage}{0.49\textwidth}
		\centering
		{\unitlength 0.6mm
			\begin{picture}(30,45)%
			\put(0,0){%
				\begin{diagram}{30}{45}
				
				\Node{1}{15}{0}
				\Node{2}{0}{15}
				\Node{3}{15}{15}
				\Node{4}{30}{15}
				\Node{5}{0}{30}
				\Node{6}{15}{30}
				\Node{7}{30}{30}
				\Node{8}{15}{45}

				\Edge{1}{2}
				\Edge{1}{3}
				\Edge{1}{4}
				\Edge{2}{5}
				\Edge{2}{6}
				\Edge{3}{5}
				\Edge{3}{7}
				\Edge{4}{6}
				\Edge{4}{7}
				\Edge{5}{8}
				\Edge{6}{8}
				\Edge{7}{8}	
				
				\end{diagram}}
			
			\put(0,15){\ColorNode{red}}
			\put(0,30){\ColorNode{red}}
			
			\end{picture}}			
	\end{minipage}
	\begin{minipage}{0.49\textwidth}
		\centering
		{\unitlength 0.6mm	
			\begin{picture}(30,45)%
			\put(0,0){%
				
				\begin{diagram}{30}{45}
				\Node{1}{15}{0}
				\Node{2}{15}{15}
				\Node{3}{30}{15}
				\Node{4}{15}{30}
				\Node{5}{30}{30}
				\Node{6}{15}{45}
				\Node{7}{15}{22.5}
				
				\Edge{1}{2}
				\Edge{1}{3}
				\Edge{2}{5}
				\Edge{3}{5}
				\Edge{4}{3}
				\Edge{4}{6}
				\Edge{5}{6}		
				\Edge{2}{7}
				\Edge{4}{7}		
				
				\end{diagram}}
			
			\put(15,22.5){\ColorNode{red}}
				
			\end{picture}}			
	\end{minipage}
	
	\caption{A lattice $\UL$ with a red highlighted interval $\US\le\UL$ (left). The factor set $\UL/\theta_{\US}$ (right) is no lattice but still an ordered set. The equivalence class $[\US]\theta$ is highlighted in red.
	}
	\label{fig:kein_verband}
\end{minipage}
\hspace{.04\linewidth}
\begin{minipage}{0.47\textwidth}
	\begin{minipage}{0.49\textwidth}
		\centering
		{\unitlength 0.6mm
			\begin{picture}(30,45)%
			\put(0,0){%
				\begin{diagram}{30}{45}
				
				\Node{1}{15}{0}
				\Node{2}{0}{15}
				\Node{3}{15}{15}
				\Node{4}{30}{15}
				\Node{5}{0}{30}
				\Node{6}{15}{30}
				\Node{7}{30}{30}
				\Node{8}{15}{45}

				\Edge{1}{2}
				\Edge{1}{3}
				\Edge{1}{4}
				\Edge{2}{5}
				\Edge{2}{6}
				\Edge{3}{5}
				\Edge{3}{7}
				\Edge{4}{6}
				\Edge{4}{7}
				\Edge{5}{8}
				\Edge{6}{8}
				\Edge{7}{8}	
				
				\end{diagram}}
			
			\put(0,15){\ColorNode{red}}
			\put(0,30){\ColorNode{red}}
			
			\put(30,15){\ColorNode{green}}
			\put(15,30){\ColorNode{green}}
			
			\put(30,30){\ColorNode{blue}}
			\put(15,15){\ColorNode{blue}}
			
			\end{picture}}			
	\end{minipage}
	\begin{minipage}{0.49\textwidth}
		\centering
		{\unitlength 0.6mm	
			\begin{picture}(30,45)%
			\put(0,0){%
				
				\begin{diagram}{30}{45}
				\Node{1}{15}{0}
				\Node{2}{0}{22.5}
				\Node{3}{15}{22.5}
				\Node{4}{30}{22.5}				
				\Node{5}{15}{45}

				\Edge{1}{2}
				\Edge{1}{3}
				\Edge{1}{4}
				\Edge{2}{3}
				\Edge{3}{4}
				\Edge{2}{4}
				\Edge{2}{5}		
				\Edge{3}{5}
				\Edge{4}{5}		
				
				\end{diagram}}
			
			\thicklines
			\put(15,0){\vector(-15,22.5){13}}
			\put(15,0){\vector(0,1){20}}
			\put(15,0){\vector(15,22.5){13}}
			
			\put(0,22.5){\vector(15,22.5){13}}
			\put(15,22.5){\vector(0,1){20}}
			\put(30,22.5){\vector(-15,22.5){13}}
			
			\put(0,22.5){\vector(1,0){14}}
			\put(15,22.5){\vector(-1,0){14}}
			\put(15,22.5){\vector(1,0){14}}
			\put(30,22.5){\vector(-1,0){14}}
			
			\qbezier(0,22.5)(15,30)(30,22.5)
			\put(4,24.2){\vector(-2,-1){1}}
			\put(26,24.2){\vector(2,-1){1}}
			
			\put(0,22.5){\ColorNode{red}}
			\put(15,22.5){\ColorNode{green}}
			\put(30,22.5){\ColorNode{blue}}
			\put(15,0){\ColorNode{white}}
			
			\end{picture}}			
	\end{minipage}
	
	\caption{``Penrose crown'': a lattice $\UL$  with three pairwise comparable intervals $\US_1$(red), $\US_2$(green), $\US_3$(blue)$\le\UL$ (left). The factor set $\UL/\theta_{\US_1,\US_2,\US_3}$ (right) is not even an ordered set. 
	It holds $[\US_{1}]\theta\le[\US_{2}]\theta\le[\US_{3}]\theta\le[\US_{1}]\theta$.
	Thus $\le_{\theta}$ is not anti-symmetric and, therefore no order relation.
	}
	\label{fig:keine_ordnung}
\end{minipage}
\end{figure}

\begin{lemma}
	\label{lem:ordnung}
	Let $\theta=\theta_{\US}$ be an interval relation on the ordered set $\UL$.
	Then $\le_{\theta}$	is an order on $\UL/\theta$.
\end{lemma}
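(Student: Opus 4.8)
The plan is first to record what a $1$-generated interval relation looks like and then to verify the three order axioms directly. Write $\US=[u,v]$. Then $\theta=\theta_{\US}$ has a single class of size possibly larger than one, namely $\US$ itself, whose infimum is $u$ and whose supremum is $v$; every other class is a singleton. Hence $x_\theta=x^\theta=x$ whenever $x\notin\US$, while $x_\theta=u$ and $x^\theta=v$ whenever $x\in\US$. In particular the value of $x_\theta$, of $y^\theta$, and of the predicates ``$x\in S^{\downarrowtail}$'' and ``$y\in S^{\uparrowtail}$'' depends only on the classes $[x]\theta$ and $[y]\theta$ (recall $S^{\downarrowtail},S^{\uparrowtail}\subseteq\UL\setminus\US$), so $\le_\theta$ is well defined. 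I will also use, from the proof of the preceding proposition, that $S^{\downarrowtail}=(v]\setminus\US$ and $S^{\uparrowtail}=[u)\setminus\US$; thus $(v]$ is the disjoint union of $\US$ and $S^{\downarrowtail}$, and $[u)$ the disjoint union of $\US$ and $S^{\uparrowtail}$, and moreover $S^{\uparrowtail}\cap S^{\downarrowtail}=\emptyset$. Reflexivity is then immediate: $x\in[x]\theta$ gives $x_\theta\le x\le x^\theta$, so the first disjunct of $[x]\theta\le_\theta[x]\theta$ holds.

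For antisymmetry, assume $[x]\theta\le_\theta[y]\theta$ and $[y]\theta\le_\theta[x]\theta$. The two ``crossing'' clauses ($x\in S^{\downarrowtail},y\in S^{\uparrowtail}$ on one side, $y\in S^{\downarrowtail},x\in S^{\uparrowtail}$ on the other) cannot both be the ones that hold, since that would put $x$ into $S^{\downarrowtail}\cap S^{\uparrowtail}=\emptyset$. Moreover, if exactly one of the two relations is realized only by its crossing clause, a short computation forces a contradiction: for instance, if $[y]\theta\le_\theta[x]\theta$ holds only via $y\in S^{\downarrowtail},x\in S^{\uparrowtail}$, then $x\in S^{\uparrowtail}$ gives $u\le x$ and $x\notin\US$, while $x_\theta\le y^\theta$ (the only remaining option for the other relation) together with $y\in S^{\downarrowtail}\subseteq(v]$ yields $x=x_\theta\le y^\theta=y\le v$, hence $x\in[u,v]=\US$, a contradiction; the symmetric situation is handled identically. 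So both inequalities $x_\theta\le y^\theta$ and $y_\theta\le x^\theta$ hold. A case split on whether $x$ and $y$ lie in $\US$ finishes it: if both do, $[x]\theta=[y]\theta=\US$; if neither does, the inequalities read $x\le y$ and $y\le x$, so $x=y$; and if exactly one, say $x$, lies in $\US$, then $u=x_\theta\le y^\theta=y$ and $y=y_\theta\le x^\theta=v$ put $y\in\US$, contradicting $y\notin\US$.

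Transitivity is the main work. Assuming $[x]\theta\le_\theta[y]\theta$ and $[y]\theta\le_\theta[z]\theta$, I would split into four cases according to which disjunct realizes each of the two relations. The combination ``$x\in S^{\downarrowtail},y\in S^{\uparrowtail}$'' with ``$y\in S^{\downarrowtail},z\in S^{\uparrowtail}$'' is vacuous because $S^{\uparrowtail}\cap S^{\downarrowtail}=\emptyset$. In each of the remaining three cases one unwinds the hypotheses to a bound on the left of the shape ``$x\in\US$, or else $x\notin\US$ with $x_\theta=x\le v$ (so $x\in S^{\downarrowtail}$)'' and, dually, a bound on the right of the shape ``$z\in\US$, or else $z\notin\US$ with $u\le z=z^\theta$ (so $z\in S^{\uparrowtail}$)''; a four-way check ($x$ in or out of $\US$ against $z$ in or out of $\US$) then shows that in every combination either $x_\theta\le z^\theta$ holds, or else $x\in S^{\downarrowtail}$ and $z\in S^{\uparrowtail}$ both hold, which is precisely $[x]\theta\le_\theta[z]\theta$. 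The one subcase that needs care is where both relations come from the first disjunct and $y\in\US$: there the only surviving data are $x_\theta\le y^\theta=v$ and $u=y_\theta\le z^\theta$, and it is exactly the disjoint decompositions $(v]=\US\cup S^{\downarrowtail}$ and $[u)=\US\cup S^{\uparrowtail}$ that let one place $x$ and $z$. I expect this subcase, together with the bookkeeping of where the endpoints $x$ and $z$ sit relative to $\US$, to be the only genuine obstacle; everything else is routine, and the whole argument hinges on there being a single non-trivial class --- exactly the feature that the ``Penrose crown'' of~\cref{fig:keine_ordnung} lacks.
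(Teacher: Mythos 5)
Your proposal is correct and follows essentially the same route as the paper's proof: a direct verification of reflexivity, transitivity, and antisymmetry by case analysis on which disjunct of the definition of $\le_{\theta}$ is realized, using that $\theta_{\US}$ has a single non-singleton class and that $S^{\downarrowtail}=(v]\setminus S$ and $S^{\uparrowtail}=[u)\setminus S$ are disjoint. If anything, you are more explicit than the paper in ruling out the ``crossing'' cases in the antisymmetry argument, which the paper merely asserts cannot occur.
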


\begin{proof}
	Reflexivity: 
	$\le$ is an order on $\UL$, meaning $x\le x$ for all $x\in \UL$ and especially $x_{\theta}\le x\le x^{\theta}$. Therefore $[x]\theta\le_{\theta} [x]\theta$ holds in $\UL/\theta$.
	
	Transitivity: 
	Let $[x]\theta\le_{\theta}[y]\theta$ and $[y]\theta\le_{\theta}[z]\theta$ in $\UL/\theta$. 
	In the case that $x_{\theta}\le y^{\theta}$ and $y_{\theta}\le z^{\theta}$ in $\UL$ either $y_{\theta}=y^{\theta}$ and therefore $x_{\theta}\le z^{\theta}$. 
	Otherwise, in the case of $y_{\theta}\not=y^{\theta}$ we have $y\in \US$, $x\in S^{\downarrowtail}\cup S$ and $z \in S^{\uparrowtail}\cup S$ and then $[x]\theta\le_{\theta}[z]\theta$.
	In the case $x\in S^{\downarrowtail}, y\in S^{\uparrowtail}$ and $y_\theta\le z^\theta$
	holds $z\in S^{\uparrowtail}$ and therefore $[x]\theta\le_{\theta}[z]\theta$.
	The case $y\in S^{\downarrowtail}, z\in S^{\uparrowtail}$ and $x_\theta\le y^\theta$ is analogous.
	
	Anti-symmetry: 
	Let $[x]\theta\le_{\theta}[y]\theta$ and $[y]\theta\le_{\theta}[x]\theta$.
	If $x_{\theta}\le y^{\theta}$ and $y_{\theta}\le x^{\theta}$, follows $[x]\theta=[y]\theta$ directly since either $x_{\theta}=x^{\theta}$ or $x_{\theta}=x^{\theta}$ or $x_{\theta}=y_{\theta}$ and $x^{\theta}=y^{\theta}$.
	The cases $x\in S^{\downarrowtail}, y\in S^{\uparrowtail}$, $y_{\theta}\le x^{\theta}$ and  $y\in S^{\downarrowtail}, x\in S^{\uparrowtail}$, $x_{\theta}\le y^{\theta}$ can not occur as well as the case of  $x\in S^{\downarrowtail}, y\in S^{\uparrowtail}$,  $y\in S^{\downarrowtail}, x\in S^{\uparrowtail}$.
\end{proof}


\begin{lemma}
	Let $\UL$ be 
	an ordered set
	and $\theta=\theta_{\US}$ an interval relation on $\UL$.
	Then $\le_{\theta}$	is the smallest order on $\UL/\theta$ so that the map $\varphi\colon\UL\rightarrow\UL/\theta,~x\mapsto[x]\theta$ is surjective and order-preserving.
\end{lemma}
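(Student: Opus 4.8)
The plan is to establish two things: that $\le_{\theta}$ itself makes $\varphi$ surjective and order-preserving, and that it is contained (as a set of pairs) in every order on $\UL/\theta$ with this property, which is precisely what ``smallest'' means.

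For the first, surjectivity of $\varphi$ is immediate since $x$ represents $[x]\theta$ for every $x\in\UL$. For order-preservation I would take $x\le y$ in $\UL$ and observe $x_{\theta}\le x\le y\le y^{\theta}$, so the first alternative of \cref{def:faktorization_order_relation} holds and hence $[x]\theta\le_{\theta}[y]\theta$. Together with \cref{lem:ordnung}, which already guarantees that $\le_{\theta}$ is an order, this shows $\le_{\theta}$ belongs to the family of orders considered in the statement.

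For minimality I would fix an arbitrary order $\preceq$ on $\UL/\theta$ for which $\varphi$ is order-preserving, assume $[x]\theta\le_{\theta}[y]\theta$, and show $[x]\theta\preceq[y]\theta$, splitting along the two cases of \cref{def:faktorization_order_relation}. If $x_{\theta}\le y^{\theta}$ in $\UL$, then order-preservation of $\varphi$ gives $\varphi(x_{\theta})\preceq\varphi(y^{\theta})$, and since $x_{\theta}\in[x]\theta$ and $y^{\theta}\in[y]\theta$ this reads $[x]\theta\preceq[y]\theta$. If instead $x\in S^{\downarrowtail}$ and $y\in S^{\uparrowtail}$, I would pick $s,t\in\US$ with $x<s$ and $t<y$ (these exist by the definitions of $S^{\downarrowtail}$ and $S^{\uparrowtail}$); order-preservation of $\varphi$ then yields $[x]\theta\preceq[s]\theta$ and $[t]\theta\preceq[y]\theta$, and since $s,t\in\US$ both middle classes coincide with $[\US]\theta$, so transitivity of $\preceq$ gives $[x]\theta\preceq[\US]\theta\preceq[y]\theta$. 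Thus every pair related by $\le_{\theta}$ is related by $\preceq$, i.e.\ $\le_{\theta}$ is contained in $\preceq$, which is the claim.

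I do not expect a serious obstacle here. The only point requiring care is the second case of the minimality argument: one must route the comparison through the common class $[\US]\theta$, which is legitimate precisely because all elements of $\US$ are $\theta$-equivalent, and one must invoke transitivity of the abstract order $\preceq$ rather than of $\le_{\theta}$ (whose transitivity, established in \cref{lem:ordnung}, is not what is used here).
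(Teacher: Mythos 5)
Your proof is correct and follows essentially the same route as the paper's: surjectivity and order-preservation are immediate, and minimality is shown by routing the case $x\in S^{\downarrowtail}$, $y\in S^{\uparrowtail}$ through the class $[\US]\theta$ and invoking transitivity of the comparison order. The only cosmetic difference is that you argue directly while the paper argues by contraposition; the substance is identical.
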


\begin{proof}
	The surjectivity follows directly from the fact that for every $[x]\theta\in \UL/\theta$ we have $x\in \UL$ as representative.
	The order is preserved due to the definition of $\le_{\theta}$:
	If $x\le y$ in $\UL$ holds, we have $x_{\theta}\le y^{\theta}$ and therefore $[x]\theta\le_{\theta} [y]\theta$ in $\UL/\theta$.
	
	We show that $\le_{\theta}$ is the smallest of those relations by contraposition:
	Let $R$ be an order on $\UL/\theta$ so that $\varphi$ is surjective and order-preserving.
	Assumed there are $x,y\in \UL$ with 
	$[x]\theta\le_{\theta} [y]\theta$ and $[x]\theta \not R [y]\theta$.
	Since $\varphi$ is order-preserving we know that $x\not\le y$ in $\UL$ and moreover no element of the $\theta$-class $[x]\theta$ is less or equal an element of the $\theta$-class $[y]\theta$.
	In particular, $x_{\theta}\not\le y^{\theta}$ holds in $\UL$.
	Thus, $x\in S^{\downarrowtail}$ and $y\in S^{\uparrowtail}$ due to the definition of $\le_{\theta}$.
	Therefore, we have $x_{\theta}\le [\US]^{\theta}$ and $[\US]_{\theta}\le y^{\theta}$ and due to $\varphi$ being order-preserving also 
	$[x]\theta R [\US]\theta$ and $[\US]\theta R [y]\theta$.
	Consequently, $[x]\theta R [y]\theta$ follows from the transitivity of the order $R$. \Lightning	
\end{proof}

\subsubsection{Defining $\le_{\theta}$ in general}
\ \\
Considering more than one interval with size larger one, we define the relation $\le_{\theta}$ on $\UL/\theta$, by generalizing~\cref{def:faktorization_order_relation} 
as follows:

\begin{definition}
	\label{def:faktorization_preorder_relation}
	Let $\UL$ be 
	an ordered set
	and $\theta\coloneqq\theta_{\US_1, \dots, \US_k}$ an interval relation on $\UL$.
	We define the relation
	$[x]\theta\le_{\theta}[y]\theta\colon\Leftrightarrow (x_\theta\le y^\theta$ 
	or $\exists i_1,\dots,i_l \in \{1,\dots,k\}$ with	
	$x_{\theta}\in S_{i_1}^{\downarrowtail}$, 
	$[\US_{i_1}]_{\theta}\in S_{i_2}^{\downarrowtail}$, 
	\dots , 
	$[\US_{i_{l-1}}]_{\theta}\in S_{i_l}^{\downarrowtail}$,
	$y^{\theta}\in S_{i_l}^{\uparrowtail})$.
\end{definition}

The relation is illustrated in~\cref{fig:treppe}.

\begin{figure}[t]
	\begin{minipage}{0.49\textwidth}
		\centering
		{\unitlength 0.6mm
			\begin{picture}(90,90)%
			\put(0,0){%
				\begin{diagram}{90}{90}
				
				\Node{1}{0}{30}
				\Node{2}{15}{15}
				\Node{3}{30}{0}
				\Node{4}{15}{45}
				\Node{5}{30}{30}
				\Node{6}{45}{15}
				\Node{7}{30}{60}
				\Node{8}{45}{45}
				\Node{9}{60}{30}
				\Node{10}{45}{75}
				\Node{11}{60}{60}
				\Node{12}{75}{45}
				\Node{13}{60}{90}
				\Node{14}{75}{75}
				\Node{15}{90}{60}
				
				\Edge{1}{2}
				\Edge{2}{3}
				\Edge{1}{4}
				\Edge{2}{5}
				\Edge{3}{6}
				\Edge{4}{5}
				\Edge{5}{6}
				\Edge{4}{7}
				\Edge{5}{8}
				\Edge{6}{9}
				\Edge{7}{8}
				\Edge{8}{9}
				\Edge{7}{10}	
				\Edge{8}{11}	
				\Edge{9}{12}	
				\Edge{10}{11}	
				\Edge{11}{12}	
				\Edge{10}{13}	
				\Edge{11}{14}	
				\Edge{12}{15}	
				\Edge{13}{14}	
				\Edge{14}{15}

				\leftObjbox{1}{3}{-1}{$x$}
				\rightObjbox{8}{3}{-1}{$y$}
				\rightObjbox{15}{3}{-1}{$z$}

				\end{diagram}}
			
			\put(15,45){\ColorNode{red}}
			\put(30,30){\ColorNode{red}}
			
			\put(75,45){\ColorNode{green}}
			\put(60,60){\ColorNode{green}}

			\end{picture}}			
	\end{minipage}
	\begin{minipage}{0.49\textwidth}
		\centering
		{\unitlength 0.6mm
			\begin{picture}(90,90)%
			\put(0,0){%
				\begin{diagram}{90}{90}
				
				\Node{1}{15}{15}
				\Node{2}{22.5}{7.5}
				\Node{3}{30}{0}
				\Node{4}{30}{30}
				\Node{5}{45}{15}
				\Node{6}{30}{60}
				\Node{7}{45}{45}
				\Node{8}{60}{30}
				\Node{9}{45}{75}
				\Node{10}{60}{60}
				\Node{11}{60}{90}
				\Node{12}{67.5}{82.5}
				\Node{13}{75}{75}
				
				\Edge{1}{2}
				\Edge{2}{3}
				\Edge{1}{4}
				\Edge{3}{5}
				\Edge{5}{4}
				\Edge{4}{7}
				\Edge{5}{8}
				\Edge{6}{7}
				\Edge{8}{7}
				\Edge{6}{9}	
				\Edge{7}{10}	
				\Edge{9}{10}	
				\Edge{9}{11}		
				\Edge{10}{13}	
				\Edge{11}{12}	
				\Edge{12}{13}

				\leftObjbox{1}{3}{-1}{$[x]\theta_{\US_1,\US_2}$}
				\rightObjbox{7}{3}{-1}{$[y]\theta_{\US_1,\US_2}$}
				\rightObjbox{13}{3}{-1}{$[z]\theta_{\US_1,\US_2}$}

				\end{diagram}}
			
			\put(30,30){\ColorNode{red}}
			\put(60,60){\ColorNode{green}}

			\end{picture}}			
	\end{minipage}
	\caption{In the lattice $\UL$ (left) -- with 
		intervals $\US_1,\US_2\le\UL$ 
		highlighted in red and green, respectively --
		the with three elements $x,y,z$ that are not comparable. In the factor set $\UL/\theta_{\US_1,\US_2}$ (right) the equivalence classes corresponding to $\US_1$ and $\US_2$ are highlighted red and green, respectively. 
		Now the elements $[x]\theta_{\US_1,\US_2},[y]\theta_{\US_1,\US_2}$ and $[z]\theta_{\US_1,\US_2}$ are comparable.
	}
	\label{fig:treppe}
\end{figure}
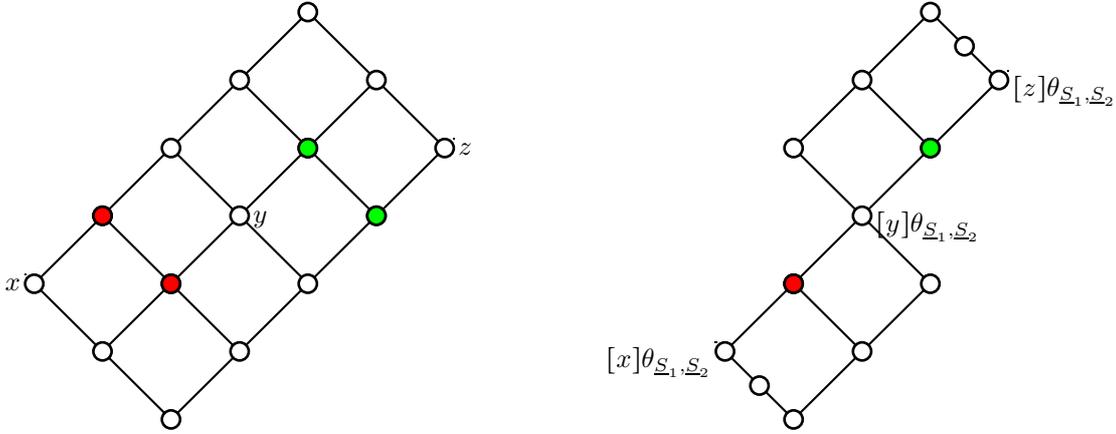

Note that $\le_{\theta}$ for one interval in~\cref{def:faktorization_order_relation} is a special case of $\le_{\theta}$ in~\cref{def:faktorization_preorder_relation}.
Therefore we use the notion $\le_{\theta}$ in the following both for imploding one as well as multiple intervals.

We can show that considering several intervals, $\le_{\theta}$ is always a preorder on $\UL/\theta$, i.e., it is always reflexive and symmetric but ot necessarily anti-symmetric:

\begin{lemma}
	\label{lem:preorder}
	Let $\UL$ be 
	an ordered set 
	and $\theta=\theta_{\US_1, \dots, \US_k}$ an interval relation on $\UL$.
	Then $\le_{\theta}$ is a preorder on $\UL/\theta$.
\end{lemma}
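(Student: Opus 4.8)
The plan is to verify reflexivity and transitivity of $\le_{\theta}$ directly from \cref{def:faktorization_preorder_relation}, mirroring the corresponding parts of the proof of \cref{lem:ordnung} but now accommodating the chain of intermediate intervals. For reflexivity, given any class $[x]\theta$ I would simply note that $x_\theta \le x \le x^\theta$ holds in $\UL$ (each equivalence class is an interval, so its infimum and supremum lie in it), hence $x_\theta \le x^\theta$, which is the first disjunct of the definition; therefore $[x]\theta \le_{\theta} [x]\theta$.

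For transitivity, suppose $[x]\theta \le_{\theta} [y]\theta$ and $[y]\theta \le_{\theta} [z]\theta$. I would split into cases according to which disjunct each relation comes from. If both come from the first disjunct, so $x_\theta \le y^\theta$ and $y_\theta \le z^\theta$: if $y$ lies in no nontrivial class then $y_\theta = y^\theta$ and so $x_\theta \le y^\theta = y_\theta \le z^\theta$, giving the first disjunct; if $y \in \US_i$ for some $i$, then from $x_\theta \le y^\theta = [\US_i]^\theta$ we get $x_\theta \in S_i^{\downarrowtail} \cup S_i$ and from $[\US_i]_\theta = y_\theta \le z^\theta$ we get $z^\theta \in S_i^{\uparrowtail} \cup S_i$, and a short subcase check (handling when $x$ or $z$ themselves lie in $\US_i$, or when $x_\theta \le z^\theta$ outright) yields $[x]\theta \le_{\theta} [z]\theta$ via either disjunct. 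If the first relation is from the first disjunct ($x_\theta \le y^\theta$) and the second from the chain disjunct (with chain $\US_{i_1}, \dots, \US_{i_l}$ starting from $y^\theta \in S_{i_1}^{\downarrowtail}$), then $x_\theta \le y^\theta \in S_{i_1}^{\downarrowtail}$ forces $x_\theta \in S_{i_1}^{\downarrowtail} \cup S_{i_1} \cup S_{i_1}^{\downarrowtail}$-predecessors, so $x_\theta$ can serve as the start of essentially the same chain (again a small subcase for $x_\theta \in \US_{i_1}$), giving $[x]\theta \le_{\theta} [z]\theta$. The symmetric case (chain then first disjunct) is analogous. Finally, if both relations are from the chain disjunct, I would concatenate the two chains: the first ends with $y^\theta \in S_{i_l}^{\uparrowtail}$ and the second starts with $y_\theta \in S_{j_1}^{\downarrowtail}$; since $y \in \US_{i_l}$ (the chain disjunct only triggers when $y^\theta$ is the top of some collapsed interval) we actually have $y^\theta = [\US_{i_l}]^\theta$ and $y_\theta = [\US_{i_l}]_\theta$ with $i_l = j_1$, so the interval $\US_{i_l}$ is the splice point and the concatenated sequence $\US_{i_1}, \dots, \US_{i_l} = \US_{j_1}, \dots, \US_{j_m}$ witnesses $[x]\theta \le_{\theta} [z]\theta$.

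I would then remark that anti-symmetry may genuinely fail — pointing to the ``Penrose crown'' of \cref{fig:keine_ordnung}, where $[\US_1]\theta \le_{\theta} [\US_2]\theta \le_{\theta} [\US_3]\theta \le_{\theta} [\US_1]\theta$ — so that $\le_{\theta}$ is a preorder but not in general a partial order, which is exactly the claimed statement.

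The main obstacle I anticipate is bookkeeping in the transitivity argument: one has to be careful that the chain disjunct only fires when the relevant endpoints are exactly the infimum/supremum of collapsed intervals, and that in the ``mixed'' cases (first disjunct composed with chain disjunct) the inequality $x_\theta \le y^\theta$ really does let $x_\theta$ inherit the membership $y^\theta$ has in $S_{i_1}^{\downarrowtail}$ — this uses that $S_{i_1}^{\downarrowtail}$ together with $\US_{i_1}$ is downward closed relative to $\US_{i_1}$, i.e. anything below $[\US_{i_1}]^\theta$ is in $S_{i_1}^{\downarrowtail} \cup \US_{i_1}$. Making these inheritance steps precise, and correctly handling the small degenerate subcases where $x$, $y$, or $z$ already lie inside one of the imploded intervals, is where the care is needed; everything else is a direct unfolding of the definitions.
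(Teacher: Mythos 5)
Your reflexivity argument and the case where both relations come from the first disjunct match the paper's proof. There is, however, a genuine error in how you handle the chain disjunct, in two places. First, the chain witnessing $[y]\theta\le_{\theta}[z]\theta$ is anchored at the \emph{infimum} $y_\theta$ (the definition requires $y_{\theta}\in S_{j_1}^{\downarrowtail}$), not at $y^\theta$ as you write; so in the mixed case the inequality $x_\theta\le y^\theta$ does not by itself place $x_\theta$ below $\US_{j_1}$ when $y_\theta\ne y^\theta$. Second, and more seriously, your chain--chain splice rests on the claim that $y\in\US_{i_l}$, hence $y^\theta=[\US_{i_l}]^\theta$, $y_\theta=[\US_{i_l}]_\theta$ and $i_l=j_1$. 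This is false: the last condition of the first chain is $y^\theta\in S_{i_l}^{\uparrowtail}$, and by \cref{def:interval} we have $S_{i_l}^{\uparrowtail}\subseteq\UL\setminus\US_{i_l}$, so $y^\theta\notin\US_{i_l}$ and therefore $y\notin\US_{i_l}$; the two chains need not share an endpoint interval (take $[y]\theta$ a singleton lying strictly above $\US_{i_l}$ and strictly below $\US_{j_1}$ with $\US_{i_l}\ne\US_{j_1}$).

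The missing idea --- which is exactly what the paper's proof supplies --- is that the class $[y]\theta$ itself must be inserted as an extra link between the two chains. If $[y]\theta$ is a nontrivial class, say $[y]\theta=\US_p$, then $[\US_{i_l}]_\theta<y^\theta=[\US_p]^\theta$ gives $[\US_{i_l}]_\theta\in S_{p}^{\downarrowtail}$ (disjointness rules out $[\US_{i_l}]_\theta\in\US_p$), and $[\US_p]_\theta=y_\theta\in S_{j_1}^{\downarrowtail}$ is given, so $\US_{i_1},\dots,\US_{i_l},\US_p,\US_{j_1},\dots,\US_{j_m}$ is a witnessing chain. If $[y]\theta$ is a singleton, then $y_\theta=y^\theta$ and $[\US_{i_l}]_\theta<y^\theta=y_\theta<u$ for some $u\in\US_{j_1}$ yields $[\US_{i_l}]_\theta\in S_{j_1}^{\downarrowtail}$ directly (or $\US_{i_l}=\US_{j_1}$ by disjointness), so the chains concatenate without a new link. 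The same insertion repairs the mixed cases. With this correction your case analysis goes through and coincides with the paper's argument; your reflexivity step and your closing remark on the failure of anti-symmetry are fine as they stand.
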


\begin{proof}
	Reflexivity: 
	$\le$ is an order on $\UL$, meaning $x\le x$ for all $x\in \UL$ and especially $x_{\theta}\le x\le x^{\theta}$. Therefore $[x]\theta\le_{\theta} [x]\theta$ holds in $\UL/\theta$.
	
	Transitivity: 
	Let $[x]\theta\le_{\theta}[y]\theta$ and $[y]\theta\le_{\theta}[z]\theta$ in $\UL/\theta$.
	If $[x]\theta=[y]\theta$ or $[y]\theta=[z]\theta$ the statement is similar to the proof of \cref{lem:ordnung}.
	Assume $[x]\theta\not=[y]\theta\not=[z]\theta$.
	If $x_{\theta}\le y^{\theta}$ and $y_{\theta}\le z^{\theta}$ in $\UL$ 
	then either $y^{\theta}=y_{\theta}$ and therefore $x_{\theta}\le z^{\theta}$
	or there is an interval $\US_i$ with $y\in\US_i$.
	In this case $x_{\theta} \in \US_{i}^{\downarrowtail}$ and $z^{\theta}\in \US_{i}^{\uparrowtail}$ hold. 
	In both cases follows $[x]\theta\le_{\theta}[z]\theta$.
	If $x_{\theta}\le y^{\theta}$ and $\exists \US_{i_1}, \dots, \US_{i_l}$ as described with $y_{\theta}\in \US_{i_1}^{\downarrowtail}$ and $z^{\theta}\in \US_{i_l}^{\uparrowtail}$
	then $x_{\theta} \in \US^{\downarrowtail}$ and
	$[\US]_{\theta}\in \US_{i_1}^{\downarrowtail}$ for the interval $\US=([y]\theta,\le)$.
	Then $[x]\theta\le_{\theta}[z]\theta$.
	The case of $y_{\theta}\le z^{\theta}$ and $\exists \US_{i_1}, \dots, \US_{i_l}$ as described with $x_{\theta}\in \US_{i_1}^{\downarrowtail}$ and $y^{\theta}\in \US_{i_l}^{\uparrowtail})$ follows analogously.
	If $\exists \US_{i_1}, \dots, \US_{i_l}$ and  $\US_{j_1}, \dots, \US_{j_l}$ as described with $x_{\theta}\in \US_{i_1}^{\downarrowtail}$, $y^{\theta}\in \US_{i_l}^{\uparrowtail})$, $y_{\theta}\in \US_{j_1}^{\downarrowtail}$ and $z^{\theta}\in \US_{j_l}^{\uparrowtail})$
	there is an interval $\US=([y]\theta,\le)$ so that
	$[\US_{i_l}]_{\theta}\in \US^{\downarrowtail}$ and
	$[\US]_{\theta}\in \US_{j_1}^{\downarrowtail}$.
	Then $[x]\theta\le_{\theta}[z]\theta$.
	
\end{proof}

Note that $\le_{\theta}$ as given in~\cref{def:faktorization_preorder_relation} is the same relation as 
$[x]\theta\le_{\theta}[y]\theta:\Leftrightarrow (x_\theta\le y^\theta$ 
or $\exists i_1,\dots,i_l \in \{1,\dots,k\}$ with 
$x_{\theta}\in \US_{i_1}^{\downarrowtail}$, 
$[\US_{i_2}]_{\theta}\in \US_{i_1}^{\uparrowtail}$, 
$\dots$ , 
$[\US_{i_l}]_{\theta}\in \US_{i_l-1}^{\uparrowtail}$,
$y^{\theta}\in \US_{i_l}^{\uparrowtail})$.

We observe that $\le_{\theta}$ is the transitive closure of a simpler relation as follows:

\begin{lemma}
	Let $\UL$ be 
	an ordered set and $\theta$ an interval relation on $\UL$.
	Let $[x]\theta\le^*[y]\theta:\Leftrightarrow x_\theta\le y^\theta$ be a relation on $\UL/\theta$.
	Then $\le_{\theta}$ is the transitive closure of $\le^*$.
\end{lemma}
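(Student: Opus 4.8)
The plan is to follow the standard route for identifying a relation with a transitive closure: I would show that $\le^*$ is contained in $\le_{\theta}$, that $\le_{\theta}$ is transitive, and that conversely every $\le_{\theta}$-comparable pair is linked by a finite $\le^*$-chain. The first point is immediate, since $[x]\theta\le^*[y]\theta$ means exactly $x_{\theta}\le y^{\theta}$, which is the first disjunct in the definition of $\le_{\theta}$ (\cref{def:faktorization_preorder_relation}); hence $\le^*\,\subseteq\,\le_{\theta}$. The second point is \cref{lem:preorder}, which states that $\le_{\theta}$ is a preorder, so in particular transitive. Since the transitive closure of $\le^*$ is by definition the smallest transitive relation containing $\le^*$, these two facts already give the inclusion ``transitive closure of $\le^*$'' $\subseteq \le_{\theta}$.

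For the reverse inclusion I would take a pair $[x]\theta\le_{\theta}[y]\theta$ and split along the two cases of \cref{def:faktorization_preorder_relation}. If $x_{\theta}\le y^{\theta}$, then $[x]\theta\le^*[y]\theta$ and we are done. Otherwise there are indices $i_1,\dots,i_l$ with $x_{\theta}\in S_{i_1}^{\downarrowtail}$, $[\US_{i_1}]_{\theta}\in S_{i_2}^{\downarrowtail}$, \dots, $[\US_{i_{l-1}}]_{\theta}\in S_{i_l}^{\downarrowtail}$ and $y^{\theta}\in S_{i_l}^{\uparrowtail}$. The key step is to turn each of these membership statements into a $\le^*$-step between consecutive equivalence classes, using that $[\US_i]^{\theta}=\bigvee\US_i$ and $[\US_i]_{\theta}=\bigwedge\US_i$: from $x_{\theta}\in S_{i_1}^{\downarrowtail}$ there is a $w\in\US_{i_1}$ with $x_{\theta}<w\le[\US_{i_1}]^{\theta}$, so $[x]\theta\le^*[\US_{i_1}]\theta$; each $[\US_{i_j}]_{\theta}\in S_{i_{j+1}}^{\downarrowtail}$ yields $[\US_{i_j}]_{\theta}\le[\US_{i_{j+1}}]^{\theta}$, i.e.\ $[\US_{i_j}]\theta\le^*[\US_{i_{j+1}}]\theta$; and $y^{\theta}\in S_{i_l}^{\uparrowtail}$ gives a $w'\in\US_{i_l}$ with $[\US_{i_l}]_{\theta}\le w'<y^{\theta}$, so $[\US_{i_l}]\theta\le^*[y]\theta$. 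Composing these steps produces the explicit chain $[x]\theta\le^*[\US_{i_1}]\theta\le^*\cdots\le^*[\US_{i_l}]\theta\le^*[y]\theta$, which certifies that $([x]\theta,[y]\theta)$ lies in the transitive closure of $\le^*$. Putting both inclusions together gives equality.

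I do not anticipate a genuine obstacle here, as the statement is largely a reformulation of \cref{def:faktorization_preorder_relation}. The only place requiring care is the bookkeeping in the last paragraph, namely the systematic translation between ``lying in $S_i^{\downarrowtail}$'' (resp.\ $S_i^{\uparrowtail}$) and a $\le^*$-step into (resp.\ out of) the class $[\US_i]\theta$, and keeping the chain of indices aligned. It is also worth noting in passing that $\le^*$ is reflexive, since $x_{\theta}\le x^{\theta}$ always holds, so that its transitive closure is already reflexive and there is no need to distinguish the transitive from the reflexive--transitive closure.
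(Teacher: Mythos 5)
Your proof is correct and follows essentially the same route as the paper's: the core step in both is translating membership in $S_i^{\downarrowtail}$ (resp.\ $S_i^{\uparrowtail}$) into the inequalities $x_\theta\le[\US_i]^{\theta}$ (resp.\ $[\US_i]_{\theta}\le y^{\theta}$), so that the chain of indices in \cref{def:faktorization_preorder_relation} becomes exactly a chain of $\le^*$-steps. You merely make explicit the reverse inclusion via $\le^*\subseteq\le_{\theta}$ together with the transitivity guaranteed by \cref{lem:preorder}, which the paper's very terse proof leaves implicit.
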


\begin{proof}
	Let $x,y \in \UL$ and $\US_i$ an interval of $\theta$.
	Then $x_{\theta}\in S_{i}^{\downarrowtail}$ is equivalent to $x_{\theta}\le[S_i]^{\theta}$.
	Further, $y^{\theta}\in S_{i}^{\uparrowtail}$ is equivalent to $y^{\theta}\le[S_i]_{\theta}$.
	Since $[S_i]_{\theta}\le [S_i]^{\theta}$, the statement follows directly.
\end{proof}

\subsection{Order-preserving Interval Relations}
Up to now, we have only shown (in~\cref{lem:preorder}) that $\le_{\theta}$ is a preorder and~\cref{fig:keine_ordnung} showed that it will not always be anti-symmetric.
We now investigate the order properties of $\le_{\theta}$ in more detail:

\begin{definition}
	\label{def:ordered_new_relation}
	Let $\UL$ be 
	an ordered set
	and $\theta$ an interval relation on $\UL$.
	We call $\theta$ \emph{order-preserving} on $\UL$ if $(\UL/\theta,\le_{\theta})$ is an ordered set.
\end{definition}

Considering a 1-generated interval relation, from~\cref{lem:preorder} follows directly:

\begin{corollary}
	Let $\UL$ be 
	an ordered set
	and $\theta$ an interval relation on $\UL$.
	If $\theta$ is 1-generated, $\theta$ is order-preserving.
\end{corollary}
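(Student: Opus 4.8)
The plan is to combine \cref{lem:preorder} with the anti-symmetry part of the proof of \cref{lem:ordnung}; in fact the statement is essentially \cref{lem:ordnung} re-expressed in the terminology of \cref{def:ordered_new_relation}, so the quickest route is simply to cite it. Arguing from \cref{lem:preorder} instead: that lemma already furnishes reflexivity and transitivity of $\le_{\theta}$ for an arbitrary interval relation, so when $\theta=\theta_{\US}$ is $1$-generated only anti-symmetry remains to be established.

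First I would record that for a $1$-generated $\theta$ the relation of \cref{def:faktorization_preorder_relation} coincides with that of \cref{def:faktorization_order_relation}: with $k=1$ every index occurring in the chain condition equals $1$, and a chain of length $l\ge 2$ would require $[\US]_{\theta}\in S^{\downarrowtail}$; but $[\US]_{\theta}=\bigwedge\US\in\US$ whereas $S^{\downarrowtail}\subseteq\UL\setminus\US$, so no such chain exists and the second disjunct collapses to ``$x\in S^{\downarrowtail}$ and $y\in S^{\uparrowtail}$''. This is exactly the feature that fails in the ``Penrose crown'' of \cref{fig:keine_ordnung}, where chains of length three make $\le_{\theta}$ cyclic.

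Then I would verify anti-symmetry by the case distinction already used for \cref{lem:ordnung}. Assume $[x]\theta\le_{\theta}[y]\theta$ and $[y]\theta\le_{\theta}[x]\theta$. If both hold via the first disjunct, i.e. $x_{\theta}\le y^{\theta}$ and $y_{\theta}\le x^{\theta}$, then $[x]\theta=[y]\theta$, using that any element outside $\US$ has a singleton class (so $x_{\theta}=x=x^{\theta}$) and that the two inequalities then pin the other element down (forcing it into $\US$ too, or forcing $x=y$). The remaining configurations are ruled out: the second disjunct cannot hold on both sides because $S^{\uparrowtail}$ and $S^{\downarrowtail}$ are disjoint, and it cannot be paired with the first disjunct because, if $x\in S^{\downarrowtail}$, $y\in S^{\uparrowtail}$ and $y=y_{\theta}\le x^{\theta}=x$, then choosing $s,t\in\US$ with $t<y$ and $x<s$ gives $t<y\le x<s$, and convexity of the interval $\US$ forces $x\in\US$, contradicting $x\in S^{\downarrowtail}$. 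Hence $\le_{\theta}$ is anti-symmetric and $(\UL/\theta,\le_{\theta})$ is an ordered set.

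I do not expect a genuine obstacle: the proof is a short finite case distinction over the two disjuncts of \cref{def:faktorization_order_relation}, and the only step needing any care is the reduction above, namely observing that in the $1$-generated case the ``staircase'' of \cref{def:faktorization_preorder_relation} can never exceed a single step.
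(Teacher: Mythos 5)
Your proposal is correct and follows essentially the same route as the paper, which simply derives the corollary from the earlier lemma showing that $\le_{\theta}$ is an order for a 1-generated interval relation (reflexivity and transitivity from \cref{lem:preorder}, anti-symmetry by the same two-disjunct case distinction). Your explicit check that the general staircase condition of \cref{def:faktorization_preorder_relation} collapses to the single-step condition of \cref{def:faktorization_order_relation} when $k=1$ (because $[\US]_{\theta}\in\US$ can never lie in $S^{\downarrowtail}$) is a welcome detail that the paper only asserts in passing.
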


For more than one interval, we can provide a necessary and sufficient condition for an interval relation $\theta$ to be order-preserving in~\cref{lem:order_bedingung}.

\begin{definition}
	Let $\UL$ be an ordered set and $2\le k$.
	A set $\{\US_1, \dots, \US_k\}$  of intervals in $\UL$ are called \emph{Penrose crown of order $k$} in $\UL$ if they are pairwise disjoint 
	and if
	$[\US_{1}]_{\theta}\in \US_{2}^{\downarrowtail}$,
	$[\US_{2}]_{\theta}\in \US_{3}^{\downarrowtail}$,  
	$\dots$, 
	$[\US_{k-1}]_{\theta}\in \US_{k}^{\downarrowtail}$,
	$[\US_{k}]_{\theta}\in \US_{1}^{\downarrowtail}$.
\end{definition}

We call such a constellation of intervals a Penrose crown, named after the ``impossible staircase'' created by L.\ Penrose and R.\ Penrose  in 1958 (and previously by O.\ Reutersvärd in 1937)~\cite{penrose1958impossible}.
The construction became popular by M.C.\ Escher's lithograph ``Ascending ans Descending''.
The intervals in the lattice in~\cref{fig:keine_ordnung} form a Penrose crown of order 3. Another example is illustrated in~\cref{fig:crown_general}.

\begin{figure}[t]
	\centering
	{\unitlength 0.6mm
		\begin{picture}(100,20)%
		\put(0,0){%
			\begin{diagram}{100}{20}
			
			\Node{1}{0}{0}
			\Node{2}{20}{0}
			\Node{3}{40}{0}
			\Node{4}{0}{20}
			\Node{5}{20}{20}
			\Node{6}{40}{20}
			\Node{7}{60}{0}
			\Node{8}{60}{20}
			\Node{9}{80}{0}
			\Node{10}{80}{20}
			\Node{11}{100}{0}
			\Node{12}{100}{20}

			\Edge{1}{5}
			\Edge{2}{6}
			\Edge{3}{8}
			\Edge{7}{10}
			\Edge{1}{4}
			\Edge{2}{5}
			\Edge{3}{6}
			\Edge{7}{8}
			\Edge{9}{10}
			\Edge{9}{12}
			\Edge{11}{12}
			\Edge{11}{4}
			
			\leftObjbox{1}{0}{1}{$[S_1]_{\theta}$}
			\leftObjbox{2}{0}{1}{$[S_2]_{\theta}$}
			\leftObjbox{3}{0}{1}{$[S_3]_{\theta}$}
			\leftObjbox{7}{0}{1}{$[S_4]_{\theta}$}
			\leftObjbox{9}{0}{1}{$[S_5]_{\theta}$}
			\leftObjbox{11}{0}{1}{$[S_6]_{\theta}$}
			\NoDots\leftObjbox{4}{0}{-6}{$[S_1]^{\theta}$}
			\NoDots\leftObjbox{5}{0}{-6}{$[S_2]^{\theta}$}
			\NoDots\leftObjbox{6}{0}{-6}{$[S_3]^{\theta}$}
			\NoDots\leftObjbox{8}{0}{-6}{$[S_4]^{\theta}$}
			\NoDots\leftObjbox{10}{0}{-6}{$[S_5]^{\theta}$}
			\NoDots\leftObjbox{12}{0}{-6}{$[S_6]^{\theta}$}
			
			\end{diagram}}
		
		\put(0,0){\ColorNode{red}}
		\put(0,20){\ColorNode{red}}
		
		\put(20,0){\ColorNode{green}}
		\put(20,20){\ColorNode{green}}
		
		\put(40,0){\ColorNode{blue}}
		\put(40,20){\ColorNode{blue}}
		
		\put(60,0){\ColorNode{yellow}}
		\put(60,20){\ColorNode{yellow}}
		
		\put(80,0){\ColorNode{orange}}
		\put(80,20){\ColorNode{orange}}
		
		\put(100,0){\ColorNode{purple}}
		\put(100,20){\ColorNode{purple}}
		
		\end{picture}}	
	
	\caption{Penrose crown of order 6.}		
	\label{fig:crown_general}
\end{figure}
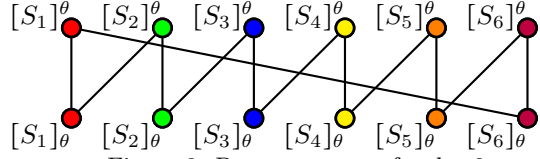

\begin{theorem}
	\label{lem:order_bedingung}
	Let $\UL$ be a finite lattice and $\theta=\theta_{\US_1, \dots, \US_k}$ an interval relation on $\UL$.
	$\theta$ is order-preserving if and only if there exists no Penrose crown $\{\US_{i_1},\dots,\US_{i_l}\}$ in $\UL$
	with
	$i_1,\dots,i_l\in \{1,\dots,k\}$.
\end{theorem}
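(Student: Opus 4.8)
The plan is to prove both directions by unwinding the definition of $\le_{\theta}$ (\cref{def:faktorization_preorder_relation}) together with \cref{lem:preorder}, which already gives that $\le_{\theta}$ is a preorder; the only thing at stake is anti-symmetry. So the theorem reduces to: $\le_{\theta}$ fails anti-symmetry if and only if $\UL$ contains a Penrose crown among the $\US_i$.

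For the direction ``Penrose crown $\Rightarrow$ not order-preserving'', I would take a Penrose crown $\{\US_{i_1},\dots,\US_{i_l}\}$ and show directly that $[\US_{i_1}]\theta\le_{\theta}[\US_{i_2}]\theta\le_{\theta}\dots\le_{\theta}[\US_{i_l}]\theta\le_{\theta}[\US_{i_1}]\theta$, while these classes are pairwise distinct (the intervals are disjoint). Each step $[\US_{i_j}]\theta\le_{\theta}[\US_{i_{j+1}}]\theta$ follows because the crown condition $[\US_{i_j}]_{\theta}\in\US_{i_{j+1}}^{\downarrowtail}$ is exactly a one-term instance of the second clause of \cref{def:faktorization_preorder_relation} (with $x_\theta=[\US_{i_j}]_\theta$ landing in $\US_{i_{j+1}}^{\downarrowtail}$ and $y^\theta=[\US_{i_{j+1}}]^\theta\in\US_{i_{j+1}}^{\uparrowtail}$, using $[\US_{i_{j+1}}]_\theta\le[\US_{i_{j+1}}]^\theta$). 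Composing these with transitivity (\cref{lem:preorder}) closes the cycle, and since not all classes coincide, anti-symmetry fails; hence $\theta$ is not order-preserving. This direction is essentially bookkeeping.

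For the converse ``not order-preserving $\Rightarrow$ Penrose crown'', suppose anti-symmetry fails: there are distinct classes with $[x]\theta\le_{\theta}[y]\theta$ and $[y]\theta\le_{\theta}[x]\theta$. Expanding both inequalities via \cref{def:faktorization_preorder_relation} gives a closed chain of alternating ``$x_\theta\le y^\theta$''-steps and ``$S^{\downarrowtail}$/$S^{\uparrowtail}$''-steps that begins and ends at the same class. I would then argue that any ``$x_\theta\le y^\theta$'' step between classes of \emph{non-degenerate} intervals, when it is part of a cycle of distinct classes, cannot occur: if $[\US_i]_\theta\le[\US_j]^\theta$ and also (going around) $[\US_j]_\theta\le\dots\le[\US_i]^\theta$, then one gets $[\US_i]_\theta$ and $[\US_j]^\theta$ comparable in a way that, combined with disjointness and the fact that equivalence classes are intervals, forces the comparison to be ``one class sits strictly below the other'' — i.e.\ really a $\downarrowtail$ relation $[\US_i]_\theta\in\US_j^{\downarrowtail}$ — or else two of the classes merge, contradicting distinctness. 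After this cleanup every step of the cycle is of the form $[\US_{i_j}]_\theta\in\US_{i_{j+1}}^{\downarrowtail}$, which is precisely the defining condition of a Penrose crown on the subfamily $\{\US_{i_1},\dots,\US_{i_l}\}$. Here finiteness of $\UL$ is used to make sure such chains are finite and that extremal elements (the $x_\theta, x^\theta$) exist.

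The main obstacle is the converse direction, and specifically the normalization step: showing that a non-anti-symmetric witness cycle can always be rewritten as a genuine Penrose crown, i.e.\ that the ``harmless'' comparabilities $x_\theta\le y^\theta$ appearing inside a cycle either collapse classes or can be absorbed into $\downarrowtail$-links. This requires carefully using that $\UL$ is a lattice (so joins of relevant elements exist, as in the proof of \cref{lem:beidedef}) to rule out the mixed configurations, and tracking which intervals are singletons versus non-degenerate — singleton classes behave like ordinary elements and should be contracted out of the cycle before extracting the crown. I expect the bulk of the write-up to be a case analysis on the position of consecutive classes in the cycle (both below, both above, straddling), mirroring the contradiction cases already handled in the proof of \cref{lem:beidedef}.
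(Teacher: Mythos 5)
Your strategy matches the paper's proof in both directions: the forward direction chains the crown conditions into a $\le_{\theta}$-cycle of pairwise distinct classes, and the converse normalizes a failed-anti-symmetry witness into a closed chain of $\downarrowtail$-links, which is exactly a Penrose crown. Two minor remarks: in the forward direction each step $[\US_{i_j}]\theta\le_{\theta}[\US_{i_{j+1}}]\theta$ is really an instance of the \emph{first} clause of \cref{def:faktorization_preorder_relation}, since $[\US_{i_j}]_{\theta}\in\US_{i_{j+1}}^{\downarrowtail}$ already gives $[\US_{i_j}]_{\theta}\le[\US_{i_{j+1}}]^{\theta}$ (note $[\US_{i_{j+1}}]^{\theta}$ lies in $\US_{i_{j+1}}$ itself, not in $\US_{i_{j+1}}^{\uparrowtail}$); and the normalization you flag as the main obstacle is easier than you fear, because for distinct classes the disjointness of the intervals alone yields $[\US_i]_{\theta}\le[\US_j]^{\theta}\Rightarrow[\US_i]_{\theta}\in\US_j^{\downarrowtail}$, with no appeal to lattice joins.
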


\begin{proof}
	"$\Rightarrow$":
	Assumed some intervals $\US_{1},\dots,\US_{l}\le \UL$ exist as described.
	Then $[\US_1]\theta\le_{\theta} [\US_l]\theta$ and $[\US_l]\theta\le_{\theta} [\US_1]\theta$ by definition of $\le_{\theta}$.
	Since $[\US_{1}]\not=[\US_{l}]$ the preorder $\le_{\theta}$ is not anti-symmetric and therefore not an order.

	"$\Leftarrow$":
	Assume the relation $\le_{\theta}$ is not an order.
	Then there are two equivalence classes $[S_i]\theta$ and $[S_j]\theta$ with
	$[S_i]\theta\le_{\theta} [S_j]\theta$, $[S_j]\theta\le_{\theta} [S_i]\theta$ and $[S_i]\theta\not=[S_j]\theta$ in $\UL/\theta$.
	For two intervals $\US,\underline{T}$ with $[S]\theta\not=[T]\theta$ it holds that $[S]_{\theta}\le [T]^{\theta}\Rightarrow [S]_{\theta} \in \underline{T}^{\downarrowtail}$.
	Due to the definition of $\le_{\theta}$, one of the following cases has to occur.
	In the case of $[\US_i]_{\theta}\le [\US_j]^{\theta}$ 
	and $[\US_j]_{\theta}\le [\US_i]^{\theta}$ 
	$\{\US_{i},\US_j\}$ is a Penrose crown of order 2.
	%
	In the case of $[S_i]_{\theta}\le [S_j]^{\theta}$ and
	$[S_j]_{\theta}\not\le [S_i]^{\theta}$ we have
	$[S_j]_{\theta}\in \US_{1}^{\downarrowtail}$,
	$[\US_{1}]_{\theta}\in \US_{2}^{\downarrowtail}$,
	$\dots$, 
	$[\US_{l}]_{\theta}\in [S_i]\theta^{\downarrowtail}$
	because of $[S_j]\theta\le_{\theta} [S_i]\theta$.
	The case of 
	$[S_j]_{\theta}\le [S_i]^{\theta}$ and
	$[S_i]_{\theta}\not\le [S_j]^{\theta}$
	follow analogously.
	In the case of 
	$[S_j]_{\theta}\not\le [S_i]^{\theta}$ and
	$[S_i]_{\theta}\not\le [S_j]^{\theta}$ we have
	$[S_j]_{\theta}\in \US_{1}^{\downarrowtail}$, 
	$[\US_{1}]_{\theta}\in \US_{2}^{\downarrowtail}$,
	$\dots$,
	$[\US_{l}]_{\theta}\in [S_i]\theta^{\downarrowtail}$
	and
	$[S_i]_{\theta}\in \US_{m}^{\downarrowtail}$, 
	$[\US_{m}]_{\theta}\in \US_{m+1}^{\downarrowtail}$,
	$\dots$,
	$[\US_{s}]_{\theta}\in [S_j]\theta^{\downarrowtail}$.
\end{proof}

Since a Penrose crown consists of at least two intervals, an interval relation is always order preserving if at most one of its intervals consists of more than one element:

\begin{lemma}
	\label{lem:anzahl_intervalle}
	Let $\UL$ be an ordered set and $\theta=\theta_{\US_1, \dots, \US_k}$ an interval relation on $\UL$.
	If $|[\US_i]\theta|\ge 2$ for at most one $i\in \{1,\dots,k\}$, $\theta$ is an order-preserving interval relation.
\end{lemma}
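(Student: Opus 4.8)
The plan is to deduce this from \cref{lem:order_bedingung}: that theorem says $\theta$ is order-preserving exactly when $\{\US_1,\dots,\US_k\}$ contains no Penrose crown, so it is enough to prove that every Penrose crown contains \emph{at least two} intervals of size $\ge 2$ — then the hypothesis "$|[\US_i]\theta|\ge 2$ for at most one $i$" immediately forbids a Penrose crown. Note that this sharper statement, rather than the mere "a Penrose crown has at least two intervals" mentioned just before the lemma, is what the argument really needs, since an order-$2$ crown made of two singletons would in any case be impossible. For the ordered-set generality of the present lemma one can either observe that the proof of \cref{lem:order_bedingung} does not in fact use more of $\UL$ than that an interval has a least and a greatest element, or run the argument below directly on a cyclic witness of the failure of anti-symmetry, recalling from \cref{lem:preorder} that $\le_{\theta}$ is already a preorder.

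To prove the sharper statement I would take a Penrose crown $\{\US_{i_1},\dots,\US_{i_l}\}$, so $l\ge 2$ and the intervals are pairwise disjoint, and first rewrite its defining conditions. Directly from \cref{def:interval}, if $\US$ and $\underline{T}$ are two different $\theta$-classes then $[\US]_{\theta}\in\underline{T}^{\downarrowtail}$ is equivalent to $[\US]_{\theta}\le[\underline{T}]^{\theta}$, because the least element $[\US]_{\theta}$ of $\US$ cannot belong to $\underline{T}$ and $[\underline{T}]^{\theta}$ is the greatest element of $\underline{T}$. Applying this around the crown gives $[\US_{i_1}]_{\theta}\le[\US_{i_2}]^{\theta}$, $[\US_{i_2}]_{\theta}\le[\US_{i_3}]^{\theta}$, \dots, $[\US_{i_l}]_{\theta}\le[\US_{i_1}]^{\theta}$. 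If every $\US_{i_j}$ were a singleton $\{s_j\}$, this chain would read $s_1\le s_2\le\dots\le s_l\le s_1$, forcing $s_1=\dots=s_l$ and hence $\US_{i_1}=\US_{i_2}$, contradicting disjointness. If exactly one of them were large, say $\US_{i_1}=[a,b]$ with $a<b$ and the rest singletons $\{s_j\}$, the chain would read $a\le s_2\le\dots\le s_l\le b$ (which uses $l\ge 2$, so that $s_2$ exists), giving $s_2\in[a,b]=\US_{i_1}$, again contradicting disjointness. Hence at least two of the $\US_{i_j}$ have size $\ge 2$, as desired.

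I expect the only point that needs a little care is the rewriting step, which silently uses that the two classes involved are distinct and therefore disjoint, so that the least element of one never lies in the other; granting that, the rest is just transitivity of $\le$ in $\UL$. If one insists on proving the fully general version via a cyclic witness rather than citing \cref{lem:order_bedingung}, the one additional subtlety is that concatenating the two chains witnessing $C\le_{\theta}C'$ and $C'\le_{\theta}C$ can produce a cyclic $\le^{*}$-chain that revisits a class, so one has to argue about the runs of singleton classes between consecutive occurrences of the unique large class and check that such a run is always non-empty unless the whole cycle is constant.
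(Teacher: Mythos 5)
Your argument is correct in substance, but it takes a genuinely different route from the paper's. The paper disposes of the lemma in two lines: singleton classes contribute nothing to $\theta$ beyond the diagonal, so either $\UL/\theta=\UL$ (no class of size $\ge 2$) or $\theta$ coincides with the $1$-generated relation for the unique large interval, and \cref{lem:ordnung} applies directly. You instead route through the Penrose-crown characterization of \cref{lem:order_bedingung} and prove the sharper combinatorial fact that any Penrose crown must contain at least \emph{two} intervals of size $\ge 2$; your rewriting of $[\US]_{\theta}\in\underline{T}^{\downarrowtail}$ as $[\US]_{\theta}\le[\underline{T}]^{\theta}$ (using disjointness) and the resulting chain $a\le s_2\le\dots\le s_l\le b$ are correct, and you rightly observe that the paper's motivating remark (``a Penrose crown consists of at least two intervals'') is not by itself sufficient. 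The trade-off is clear: your route produces a reusable structural fact about Penrose crowns and explains \emph{why} the hypothesis excludes the obstruction identified by \cref{lem:order_bedingung}, but it inherits that theorem's hypothesis that $\UL$ is a finite lattice, whereas the present lemma is stated for arbitrary ordered sets. You flag this honestly and offer two repairs; the first (inspecting that the theorem's proof never uses joins or meets) is plausible but unverified, and the second (arguing directly on a cyclic witness of the failure of anti-symmetry) is only sketched, with the bookkeeping about chains revisiting classes left open. The paper's reduction avoids all of this and works verbatim for ordered sets, so if you want a self-contained proof at the stated level of generality, the $1$-generated reduction is the cleaner path; your Penrose-crown lemma is still worth recording separately.
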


\begin{proof}
	If $\theta$ includes no interval of size 2 or larger, $\UL/\theta=\UL$.
	If  $\theta=\theta_{\US_1}$ with $|\US_1|\ge 2$ the statement follows from~\cref{lem:ordnung}.
\end{proof}

In the case of $\UL$ being a lattice, such a constellation can not occur if at most two intervals of the interval relation $\theta$ include more than a single element of $\UL$:

\begin{lemma}
Let $\UL$ be a finite lattice and $\theta=\theta_{\US_1, \dots, \US_k}$ an interval relation on $\UL$.
If $|[\US_i]\theta|\ge 2$ for at most two $i\in \{1,\dots,k\}$, $\theta$ is an order-preserving interval relation.
\end{lemma}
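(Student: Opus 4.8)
The plan is to pivot immediately to \cref{lem:order_bedingung}: since $\theta$ is order-preserving exactly when $\UL$ contains no Penrose crown built from the intervals $\US_1,\dots,\US_k$, it suffices to rule out such a crown under the hypothesis. So I would assume for contradiction that $\{\US_{i_1},\dots,\US_{i_l}\}$ is a Penrose crown; relabelling the intervals I may write it as $\US_1,\dots,\US_l$ with $[\US_1]_\theta\in\US_2^{\downarrowtail}, \dots, [\US_{l-1}]_\theta\in\US_l^{\downarrowtail}, [\US_l]_\theta\in\US_1^{\downarrowtail}$, where $l\ge 2$. By hypothesis at most two of all the $\US_i$ have more than one element, hence so do at most two of $\US_1,\dots,\US_l$; call that number $m\le 2$, and split into the cases $m\le 1$ and $m=2$.

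The case $m\le 1$ I would settle by borrowing the earlier lemmas. The interval relation $\theta'\coloneqq\theta_{\US_1,\dots,\US_l}$ on $\UL$ has at most one interval of size $\ge 2$, so \cref{lem:anzahl_intervalle} tells us $\theta'$ is order-preserving; but then \cref{lem:order_bedingung} applied to $\theta'$ says there is no Penrose crown among $\US_1,\dots,\US_l$, contradicting our choice of crown. (One can also see this directly: if every $\US_j$ is a singleton the crown relations reduce to a strict cycle $a_1<a_2<\dots<a_l<a_1$; and if only one interval, say $\US_1=[p,q]$, has more than one element, chaining the relations through the singleton classes yields $p<a_l$ and $a_l<y\le q$ for some $y\in\US_1$, so $a_l\in[p,q]=\US_1$, contradicting $a_l\in\US_1^{\downarrowtail}$.)

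The substantive case is $m=2$, and it is the only place the lattice structure enters. After a cyclic relabelling I may assume the two intervals of size $\ge 2$ are $\US_1=[p,q]$ and $\US_t=[u,v]$ with $1<t\le l$, all other $\US_j$ singletons; these two intervals cut the crown into the arcs $\US_1,\dots,\US_t$ and $\US_t,\dots,\US_l,\US_1$. Walking the crown relations along the first arc and absorbing the intervening singletons into a chain of strict inequalities, I get $p=[\US_1]_\theta<y$ for some $y\in\US_t$, hence $p<v$; symmetrically the second arc gives $u=[\US_t]_\theta<z$ for some $z\in\US_1$, hence $u<q$. Now I use that $\UL$ is a lattice: $p\vee u$ exists, and $p\le q, u\le q$ give $p\vee u\le q$ while $p\le v, u\le v$ give $p\vee u\le v$; together with $p\le p\vee u$ and $u\le p\vee u$ this places $p\vee u$ in $[p,q]\cap[u,v]=\US_1\cap\US_t$, contradicting that the intervals of a Penrose crown are pairwise disjoint. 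The main obstacle is really just the bookkeeping in the arc-walking step, in particular the degenerate sub-cases $t=2$ and $t=l$ where one arc contains no singleton and the chain collapses to a single crown relation, but these present no genuine difficulty.
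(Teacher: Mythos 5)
Your proof is correct and takes essentially the same route as the paper: the decisive step in both is that if the two non-singleton intervals $\US_1=[p,q]$ and $\US_t=[u,v]$ were linked into a Penrose crown, then $p\vee u$ would lie in $[p,q]\cap[u,v]$, contradicting disjointness of the intervals. You are merely more explicit than the paper in chaining the crown relations through intervening singleton classes and in dispatching the case of at most one non-singleton interval via \cref{lem:anzahl_intervalle}, whereas the paper silently reduces to $\theta=\theta_{\US_1,\US_2}$ and treats only the order-two crown.
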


\begin{proof}
	If $\theta$ includes no or one interval of size 2 or larger, the proof is the same as in~\cref{lem:anzahl_intervalle}.
	Let $\theta=\theta_{\US_1,\US_2}$ with $|\US_{1}|,|\US_{2}|\ge 2$.
	Assume that $[\US_1]_\theta \in  S_2^{\downarrowtail}$ and $[\US_2]_\theta \in  S_1^{\downarrowtail}$.
	Then $[\US_1]_\theta \vee [\US_2]_\theta \in \US_{1}$ and $[\US_1]_\theta \vee [\US_2]_\theta \in \US_{2}$.
	Hence, the intervals $\US_{1}$ $\US_{2}$ are not disjoint.
	This is a contradiction to $\theta$ being an interval relation.
\end{proof}

Moreover, the case mentioned in~\cref{lem:order_bedingung} can only occur if $\UL$ contains a Penrose crown of order $l\ge 2$.
If $\UL$ is a lattice, it has to contain a Penrose crown of order $l\ge 3$ and therefore a crown of the same order as suborder. 

%
%

\begin{corollary}
	Let $\UL$ be an ordered set and $\theta$ an interval relation on $\UL$.
	If $\UL$ does not contain a Penrose crown of order $l\ge 2$, then $\theta$ is order-preserving.
\end{corollary}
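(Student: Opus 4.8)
The plan is to obtain this as an immediate consequence of \cref{lem:order_bedingung}, using the observation---recorded just before \cref{lem:anzahl_intervalle}---that every Penrose crown is built from at least two intervals. I would argue by contraposition. Suppose $\theta=\theta_{\US_1,\dots,\US_k}$ is \emph{not} order-preserving. By \cref{lem:preorder} the relation $\le_{\theta}$ is always a preorder on $\UL/\theta$, so the only order axiom that can fail is anti-symmetry; hence there are classes $[S_i]\theta\ne[S_j]\theta$ with $[S_i]\theta\le_{\theta}[S_j]\theta$ and $[S_j]\theta\le_{\theta}[S_i]\theta$.

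Next I would run the case analysis of the ``$\Leftarrow$'' direction of the proof of \cref{lem:order_bedingung}: from the two comparabilities above, using only the pairwise disjointness of the $\US_i$ and the definitions of $\US^{\downarrowtail}$, $\US^{\uparrowtail}$ and $\le_{\theta}$, one extracts a subfamily $\US_{i_1},\dots,\US_{i_l}$ of $\US_1,\dots,\US_k$ satisfying $[\US_{i_1}]_{\theta}\in\US_{i_2}^{\downarrowtail}$, $\dots$, $[\US_{i_{l-1}}]_{\theta}\in\US_{i_l}^{\downarrowtail}$, $[\US_{i_l}]_{\theta}\in\US_{i_1}^{\downarrowtail}$, i.e.\ a Penrose crown of order $l$ in $\UL$. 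Since a Penrose crown consists of $k\ge 2$ intervals by definition, $l\ge 2$. Thus $\UL$ contains a Penrose crown of order $l\ge 2$. Contrapositively, if $\UL$ contains no such crown, then $\theta$ is order-preserving, which is exactly the claim.

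I do not expect a genuine obstacle here; the statement is essentially a restatement of the remark following \cref{lem:order_bedingung}. The only point deserving a word of care is the mild mismatch in hypotheses: \cref{lem:order_bedingung} is phrased for finite lattices, whereas this corollary is phrased for ordered sets. I would handle this by noting that the half of that theorem we actually invoke---deriving a Penrose crown among $\US_1,\dots,\US_k$ from a failure of anti-symmetry of $\le_{\theta}$---nowhere uses joins or meets of $\UL$, only the disjointness of the intervals and the definitions of the sets $\US^{\downarrowtail},\US^{\uparrowtail}$ and of $\le_{\theta}$, and hence applies verbatim over an arbitrary ordered set. Everything else is bookkeeping: matching the extracted crown with the hypothesis and invoking $l\ge 2$.
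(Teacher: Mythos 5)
Your proposal is correct and matches the paper's intent: the corollary is stated without proof precisely because it is the contrapositive reading of the ``$\Leftarrow$'' direction of \cref{lem:order_bedingung}, which is exactly how you derive it. Your extra remark that this direction of that proof uses only the disjointness of the intervals and the definitions of $\US^{\downarrowtail}$, $\US^{\uparrowtail}$ and $\le_{\theta}$ (and hence transfers from finite lattices to arbitrary ordered sets) is a worthwhile clarification that the paper leaves implicit.
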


\begin{corollary}
	\label{lem:crown}
	Let $\UL$ be a lattice and $\theta$ an interval relation on $\UL$.
	If $\UL$ does not contain a crown of order $l\ge 3$ as a suborder, then $\theta$ is order-preserving.
\end{corollary}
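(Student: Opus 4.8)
The plan is to argue the contrapositive via~\cref{lem:order_bedingung} and then to reduce everything to a minimality argument on Penrose crowns. Assume $\theta$ is not order-preserving. By~\cref{lem:order_bedingung} there is a Penrose crown among the intervals of $\theta$, hence a Penrose crown $\US_1,\dots,\US_l$ in $\UL$ (a cyclic family of pairwise disjoint intervals with $[\US_j]_\theta\in\US_{j+1}^{\downarrowtail}$, indices mod $l$); choose one of minimal order $l$. First I would check $l\ge 3$: no lattice admits a Penrose crown of order $2$, since writing $a_i:=[\US_i]_\theta$ and $b_i:=[\US_i]^\theta$ we would have $a_1<b_2$ and $a_2<b_1$, so $a_1\vee a_2$ lies in $[a_2,b_2]=\US_2$ and, symmetrically, in $[a_1,b_1]=\US_1$, contradicting disjointness (when one of the two intervals is a singleton the contradiction is already visible from $a_2<a_1<b_2$). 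As a Penrose crown has order at least $2$ by definition, the minimal one has order $l\ge 3$.

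Now write $a_j:=[\US_j]_\theta=\bigwedge\US_j$ and $b_j:=[\US_j]^\theta=\bigvee\US_j$, so $\US_j=[a_j,b_j]$ and the crown arrows read $a_j<b_{j+1}$. I would show that the suborder induced on $\{a_1,b_1,\dots,a_l,b_l\}$ is exactly a crown of order $l$, i.e.\ that the only comparabilities are $a_j\le b_j$ and $a_j\le b_{j+1}$. The engine is a \emph{shortcut} observation: whenever $a_i$ lies strictly below an element of some $\US_j$ with $i\ne j$, we get $a_i\in\US_j^{\downarrowtail}$, so $\US_i,\US_j,\US_{j+1},\dots,\US_{i-1}$ is again a Penrose crown, of order $((i-j)\bmod l)+1$; this order is $<l$ unless $j=i+1$, and it equals $2$ exactly when $j=i-1$. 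An order $\ge 3$ contradicts minimality of $l$, an order $2$ contradicts that $\UL$ is a lattice, and the remaining degenerate cases are killed directly by disjointness of consecutive intervals. Carrying this out: (i) if $|\US_j|=1$ then $a_{j-1}<b_j=a_j<b_{j+1}$ gives $a_{j-1}\in\US_{j+1}^{\downarrowtail}$, so $\US_j$ may be dropped from the cycle --- hence all $|\US_j|\ge 2$ and the $2l$ elements are distinct; (ii) $a_i<a_j$ ($i\ne j$) gives the shortcut $\US_i\to\US_j$, with the exceptional case $j=i+1$ handled by instead dropping $\US_{i+1}$ (using $a_i<a_{i+1}<b_{i+2}$); (iii) $b_i<b_j$ ($i\ne j$) is dual via $a_{i-1}<b_i<b_j$, with the exceptional case $j=i-1$ forcing $b_i\in[a_{i-1},b_{i-1}]=\US_{i-1}$; (iv) $b_i\le a_j$ gives $b_i<a_j<b_{j+1}$, reducing to (iii) unless $i=j+1$, which contradicts $a_j<b_{j+1}\le a_j$; (v) $a_i\le b_j$ with $j\notin\{i,i+1\}$ is the shortcut of (ii) with $j\ne i+1$. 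Every case yields a contradiction, so $\{a_1,b_1,\dots,a_l,b_l\}$ carries precisely the crown order, and $\UL$ contains a crown of order $l\ge 3$ as a suborder --- which is the contrapositive of the claim.

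The routine parts are the bookkeeping checks that ``taking a shortcut'' really produces a valid Penrose crown (pairwise disjointness and all-but-one arrow are inherited, the single new arrow having just been verified) and that its order is $((i-j)\bmod l)+1$. The main obstacle is disciplining the case analysis around the adjacent indices $j\in\{i-1,i,i+1\}$: there the generic shortcut either coincides with an original arrow or collapses to a two-element cycle, and one must notice that in each such degenerate situation an element is forced strictly between $a_t$ and $b_t$ of some $\US_t$ while also lying in a disjoint neighbouring interval, or else that a two-element Penrose crown has appeared --- both impossible. No individual step is deep; the substance is that minimality of $l$ together with the lattice property jointly exclude every comparability not present in the crown.
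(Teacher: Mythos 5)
Your proof is correct and follows the same route that the paper sketches in the remarks preceding the corollary: contrapose via \cref{lem:order_bedingung} to obtain a Penrose crown among the $\theta$-intervals, rule out order $2$ in a lattice by the join argument, and extract a crown of order $l\ge 3$ as a suborder. The paper merely asserts the final step (``\dots and therefore a crown of the same order as suborder''); your minimality-of-$l$ argument with the shortcut reductions supplies exactly the justification that is left implicit there, and some such argument is genuinely needed, since the endpoints of a non-minimal Penrose crown may carry extra comparabilities and need not induce a crown of the same order.
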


Since a dismantlable lattice $\UL$ -- meaning the iterative elimination of all doubly irreducible elements results in the elimination of the whole lattice -- never contains a crown~\cite{kelly1974crowns}, every interval relation on such a lattice is order preserving.

\begin{corollary}
	Let $\UL$ be an ordered set and $\theta$ an interval relation on $\UL$.
	If $\UL$ is planar than
	$\theta$ is order-preserving.
\end{corollary}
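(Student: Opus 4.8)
The plan is to reduce the corollary to \cref{lem:crown} via the classical theory of dismantlable lattices. Since \cref{lem:crown} is stated for lattices and all structures here are finite, I read $\UL$ as a finite planar lattice; this is the appropriate reading, because for a general planar \emph{ordered set} the conclusion can fail. For instance, the four-element ordered set with two minimal and two maximal elements and all four comparabilities is planar, yet its two two-element intervals form a Penrose crown of order $2$, so the corresponding interval relation $\theta$ is not order-preserving (the two classes satisfy $[\US_1]\theta\le_{\theta}[\US_2]\theta\le_{\theta}[\US_1]\theta$ while being distinct).

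First I would invoke the classical fact, due to Kelly and Rival~\cite{kelly1974crowns}, that every finite planar lattice is dismantlable, that is, the iterated deletion of doubly irreducible elements eventually empties it. Then I would combine this with the statement already recalled just after \cref{lem:crown} -- again from~\cite{kelly1974crowns} -- that a dismantlable lattice contains no crown of order $l\ge 3$ as a suborder. Hence a finite planar lattice $\UL$ contains no such crown, and \cref{lem:crown} immediately yields that $\theta$ is order-preserving.

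So the argument is a short two-step reduction once the two black boxes are in place, and the only genuinely non-trivial input is the Kelly--Rival dismantlability theorem. The main point to be careful about is the hypothesis itself: the route through \cref{lem:crown} (and, beneath it, the earlier lemmas that exclude Penrose crowns of order $2$) relies on joins, so it does not extend to arbitrary planar ordered sets, and the corollary is best understood in the lattice case.
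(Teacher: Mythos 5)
Your proof matches the paper's intended argument exactly: planarity gives dismantlability of the finite lattice by Kelly and Rival~\cite{kelly1974crowns}, dismantlable lattices contain no crown of order $l\ge 3$, and \cref{lem:crown} then yields that $\theta$ is order-preserving. Your caveat about the hypothesis is also well taken: the four-element ordered set with two minimal and two maximal elements and all four comparabilities is planar but not a lattice, and its two two-element intervals form a Penrose crown of order $2$, so the corollary as literally stated for ordered sets fails and should be read for lattices --- as the paper itself does in the restatement following \cref{lem:crown2}.
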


As shown, for an order preserving interval relation $\theta_{\US_1,\dots, \US_k}$, we have an implosion of the intervals $\US_{1},\dots,\US_{k}$ as defined in~\cref{def:implosion}.
In the following section we investigate the preservation of the lattice properties.



\subsection{Lattice-generating Interval Relations}
\label{sec:lattice_generating_interval_relations}

So far, we have been interested in interval relations with factor sets that are ordered sets.
Now we focus on interval relations where the resulting factor set is even a lattice.
Therefore we 
restrict us to the case where 
$\UL$ is a (finite) lattice.

\begin{definition}
	\label{def:lattice_interval_relation}
	Let $\UL$ be a finite lattice and $\theta$ an interval relation on $\UL$.
	We call $\theta$ a \emph{lattice-generating interval relation} on $\UL$ if $(\UL/\theta,\le_{\theta})$ is a lattice.
\end{definition}


Note that every congruence relation is a lattice-generating interval relation since the equivalence classes form pairwise disjoint intervals and the order  (denoted by $\le_c$ in the following lemma) which is defined on the factor lattice $\UL/\theta$ for a congruence relations $\theta$ is equal to $\le_{\theta}$:

\begin{lemma}
	Let $\UL$ be a complete lattice and $\theta$ a complete congruence relation on $\UL$.
	Then the orders $\le_{\theta}$ and $[x]\theta\le_c [y]\theta\colon\Leftrightarrow x\theta (x\wedge y)$ are identical on $\UL/\theta$.
\end{lemma}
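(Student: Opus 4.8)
The plan is to route both orders through the auxiliary relation $\le^{*}$ from the lemma immediately above, where $[x]\theta\le^{*}[y]\theta:\Leftrightarrow x_\theta\le y^\theta$, and then to invoke that $\le_\theta$ is exactly the transitive closure of $\le^{*}$; this is applicable because a complete congruence is in particular an interval relation, its classes being pairwise disjoint intervals. First I would record two facts about $\le_c$. It is well defined on $\UL/\theta$: if $x\theta x'$ and $y\theta y'$, then $x\wedge y\,\theta\,x'\wedge y'$ by compatibility of $\theta$ with meets, so $x\theta(x\wedge y)$ transfers to $x'\theta(x'\wedge y')$. And it is transitive: from $x\theta(x\wedge y)$ and $y\theta(y\wedge z)$, meeting the first relation with $z$ and the second with $x$ yields $x\wedge z\,\theta\,x\wedge y\wedge z$ and $x\wedge y\,\theta\,x\wedge y\wedge z$, hence $x\wedge z\,\theta\,x\wedge y\,\theta\,x$, i.e. $x\theta(x\wedge z)$.

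Next I would show that $\le^{*}$ and $\le_c$ are literally the same relation on $\UL/\theta$. For $\le_c\subseteq\le^{*}$: if $x\theta(x\wedge y)$ then $x\wedge y\in[x]\theta$, so $x_\theta=\bigwedge[x]\theta\le x\wedge y\le y\le y^\theta$, giving $[x]\theta\le^{*}[y]\theta$. For $\le^{*}\subseteq\le_c$: if $x_\theta\le y^\theta$, then $x_\theta\wedge y^\theta=x_\theta$; since $x\,\theta\,x_\theta$ and $y\,\theta\,y^\theta$, compatibility with meets gives $x\wedge y\,\theta\,x\wedge y^\theta\,\theta\,x_\theta\wedge y^\theta=x_\theta\,\theta\,x$, so $x\theta(x\wedge y)$, i.e. $[x]\theta\le_c[y]\theta$.

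Finally I would assemble the pieces: since $\le^{*}=\le_c$ as relations and $\le_c$ is transitive, $\le_c$ is its own transitive closure; by the preceding lemma $\le_\theta$ is the transitive closure of $\le^{*}$; hence $\le_\theta=\le_c$. The argument is essentially bookkeeping, so I do not expect a genuinely hard step; the only points that need a little care are the transitivity of $\le_c$ (this is what lets the transitive closure collapse back to $\le_c$) and tracking which element one meets with when invoking the meet-compatibility of $\theta$.
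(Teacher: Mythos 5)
Your proposal is correct and rests on the same two core computations as the paper's proof: $x\theta(x\wedge y)\Rightarrow x_\theta\le x\wedge y\le y\le y^\theta$ for one inclusion, and the meet-compatibility step $(x\wedge y)\,\theta\,(x_\theta\wedge y^\theta)=x_\theta$ for the other. The only organizational difference is that you finish by citing the transitive-closure lemma together with an explicit proof that $\le_c$ is transitive, whereas the paper unfolds the chain $x_\theta\le[\US_1]^\theta$, $[\US_1]_\theta\le[\US_2]^\theta,\dots$ by hand and uses the transitivity of $\le_c$ silently when concatenating $[x]\theta\le_c[\US_1]\theta\le_c\dots\le_c[y]\theta$; your version makes that implicit step explicit, which is a minor but genuine improvement in rigor.
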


\begin{proof}
	"$\Leftarrow$":
	Let $[x]\theta\le_c [y]\theta$ and therefore $(x\wedge y)\in [x]\theta$.
	Since $x\wedge y\le y$ holds, we have $x_{\theta}\le (x\wedge y) \le y\le y^{\theta}$ and consequently $[x]\theta\le_{\theta}[y]\theta$.
	
	"$\Rightarrow$": Let $[x]\theta\le_{\theta}[y]\theta$.
	In the case of $x_{\theta}\le y^{\theta}$ we have $x\theta x_{\theta}$ and $y\theta y^{\theta}$ and therefore (due to the definition of congruence relations) $(x\wedge y)\theta (x_{\theta}\wedge y^{\theta})=x_{\theta}$.
	Then $[x]\theta\le_c [y]\theta$ holds.
	
	If $x_{\theta}\not\le y^{\theta}$ then there are intervals $\US_{1},\dots,\US_k$ in $\theta$ with 
	$x_{\theta}\le [ \US_1]^{\theta}$, 
	$[ \US_1]_{\theta}\le [ \US_2]^{\theta}$, 
	\dots , 
	$[\US_{k}]_{\theta}\le y^{\theta}$.
	Then we have $[x]\theta\le_c [\US_1]\theta\le_c \dots \le_c [y]\theta$.
\end{proof}

We will now provide a characterization of lattice-generating interval relations. 

\begin{definition}
		\label{def:new_relation2}
		Let $\UL$ be a finite lattice and $\US\le\UL$ an interval. 
		We call $\US$ a \emph{nested interval} of $\UL$ if 
		there are two intervals $\underline{T},\underline{U}\le \UL$ so that $\US,\underline{T},\underline{U}$ are a Penrose crown of order 3 in $\UL$. 
%
		%
		We call $\US$ a \emph{pure interval} of $\UL$ if it is not nested.
\end{definition}


\begin{corollary}
	Let $\UL$ be a finite lattice and $\US\le\UL$ an interval.	
	$\US$ is nested if and only if there are
 	$ x,y\in S^{\parallel}, a\in S^{\uparrowtail}$ and $v\in S^{\downarrowtail}$ with 
	$y=x\vee v,
	x=y\wedge a,
	y\not\le a$ and $v\not\le x$.
\end{corollary}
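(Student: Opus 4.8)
The plan is to prove the biconditional by unwinding the definition of a nested interval --- namely the existence of two further intervals $\underline{T},\underline{U}$ forming a Penrose crown of order 3 with $\US$ --- into the concrete lattice-theoretic conditions stated in the corollary, using the infimum/supremum notation $[\cdot]_\theta,[\cdot]^\theta$ for the endpoints of the intervals. Recall that a Penrose crown of order 3 means three pairwise disjoint intervals with $[\US]_\theta\in\underline{T}^{\downarrowtail}$, $[\underline{T}]_\theta\in\underline{U}^{\downarrowtail}$, $[\underline{U}]_\theta\in S^{\downarrowtail}$; by the observation just before \cref{lem:order_bedingung} (that $[\underline{S}]_\theta\le[\underline{T}]^\theta \Leftrightarrow [\underline{S}]_\theta\in\underline{T}^{\downarrowtail}$ whenever the intervals are distinct), this cyclic chain of $\downarrowtail$-memberships can be rewritten as the cyclic chain of inequalities $[\US]_\theta\le[\underline{T}]^\theta$, $[\underline{T}]_\theta\le[\underline{U}]^\theta$, $[\underline{U}]_\theta\le[\US]^\theta$.

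For the direction ``$\US$ nested $\Rightarrow$ the four elements exist'', I would set $x\coloneqq[\underline{T}]^\theta$ wait --- more carefully: since $[\underline{U}]_\theta\le[\US]^\theta$ and $[\US]_\theta\le[\underline{T}]^\theta$ and $[\underline{T}]_\theta\le[\underline{U}]^\theta$, the goal is to extract witnesses lying in $S^{\parallel}$, $S^{\uparrowtail}$, $S^{\downarrowtail}$. The natural candidates are: take $a$ to be a suitable element of $\underline{T}$ that lies above $[\US]^\theta$ (so $a\in S^{\uparrowtail}$), take $v$ an element of $\underline{U}$ lying below $[\US]_\theta$ --- no; rather below the witness $x$ --- so $v\in S^{\downarrowtail}$, and produce $x,y\in S^{\parallel}$ via joins/meets forced by the lattice structure. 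Concretely, I expect $x$ and $y$ to be, roughly, $[\underline{U}]^\theta$ (or an element near it) and $[\underline{T}]_\theta$, with $v\in\underline{U}$ below and $a\in\underline{T}$ above; then the disjointness of the three intervals is exactly what forces $x,y\notin \US$, and forces $x,y$ to be incomparable with every element of $\US$ (if, say, $x\ge$ some element of $\US$ then one of the $\downarrowtail$-chains collapses and two of the intervals coincide, contradicting disjointness --- the same $\text{\Lightning}$-style argument as in \cref{lem:order_bedingung} and \cref{lem:beidedef}). The equations $y=x\vee v$ and $x=y\wedge a$ should come out by choosing $x,y$ minimally/maximally and invoking that the join of things below $\underline T$ that are also comparable to $\underline U$ must fall into the right interval, as in the contradiction arguments of \cref{lem:beidedef}.

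For the converse, given $x,y\in S^{\parallel}$, $a\in S^{\uparrowtail}$, $v\in S^{\downarrowtail}$ with $y=x\vee v$, $x=y\wedge a$, $y\not\le a$, $v\not\le x$, I would define $\underline{T}$ and $\underline{U}$ explicitly as small intervals: e.g. $\underline{T}\coloneqq[x,y]$ wait, $x\le y$ so that is an interval, but we need $\US,\underline T,\underline U$ pairwise disjoint; since $x,y\in S^{\parallel}$ the interval $[x,y]$ is disjoint from $\US$. I would then pick $\underline{U}$ to be a singleton-or-short interval capturing $v$ and $a$ appropriately --- perhaps $\underline{U}\coloneqq\{v\}$ is too small to interact; more likely $\underline{T}\coloneqq\{a\}$ and $\underline{U}\coloneqq\{v\}$, or intervals $[x',a]$ and $[v,y']$ --- and verify the three $\downarrowtail$-conditions: $[\US]_\theta\in\underline T^{\downarrowtail}$ because some element of $\US$ lies below $a$ (using $a\in S^{\uparrowtail}$), $[\underline T]_\theta\in\underline U^{\downarrowtail}$ and $[\underline U]_\theta\in S^{\downarrowtail}$ chased from $v\in S^{\downarrowtail}$, $v\le y$, $x\le a$, and the incomparabilities $y\not\le a$, $v\not\le x$ guaranteeing the intervals stay pairwise disjoint. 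The main obstacle, I expect, is getting the three intervals genuinely pairwise disjoint while simultaneously satisfying the cyclic inequality chain: the conditions $y\not\le a$ and $v\not\le x$ are exactly the ``escape'' conditions that prevent collapse, and the bulk of the work is a careful case analysis --- mirroring the $\text{\Lightning}$ arguments already used in \cref{lem:beidedef} and \cref{lem:order_bedingung} --- showing that any comparability between a witness and an element of $\US$, or any coincidence of two of the three intervals, contradicts one of the four hypotheses.
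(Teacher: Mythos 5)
The paper states this corollary without proof, so the only question is whether your sketch would actually close into one; as written it does not, although the overall plan (unfolding the Penrose-crown definition, with $\underline{T}=[x,a]$ and $\underline{U}=[v,y]$ in the easy direction) is the right one. The genuine gap is in the direction ``nested $\Rightarrow$ witnesses exist''. Writing $\underline{T}=[T_\bot,T_\top]$ and $\underline{U}=[U_\bot,U_\top]$, the crown hands you $a\coloneqq T_\top\in S^{\uparrowtail}$ and $v\coloneqq U_\bot\in S^{\downarrowtail}$ easily, but the candidates you lean towards for $x,y$, namely $[\underline{T}]_\theta$ and $[\underline{U}]^\theta$, do \emph{not} satisfy the exact equalities $y=x\vee v$ and $x=y\wedge a$ in general (a crown with an extra element strictly between $T_\bot\vee U_\bot$ and $U_\top$ already kills $y=x\vee v$). ``Choosing $x,y$ minimally/maximally'' is precisely the missing idea and has to be made concrete: set $y\coloneqq T_\bot\vee U_\bot$ and then $x\coloneqq y\wedge T_\top$. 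One checks that $y\in\underline{U}$ and $x\in\underline{T}$, that both lie in $S^{\parallel}$ (any comparability with $\US$ produces, via a join or meet, an element lying in two of the three intervals, contradicting disjointness), that $y\not\le a$ and $v\not\le x$ (else $\underline{T}\cap\underline{U}\ne\emptyset$, resp.\ $x=y$), and --- the step your sketch skips entirely --- that replacing $x$ by $y\wedge a$ does not destroy the first equality: since $y\wedge a\le y$ and $v\le y$ one still has $(y\wedge a)\vee v\le y$, while $(y\wedge a)\vee v\ge T_\bot\vee U_\bot=y$.

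In the converse direction your plan is salvageable but you never commit to the correct intervals, and the candidate you call ``more likely'', $\underline{T}=\{a\}$ and $\underline{U}=\{v\}$, cannot work: the cyclic condition would force $a<v$, which together with $a\in S^{\uparrowtail}$ and $v\in S^{\downarrowtail}$ gives $S_\bot<a<v<S_\top$ and hence $a,v\in\US$, a contradiction. The correct choice is $\underline{T}\coloneqq[x,a]$ and $\underline{U}\coloneqq[v,y]$, and the one nontrivial verification --- exactly where the hypotheses $y=x\vee v$ and $y\not\le a$ enter --- is the disjointness $[x,a]\cap[v,y]=\emptyset$: a common element $c$ would satisfy $c\ge x\vee v=y$ and $c\le y$, hence $c=y\le a$. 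Without these two pieces the argument remains a plausible outline rather than a proof.
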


An example of a lattice with a nested and a pure interval is given in~\cref{fig:nested_interval_darstellung}.
Also, both highlighted intervals in the lattice in~\cref{fig:running_exp} (and therefore the one in~\cref{fig:running_exp_teil2}) are pure. 
In the 1-generated case, the pure intervals are exactly the lattice-generating intervals:

\begin{figure}[t]
	\centering
	\begin{minipage}{0.3\textwidth}
	\centering
	{\unitlength 0.6mm
		\begin{picture}(60,45)%
		\put(0,0){%
			\begin{diagram}{60}{45}
			
			\Node{1}{30}{0}
			\Node{2}{15}{15}
			\Node{3}{30}{15}
			\Node{4}{45}{15}
			\Node{5}{15}{30}
			\Node{6}{30}{30}
			\Node{7}{45}{30}
			\Node{8}{30}{45}

			\Edge{1}{2}
			\Edge{1}{3}
			\Edge{1}{4}
			\Edge{2}{5}
			\Edge{2}{6}
			\Edge{3}{5}
			\Edge{3}{7}
			\Edge{4}{6}
			\Edge{4}{7}
			\Edge{5}{8}
			\Edge{6}{8}
			\Edge{7}{8}	
			
			\rightObjbox{3}{3}{-1}{v}
			\rightObjbox{4}{3}{-1}{x}
			\rightObjbox{7}{3}{-1}{y}
			\rightObjbox{6}{3}{-1}{a}
			
			\end{diagram}}
		
		\put(15,15){\ColorNode{red}}
		\put(15,30){\ColorNode{red}}
		
		\end{picture}}
	\end{minipage}
	\begin{minipage}{0.3\textwidth}
	\centering
	{\unitlength 0.6mm
		\begin{picture}(60,45)%
		\put(0,0){%
			\begin{diagram}{60}{45}
			
			\Node{1}{30}{0}
			\Node{2}{15}{15}
			\Node{3}{30}{15}
			\Node{4}{45}{15}
			\Node{5}{15}{30}
			\Node{6}{30}{30}
			\Node{7}{45}{30}
			\Node{8}{30}{45}

			\Edge{1}{2}
			\Edge{1}{3}
			\Edge{1}{4}
			\Edge{2}{5}
			\Edge{2}{6}
			\Edge{3}{5}
			\Edge{3}{7}
			\Edge{4}{6}
			\Edge{4}{7}
			\Edge{5}{8}
			\Edge{6}{8}
			\Edge{7}{8}	
						
			\end{diagram}}
		
		\put(15,15){\ColorNode{red}}
		\put(15,30){\ColorNode{red}}
		\put(30,15){\ColorNode{red}}
		\put(30,0){\ColorNode{red}}
		
		\end{picture}}
		\end{minipage}
		\caption{
			On the left, a nested interval $\US$  of the lattice $\UL$ is highlighted in red. $\US$ is part of a Penrose crows of order 3 with the Intervals $[x,a]$ and $[v,y]$.
			As presented in~\cref{fig:kein_verband} $\UL/\theta_{\US}$ is no lattice.
			On the right, a pure interval is highlighted in red.
	}
	\label{fig:nested_interval_darstellung}			
\end{figure}
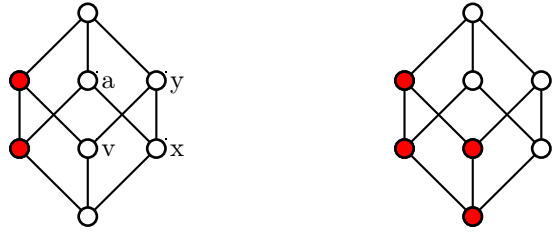

\begin{lemma}
	\label{lem:verband}
	Let $\theta=\theta_{\US}$ be an interval relation on lattice $\underline{L}$.
	Then $\theta$ is lattice-generating if and only if $\US$ is pure.	
\end{lemma}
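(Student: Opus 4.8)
The plan is to reduce everything to the existence of binary joins. By \cref{lem:ordnung}, $(\UL/\theta,\le_\theta)$ is a finite ordered set, and it has bottom $[\hat 0]\theta$ and top $[\hat 1]\theta$ (with $\hat 0,\hat 1$ the bounds of $\UL$), so it is a lattice iff every pair of classes has a join. Throughout I would write $\US=[u,v]$ and $*\coloneqq[\US]\theta$, and use that $S^{\downarrowtail}=(v]\setminus S$, $S^{\uparrowtail}=[u)\setminus S$, and that for $t,s\notin S$ the relation of \cref{def:faktorization_order_relation} reads: $[t]\theta\le_\theta[s]\theta$ iff $t\le s$, or ($t\le v$ and $u\le s$); $[t]\theta\le_\theta *$ iff $t\le v$; and $*\le_\theta[s]\theta$ iff $u\le s$.

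For the direction ``$\US$ not pure $\Rightarrow$ $\theta$ not lattice-generating'', since $\US$ is nested the Corollary following \cref{def:new_relation2} supplies $x,y\in S^{\parallel}$, $a\in S^{\uparrowtail}$, $w\in S^{\downarrowtail}$ with $y=x\vee w$, $x=y\wedge a$, $y\not\le a$, $w\not\le x$. I would show the join of $[x]\theta$ and $[w]\theta$ does not exist. First, $[y]\theta$ and $[a]\theta$ are both upper bounds of $\{[x]\theta,[w]\theta\}$ (immediate from $x,w\le y$, from $x\le a$, and from $w\le v$, $u\le a$), and they are incomparable: $[y]\theta\le_\theta[a]\theta$ would need $y\le a$ (false; $y\not\le v$ kills the second clause), and $[a]\theta\le_\theta[y]\theta$ would need $a\le y$, forcing $x=y\wedge a=a\in S^{\uparrowtail}\cap S^{\parallel}=\emptyset$ ($a\not\le v$ kills the second clause). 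Finally, no upper bound $[t]\theta$ of $\{[x]\theta,[w]\theta\}$ lies below both $[y]\theta$ and $[a]\theta$: such a $[t]\theta$ is a singleton (there is no $*$ among these upper bounds, as $x\not\le v$) with $x\le t\le y$, so $t\le v$ is impossible, so $[t]\theta\le_\theta[a]\theta$ forces $t\le a$, hence $t\le y\wedge a=x$, hence $t=x$; but $[x]\theta$ is not an upper bound of $[w]\theta$, since $w\not\le x$ and $u\not\le x$. Thus the upper-bound set has two incomparable minimal elements and no least element, so $\UL/\theta$ is not a lattice.

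For the direction ``$\US$ pure $\Rightarrow$ $\theta$ lattice-generating'' I would argue by contraposition on purity, showing every pair has a join. After the easy reductions — if one class is $*$ the join is $*$ or $[u\vee q]\theta$ according as $q\le v$ or not, and if $p\vee q\in S$ the join is $*$ — it remains to treat singletons $[p]\theta,[q]\theta$ with $p,q\notin S$, $p\vee q\notin S$, and $[p]\theta\not\le_\theta[q]\theta\not\le_\theta[p]\theta$. A short unfolding of $\le_\theta$ shows that if $p$ or $q$ lies in $S^{\uparrowtail}$, or both lie in $S^{\parallel}$, or both lie in $S^{\downarrowtail}$, then $[p\vee q]\theta$ is the least upper bound (in the mixed case $p\in S^{\uparrowtail}$, $q\in S^{\downarrowtail}$ the two classes are in fact comparable). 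This leaves $p\in S^{\parallel}$, $q\in S^{\downarrowtail}$ (and its mirror): here the upper bounds of $\{[p]\theta,[q]\theta\}$ are exactly the $[t]\theta$ with $t\in[p\vee q)\cup[p\vee u)$, $t\notin S$, and since $p\vee q,p\vee u\notin S$ this set has a least element iff $p\vee q$ and $p\vee u$ are comparable in $\UL$. If they were incomparable, then $x'\coloneqq(p\vee q)\wedge(p\vee u)$, $y'\coloneqq p\vee q$, $a'\coloneqq p\vee u$, $w'\coloneqq q$ would satisfy $x',y'\in S^{\parallel}$, $a'\in S^{\uparrowtail}$, $w'\in S^{\downarrowtail}$, $y'=x'\vee w'$, $x'=y'\wedge a'$, $y'\not\le a'$ and $w'\not\le x'$ (the last because $x'\vee w'=y'>x'$), so $\US$ would be nested by that same Corollary, contradicting purity. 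Hence the join exists in all cases and $\theta$ is lattice-generating.

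The hard part is this backward direction: checking that the enumeration of positions of $p,q$ relative to $\{S,S^{\uparrowtail},S^{\downarrowtail},S^{\parallel}\}$ and to $*$ is exhaustive, that in every case but one the displayed class is genuinely the least upper bound (each check a short but error-prone unfolding of \cref{def:faktorization_order_relation}), and above all that the sole surviving obstruction to a join is precisely a Penrose crown of order $3$ through $\US$ — the delicate point being that incomparability of $p\vee q$ and $p\vee u$ forces $p\vee q\not\ge u$, so that $y'$ really lands in $S^{\parallel}$ and all four defining relations of a nested interval hold.
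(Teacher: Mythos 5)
Your proposal is correct and follows essentially the same route as the paper's proof: both directions reduce to a case analysis on the position of elements relative to $S$, $S^{\uparrowtail}$, $S^{\downarrowtail}$, $S^{\parallel}$, exhibiting two incomparable minimal upper bounds from the nested configuration in one direction and, in the other, showing that the only obstruction to a binary join is exactly the configuration $x'=(p\vee q)\wedge(p\vee u)$, $y'=p\vee q$, $a'=p\vee u$, $w'=q$ witnessing nestedness. Your write-up is somewhat more explicit than the paper's (in particular the verification that no common upper bound of $[x]\theta,[w]\theta$ lies below both $[y]\theta$ and $[a]\theta$, and the clean reduction of the lattice property to binary joins in a finite bounded poset), but the underlying argument is the same.
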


\begin{proof}
	"$\Rightarrow$":
	We show the contraposition:
	Let $\US$ be a nested interval, meaning 
	$\exists x,y\in S^{\parallel}, a\in S^{\uparrowtail}$ and $v\in S^{\downarrowtail}$ with $y=x\vee v,x=y\wedge a,y\not\le a$ and $v\not\le x$.
	For all elements $d\in S^{\downarrowtail}, e\in S^{\uparrowtail}$ holds $[d]\theta<_{\theta}[e]\theta$ in $\UL/\theta_{\US}$.
	It follows that $[v]\theta<_{\theta}[a]\theta$ in $\UL/\theta_{\US}$ and $\exists [c]\theta\in S^{\uparrowtail}$ with $[v]\theta<_{\theta}[c]\theta\le_{\theta} [a]\theta$ and $[x]\theta<_{\theta} [c]\theta$.
	So $[y]\theta$ and $[c]\theta$ are two different minimal upper bounds of $[v]\theta$ and $[x]\theta$.
	Thus $\UL/\theta_{\US}$ is not a lattice.
	
	"$\Leftarrow$":
	We show the contraposition:
	$\UL/\theta$ is an ordered set by~\cref{lem:ordnung}.
	Suppose $\UL\theta$ is not a lattice.
	Then exist $[x]\theta,[v]\theta$ in $\UL/\theta$ with two smallest upper bounds or two greatest lower bounds.
	Due to the duality of lattices, we only examine the case of two incomparable smallest upper bounds $[a]\theta,[y]\theta$.
	We have $[x]\theta\not =[v]\theta$ in $\UL/\theta$ and thus $x\not= v$ in $\UL$.
	Since $\UL$ is a lattice and the factorization does not affect the order of the elements in $S^{\downarrowtail}, S^{\uparrowtail}$ and $S^{\parallel}$ we have that $v,x$ are not both in the same of those sets.
	Otherwise $[x]\theta\vee [v]\theta= [x\vee v]\theta$ holds.

	In addition, we show that $x\not\in S$:
	If $x\in S$ and $v\in S^{\uparrowtail}$, we have $[x]\theta\vee [v]\theta=[v]\theta$.
	If $x\in S$ and $v\in S^{\downarrowtail}$, we have $[x]\theta\vee [v]\theta=[x]\theta$.
	If $x\in S$ and $v\in S$, we have $[x]\theta\vee [v]\theta=[v]\theta=[x]\theta$.
	If $x\in S$ and $v\in S^{\parallel}$, we have $[x]\theta\vee [v]\theta=[x_{\theta}\vee v]\theta$ or otherwise $\UL$ would not be a lattice.
	Analogous, one can show that $v\not\in S$.
	
	In case of $x$ or $v$ in $S^{\uparrowtail}$ we have:
	W.l.o.g.\ let $v\in S^{\uparrowtail}$.
	If $x\in S^{\downarrowtail}$ we have $[x]\theta\vee [v]\theta=[v]\theta$.
	If $x\in S^{\parallel}$ we have $[x]\theta\vee [v]\theta=[x\vee v]\theta$
	Therefore the only possibility for $[x]\theta$ and $[v]\theta$ having two minimal upper bounds is $x\in S^{\parallel}$ and $v\in S^{\downarrowtail}$ with $v\not\le x$ (or the other way around).
	
	W.l.o.g.\ let $y=v\vee x$ in $\UL$.
	Since $x\in S^{\parallel}$ and $v\in S^{\downarrowtail}$ we have $y\in S^{\uparrowtail}\cup S^{\parallel}$.
	If $y\in S^{\uparrowtail}$ we have $[x]\theta\vee [v]\theta=[x_{\theta}\vee v]\theta\le_{\theta}[y]\theta$ as the supremum of $[x]\theta$ and $[v]\theta$.
	So let $y\in S^{\parallel}$. 
	Then $[y]\theta$ is a smallest upper bound of $[x]\theta$ and $[v]\theta$.
	Let $[a]\theta\not= [y]\theta$ be another smallest upper bound of $[x]\theta$ and $[v]\theta$.
	Then either $x\not\le a$ or $v\not\le a$ in $\UL$.
	Due to the definition of the order in $\UL/\theta$ we have $v\not\le a$, $x\le a$ and $a\in S^{\uparrowtail}$ in $\UL$.
	Then $S$ is a nested interval.
\end{proof}

The example shown in~\cref{fig:kein_verband} illustrates the implosion of a nested interval $\US$ in a lattice $\UL$.
In this case, $\UL/\theta_{\US}$ is an ordered set but no lattice.

Let $\UL$ be a finite lattice and $\US_1,\US_2$ two disjoint pure intervals of $\UL$.
Note, that in general $\US_2$ is not a pure interval in $\UL/\theta_{\US_1}$.
Consequently, an interval relation $\theta=\theta_{\US_1, \dots, \US_k}$ it not necessarily lattice-generating just because it is pure -
\cref{fig:pure_intervals_gegenbeispiel} shows a counterexample.
Also, not every lattice-generating interval relation consists of only pure intervals as can be seen in \cref{fig:congruence_not_pure}.

\begin{figure}[t]
	\centering
	\begin{minipage}{0.3\textwidth}
		\centering
		{\unitlength 0.6mm
			\begin{picture}(45,45)%
			\put(0,0){%
				\begin{diagram}{45}{45}
				
				\Node{1}{22.5}{0}
				\Node{2}{0}{15}
				\Node{3}{15}{15}
				\Node{4}{30}{15}
				\Node{5}{45}{15}
				\Node{6}{0}{30}
				\Node{7}{15}{30}
				\Node{8}{30}{30}
				\Node{9}{45}{30}
				\Node{10}{22.5}{45}

				\Edge{1}{2}
				\Edge{1}{3}
				\Edge{1}{4}
				\Edge{1}{5}
				\Edge{2}{6}
				\Edge{2}{7}
				\Edge{3}{7}
				\Edge{3}{8}
				\Edge{4}{8}
				\Edge{4}{9}
				\Edge{5}{9}
				\Edge{5}{6}
				\Edge{6}{10}
				\Edge{7}{10}
				\Edge{8}{10}
				\Edge{9}{10}
	
				\end{diagram}}
			
			\put(30,15){\ColorNode{red}}
			\put(30,30){\ColorNode{red}}
			\put(15,30){\ColorNode{blue}}
			\put(15,15){\ColorNode{blue}}
	
			\end{picture}}			
	\end{minipage}
	\begin{minipage}{0.3\textwidth}
		\centering
		{\unitlength 0.6mm
			\begin{picture}(45,45)%
			\put(0,0){%
				\begin{diagram}{45}{45}
				
				\Node{1}{22.5}{0}
				\Node{2}{0}{15}
				\Node{3}{15}{15}
				\Node{4}{35}{25}
				\Node{5}{45}{15}
				\Node{6}{0}{30}
				\Node{7}{15}{30}
				\Node{8}{22.5}{45}
				\Node{9}{45}{30}
				
				\Edge{1}{2}
				\Edge{1}{3}
				\Edge{1}{5}
				\Edge{2}{6}
				\Edge{2}{7}
				\Edge{3}{7}
				\Edge{3}{4}
				\Edge{4}{9}
				\Edge{5}{9}
				\Edge{5}{6}
				\Edge{6}{8}
				\Edge{7}{8}
				\Edge{9}{8}
				
				\leftObjbox{2}{3}{-1}{v}
				\leftObjbox{6}{3}{-1}{y}
				\rightObjbox{5}{3}{-1}{x}
				\rightObjbox{9}{3}{-1}{a}	
				
				\end{diagram}}
			
			\put(35,25){\ColorNode{red}}
			\put(15,30){\ColorNode{blue}}
			\put(15,15){\ColorNode{blue}}
			
			\end{picture}}			
	\end{minipage}
	\begin{minipage}{0.3\textwidth}
		\centering
		{\unitlength 0.6mm
			\begin{picture}(45,45)%
			\put(0,0){%
				\begin{diagram}{45}{45}
				
				\Node{1}{22.5}{0}
				\Node{2}{0}{15}
				\Node{3}{10}{20}
				\Node{4}{30}{15}
				\Node{5}{45}{15}
				\Node{6}{0}{30}
				\Node{7}{22.5}{45}
				\Node{8}{30}{30}
				\Node{9}{45}{30}
				
				\Edge{1}{2}
				\Edge{1}{4}
				\Edge{1}{5}
				\Edge{2}{6}
				\Edge{2}{3}
				\Edge{3}{8}
				\Edge{4}{8}
				\Edge{4}{9}
				\Edge{5}{9}
				\Edge{5}{6}
				\Edge{6}{7}
				\Edge{8}{7}
				\Edge{9}{7}
				
				\leftObjbox{2}{3}{-1}{v}
				\rightObjbox{5}{3}{-1}{x}
				\leftObjbox{6}{3}{-1}{y}
				\rightObjbox{9}{3}{-1}{a}		
				
				\end{diagram}}
			
			\put(30,15){\ColorNode{red}}
			\put(30,30){\ColorNode{red}}
			\put(10,20){\ColorNode{blue}}
			
			\end{picture}}			
	\end{minipage}
	\caption{A lattice $\UL$ with two pure intervals that are red and blue highlighted (left).
		If the red interval is factorized, the blue interval becomes nested (middle). 
		If the blue interval is factorized, the red interval becomes nested (right).
	}
	\label{fig:pure_intervals_gegenbeispiel}	
\end{figure}
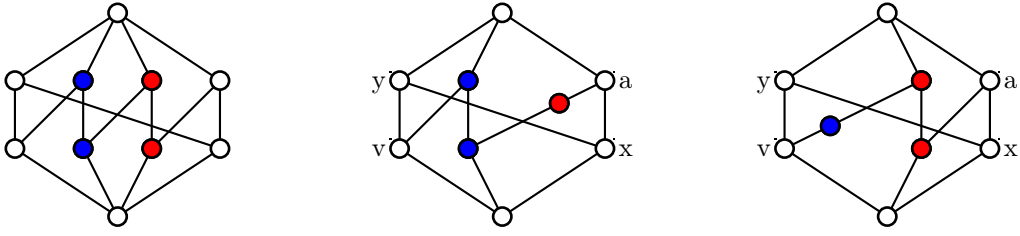

This means that, having an interval relation $\theta_{\US}$ 
with a nested interval $\US$, it is possible to alter $\theta$ in a way that \emph{purifies} it by adding additional intervals.
In this case, it is necessary to make the elements $a$ and $y$ or the elements $x$ and $v$ comparable in the lattice.
Some possibilities to purify a nested interval are illustrated in \cref{fig:purification}.
Note that the purification of an interval can also necessarily require several new intervals since there are possibly more than just one set of elements $a,v,x,y$ that make $\US$ nested.
Also, each additional interval may interact with the other added intervals as well as with $\US$ so that new problematic elements can arise.

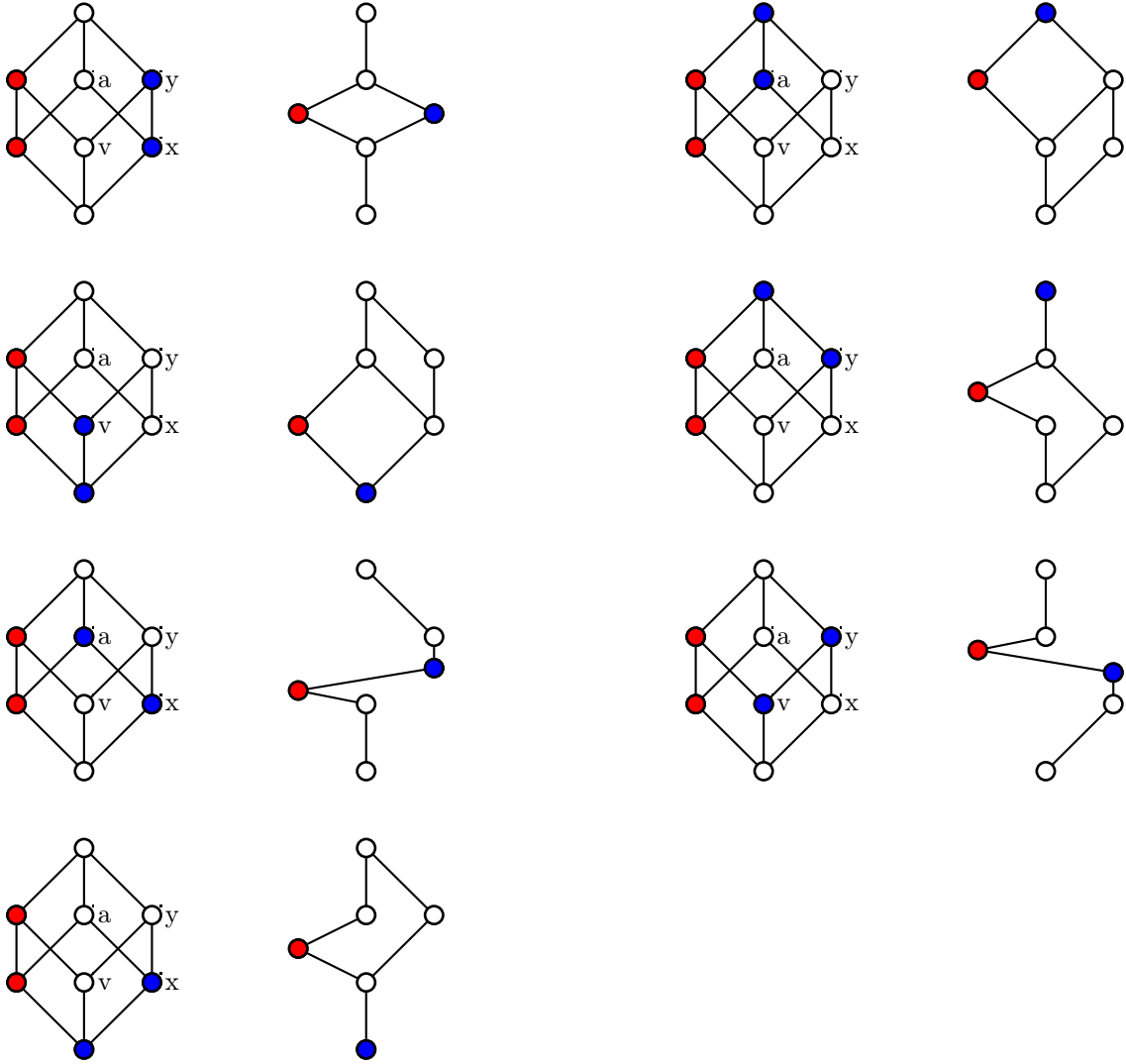
\begin{figure}[t]
	\centering
	\begin{minipage}{0.22\textwidth}
		\centering
		{\unitlength 0.6mm
			\begin{picture}(60,45)%
			\put(0,0){%
				\begin{diagram}{60}{45}
				
				\Node{1}{30}{0}
				\Node{2}{15}{15}
				\Node{3}{30}{15}
				\Node{4}{45}{15}
				\Node{5}{15}{30}
				\Node{6}{30}{30}
				\Node{7}{45}{30}
				\Node{8}{30}{45}

				\Edge{1}{2}
				\Edge{1}{3}
				\Edge{1}{4}
				\Edge{2}{5}
				\Edge{2}{6}
				\Edge{3}{5}
				\Edge{3}{7}
				\Edge{4}{6}
				\Edge{4}{7}
				\Edge{5}{8}
				\Edge{6}{8}
				\Edge{7}{8}	
				
				\rightObjbox{3}{3}{-1}{v}
				\rightObjbox{4}{3}{-1}{x}
				\rightObjbox{7}{3}{-1}{y}
				\rightObjbox{6}{3}{-1}{a}
				
				\end{diagram}}
			
			\put(15,15){\ColorNode{red}}
			\put(15,30){\ColorNode{red}}
			\put(45,15){\ColorNode{blue}}
			\put(45,30){\ColorNode{blue}}
			
			\end{picture}}
	\end{minipage}
	\begin{minipage}{0.22\textwidth}
		\centering
		{\unitlength 0.6mm
			\begin{picture}(60,45)%
			\put(0,0){%
				\begin{diagram}{60}{45}
				
				\Node{1}{30}{0}
				\Node{2}{15}{22.5}
				\Node{3}{30}{15}
				\Node{4}{45}{22.5}
				\Node{5}{30}{45}
				\Node{6}{30}{30}

				\Edge{1}{3}
				\Edge{3}{2}
				\Edge{3}{4}
				\Edge{2}{6}
				\Edge{4}{6}
				\Edge{6}{5}
				
				
				\end{diagram}}
			
			\put(15,22.5){\ColorNode{red}}
			\put(45,22.5){\ColorNode{blue}}
			
			\end{picture}}
	\end{minipage}
\quad\quad\quad\quad
\begin{minipage}{0.22\textwidth}
			\centering
			{\unitlength 0.6mm
				\begin{picture}(60,45)%
				\put(0,0){%
					\begin{diagram}{60}{45}
					
					\Node{1}{30}{0}
					\Node{2}{15}{15}
					\Node{3}{30}{15}
					\Node{4}{45}{15}
					\Node{5}{15}{30}
					\Node{6}{30}{30}
					\Node{7}{45}{30}
					\Node{8}{30}{45}

					\Edge{1}{2}
					\Edge{1}{3}
					\Edge{1}{4}
					\Edge{2}{5}
					\Edge{2}{6}
					\Edge{3}{5}
					\Edge{3}{7}
					\Edge{4}{6}
					\Edge{4}{7}
					\Edge{5}{8}
					\Edge{6}{8}
					\Edge{7}{8}	
					
					\rightObjbox{3}{3}{-1}{v}
					\rightObjbox{4}{3}{-1}{x}
					\rightObjbox{7}{3}{-1}{y}
					\rightObjbox{6}{3}{-1}{a}
					
					\end{diagram}}
				
				\put(15,15){\ColorNode{red}}
				\put(15,30){\ColorNode{red}}
				\put(30,30){\ColorNode{blue}}
				\put(30,45){\ColorNode{blue}}
				
				\end{picture}}
		\end{minipage}
		\begin{minipage}{0.22\textwidth}
			\centering
			{\unitlength 0.6mm
				\begin{picture}(60,45)%
				\put(0,0){%
					\begin{diagram}{60}{45}
					
					\Node{1}{30}{0}
					\Node{2}{45}{30}
					\Node{3}{30}{15}
					\Node{4}{45}{15}
					\Node{5}{15}{30}
					\Node{6}{30}{45}

					\Edge{1}{3}
					\Edge{1}{4}
					\Edge{3}{5}
					\Edge{3}{2}
					\Edge{4}{2}
					\Edge{5}{6}
					\Edge{2}{6}	
					
					
					\end{diagram}}
				
				\put(15,30){\ColorNode{red}}
				\put(30,45){\ColorNode{blue}}
				
				\end{picture}}
	\end{minipage}	
\ \\
\ \\
\ \\
\ \\
\begin{minipage}{0.22\textwidth}
			\centering
			{\unitlength 0.6mm
				\begin{picture}(60,45)%
				\put(0,0){%
					\begin{diagram}{60}{45}
					
					\Node{1}{30}{0}
					\Node{2}{15}{15}
					\Node{3}{30}{15}
					\Node{4}{45}{15}
					\Node{5}{15}{30}
					\Node{6}{30}{30}
					\Node{7}{45}{30}
					\Node{8}{30}{45}

					\Edge{1}{2}
					\Edge{1}{3}
					\Edge{1}{4}
					\Edge{2}{5}
					\Edge{2}{6}
					\Edge{3}{5}
					\Edge{3}{7}
					\Edge{4}{6}
					\Edge{4}{7}
					\Edge{5}{8}
					\Edge{6}{8}
					\Edge{7}{8}	
					
					\rightObjbox{3}{3}{-1}{v}
					\rightObjbox{4}{3}{-1}{x}
					\rightObjbox{7}{3}{-1}{y}
					\rightObjbox{6}{3}{-1}{a}
					
					\end{diagram}}
				
				\put(15,15){\ColorNode{red}}
				\put(15,30){\ColorNode{red}}
				\put(30,15){\ColorNode{blue}}
				\put(30,0){\ColorNode{blue}}
				
				\end{picture}}
		\end{minipage}
		\begin{minipage}{0.22\textwidth}
			\centering
			{\unitlength 0.6mm
				\begin{picture}(60,45)%
				\put(0,0){%
					\begin{diagram}{60}{45}
					
					\Node{1}{30}{0}
					\Node{2}{15}{15}
					\Node{3}{45}{30}
					\Node{4}{45}{15}
					\Node{5}{30}{45}
					\Node{6}{30}{30}
					
					\Edge{1}{2}
					\Edge{1}{4}
					\Edge{2}{6}
					\Edge{4}{6}
					\Edge{4}{3}
					\Edge{6}{5}
					\Edge{3}{5}	
					
					
					\end{diagram}}
				
				\put(15,15){\ColorNode{red}}
				\put(30,0){\ColorNode{blue}}
				
				\end{picture}}
	\end{minipage}
\quad\quad\quad\quad
\begin{minipage}{0.22\textwidth}
			\centering
			{\unitlength 0.6mm
				\begin{picture}(60,45)%
				\put(0,0){%
					\begin{diagram}{60}{45}
					
					\Node{1}{30}{0}
					\Node{2}{15}{15}
					\Node{3}{30}{15}
					\Node{4}{45}{15}
					\Node{5}{15}{30}
					\Node{6}{30}{30}
					\Node{7}{45}{30}
					\Node{8}{30}{45}

					\Edge{1}{2}
					\Edge{1}{3}
					\Edge{1}{4}
					\Edge{2}{5}
					\Edge{2}{6}
					\Edge{3}{5}
					\Edge{3}{7}
					\Edge{4}{6}
					\Edge{4}{7}
					\Edge{5}{8}
					\Edge{6}{8}
					\Edge{7}{8}	
					
					\rightObjbox{3}{3}{-1}{v}
					\rightObjbox{4}{3}{-1}{x}
					\rightObjbox{7}{3}{-1}{y}
					\rightObjbox{6}{3}{-1}{a}
					
					\end{diagram}}
				
				\put(15,15){\ColorNode{red}}
				\put(15,30){\ColorNode{red}}
				\put(30,45){\ColorNode{blue}}
				\put(45,30){\ColorNode{blue}}
				
				\end{picture}}
		\end{minipage}
		\begin{minipage}{0.22\textwidth}
			\centering
			{\unitlength 0.6mm
				\begin{picture}(60,45)%
				\put(0,0){%
					\begin{diagram}{60}{45}
					
					\Node{1}{30}{0}
					\Node{2}{15}{22.5}
					\Node{3}{30}{15}
					\Node{4}{45}{15}
					\Node{5}{30}{45}
					\Node{6}{30}{30}		
					
					\Edge{3}{2}
					\Edge{1}{3}
					\Edge{1}{4}
					\Edge{2}{6}
					\Edge{4}{6}
					\Edge{6}{5}	
					
					
					\end{diagram}}
				
				\put(15,22.5){\ColorNode{red}}
				\put(30,45){\ColorNode{blue}}
				
				\end{picture}}
	\end{minipage}	
\ \\
\ \\
\ \\
\ \\
\begin{minipage}{0.22\textwidth}
	\centering
	{\unitlength 0.6mm
		\begin{picture}(60,45)%
		\put(0,0){%
			\begin{diagram}{60}{45}
			
			\Node{1}{30}{0}
			\Node{2}{15}{15}
			\Node{3}{30}{15}
			\Node{4}{45}{15}
			\Node{5}{15}{30}
			\Node{6}{30}{30}
			\Node{7}{45}{30}
			\Node{8}{30}{45}

			\Edge{1}{2}
			\Edge{1}{3}
			\Edge{1}{4}
			\Edge{2}{5}
			\Edge{2}{6}
			\Edge{3}{5}
			\Edge{3}{7}
			\Edge{4}{6}
			\Edge{4}{7}
			\Edge{5}{8}
			\Edge{6}{8}
			\Edge{7}{8}	
			
			\rightObjbox{3}{3}{-1}{v}
			\rightObjbox{4}{3}{-1}{x}
			\rightObjbox{7}{3}{-1}{y}
			\rightObjbox{6}{3}{-1}{a}
			
			\end{diagram}}
		
		\put(15,15){\ColorNode{red}}
		\put(15,30){\ColorNode{red}}
		\put(45,15){\ColorNode{blue}}
		\put(30,30){\ColorNode{blue}}
		
		\end{picture}}
\end{minipage}
\begin{minipage}{0.22\textwidth}
	\centering
	{\unitlength 0.6mm
		\begin{picture}(60,45)%
		\put(0,0){%
			\begin{diagram}{60}{45}
			
			\Node{1}{30}{0}
			\Node{2}{15}{18}
			\Node{3}{30}{15}
			\Node{4}{45}{23}
			\Node{5}{45}{30}
			\Node{6}{30}{45}

			\Edge{3}{2}
			\Edge{1}{3}
			\Edge{2}{4}
			\Edge{4}{5}
			\Edge{5}{6}	
			
			
			\end{diagram}}
		
		\put(15,18){\ColorNode{red}}
		\put(45,23){\ColorNode{blue}}
		
		\end{picture}}
\end{minipage}
\quad\quad\quad\quad
\begin{minipage}{0.22\textwidth}
		\centering
		{\unitlength 0.6mm
			\begin{picture}(60,45)%
			\put(0,0){%
				\begin{diagram}{60}{45}
				
				\Node{1}{30}{0}
				\Node{2}{15}{15}
				\Node{3}{30}{15}
				\Node{4}{45}{15}
				\Node{5}{15}{30}
				\Node{6}{30}{30}
				\Node{7}{45}{30}
				\Node{8}{30}{45}

				\Edge{1}{2}
				\Edge{1}{3}
				\Edge{1}{4}
				\Edge{2}{5}
				\Edge{2}{6}
				\Edge{3}{5}
				\Edge{3}{7}
				\Edge{4}{6}
				\Edge{4}{7}
				\Edge{5}{8}
				\Edge{6}{8}
				\Edge{7}{8}	
				
				\rightObjbox{3}{3}{-1}{v}
				\rightObjbox{4}{3}{-1}{x}
				\rightObjbox{7}{3}{-1}{y}
				\rightObjbox{6}{3}{-1}{a}
				
				\end{diagram}}
			
			\put(15,15){\ColorNode{red}}
			\put(15,30){\ColorNode{red}}
			\put(30,15){\ColorNode{blue}}
			\put(45,30){\ColorNode{blue}}
			
			\end{picture}}
	\end{minipage}
	\begin{minipage}{0.22\textwidth}
		\centering
		{\unitlength 0.6mm
			\begin{picture}(60,45)%
			\put(0,0){%
				\begin{diagram}{60}{45}
				
				\Node{1}{30}{0}
				\Node{2}{15}{27}
				\Node{3}{45}{22}
				\Node{4}{45}{15}
				\Node{5}{30}{45}
				\Node{6}{30}{30}
				
				\Edge{1}{4}
				\Edge{4}{3}
				\Edge{3}{2}
				\Edge{2}{6}
				\Edge{6}{5}
				
				
				\end{diagram}}
			
			\put(15,27){\ColorNode{red}}
			\put(45,22){\ColorNode{blue}}
			
			\end{picture}}
\end{minipage}
\ \\
\ \\
\ \\
\ \\
\begin{minipage}{0.22\textwidth}
		\centering
		{\unitlength 0.6mm
			\begin{picture}(60,45)%
			\put(0,0){%
				\begin{diagram}{60}{45}
				
				\Node{1}{30}{0}
				\Node{2}{15}{15}
				\Node{3}{30}{15}
				\Node{4}{45}{15}
				\Node{5}{15}{30}
				\Node{6}{30}{30}
				\Node{7}{45}{30}
				\Node{8}{30}{45}

				\Edge{1}{2}
				\Edge{1}{3}
				\Edge{1}{4}
				\Edge{2}{5}
				\Edge{2}{6}
				\Edge{3}{5}
				\Edge{3}{7}
				\Edge{4}{6}
				\Edge{4}{7}
				\Edge{5}{8}
				\Edge{6}{8}
				\Edge{7}{8}	
				
				\rightObjbox{3}{3}{-1}{v}
				\rightObjbox{4}{3}{-1}{x}
				\rightObjbox{7}{3}{-1}{y}
				\rightObjbox{6}{3}{-1}{a}
				
				\end{diagram}}
			
			\put(15,15){\ColorNode{red}}
			\put(15,30){\ColorNode{red}}
			\put(45,15){\ColorNode{blue}}
			\put(30,0){\ColorNode{blue}}
			
			\end{picture}}
	\end{minipage}
	\begin{minipage}{0.22\textwidth}
		\centering
		{\unitlength 0.6mm
			\begin{picture}(60,45)%
			\put(0,0){%
				\begin{diagram}{60}{45}
				
				\Node{1}{30}{0}
				\Node{2}{15}{22.5}
				\Node{3}{30}{15}
				\Node{4}{45}{30}
				\Node{5}{30}{45}						
				\Node{6}{30}{30}

				\Edge{1}{3}
				\Edge{3}{2}
				\Edge{2}{6}
				\Edge{3}{4}
				\Edge{6}{5}
				\Edge{4}{5}
				
				
				\end{diagram}}
			
			\put(15,22.5){\ColorNode{red}}
			\put(30,0){\ColorNode{blue}}
			
			\end{picture}}
\end{minipage}	
\quad\quad\quad\quad
\begin{minipage}{0.22\textwidth}
		\centering
	{\unitlength 0.6mm
		\begin{picture}(60,45)%
		\end{picture}}
	\end{minipage}
\begin{minipage}{0.22\textwidth}
	\centering
	{\unitlength 0.6mm
		\begin{picture}(60,45)%
		\end{picture}}´
\end{minipage}
\caption{For the lattice $\UL$ with a nested interval $\US$ (red),
	the diagrams show all different possibilities to purify the interval relation $\theta_{\US}$ by adding an additional interval $\US_{new}$(blue) with $|\US|=2$.
}
\label{fig:purification}		
\end{figure}

Since our goal was to find a factorization to generate a lattice that can be obtained by a surjective, order-preserving mapping, the lattice-generating interval relation fulfills this purpose.
%
%
%
%
However, the lattice operations $\bigvee$ and $\bigwedge$ are not generally preserved by $\varphi$, i.e., $\varphi$ is, in general, not a lattice homomorphism. 
For example consider the two concepts $a$ and $b$ in~\cref{fig:running_exp_teil2} (left).
Their infimum in the original lattice is $c$.
Nevertheless, in the factor set (right) $[a]\theta\wedge [b]\theta\ne [c]\theta$.
However, it is possible to determine where the lattice operations $\bigvee$ and $\bigwedge$ are not preserved after a factorization using an interval relation:

\begin{lemma}
	Let $\UL$ be a finite lattice and $\theta=\theta_{\US}$ a lattice-generating interval relation on $\UL$.
	Let $u,v,w,x,y,z\in \UL$ with $u\wedge v =w$ and $x\vee y =z$.
	Then:
	\begin{itemize}
		\item[i)] $u\in \US\cup S^{\uparrowtail}, v,w\in S^{\downarrowtail}, v\not= w \Rightarrow [u]\theta\wedge [v]\theta \not = [w]\theta $
		\item[ii)] $x\in \US\cup S^{\downarrowtail}, y,z\in S^{\uparrowtail}, y\not= z \Rightarrow [x]\theta\vee [y]\theta \not = [z]\theta  $
		\item[iii)] $[u]\theta\wedge [v]\theta \not = [w]\theta \Rightarrow u\in \US\cup S^{\uparrowtail}, v\in S^{\downarrowtail}\cup S^{\parallel}, v\not= w$
		\item[iv)] $[x]\theta\vee [y]\theta \not = [z]\theta \Rightarrow x\in \US\cup S^{\downarrowtail}, y\in S^{\uparrowtail}\cup S^{\parallel}, y\not= z$
	\end{itemize}
\end{lemma}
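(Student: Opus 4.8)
The plan is to reduce all four implications to i) and iii) by the order\nobreakdash-duality of the construction, to prove i) directly, and to prove iii) by a case analysis over the partition $\{S,S^{\uparrowtail},S^{\downarrowtail},S^{\parallel}\}$. Note first that, $\US$ being pure, $\UL/\theta$ is a lattice by~\cref{lem:verband}, so every meet and join in the statement exists. For the reduction I would observe that forming $(\UL/\theta,\le_{\theta})$ commutes with passing to the dual lattice $\UL^{\mathrm{op}}$: the underlying set of $\US$ and the block $S^{\parallel}$ are unchanged, $S^{\uparrowtail}$ and $S^{\downarrowtail}$ swap, $x_{\theta}$ and $x^{\theta}$ swap, and $\wedge,\vee$ swap, and feeding this into~\cref{def:faktorization_order_relation} shows that the order induced on $\UL^{\mathrm{op}}/\theta$ is $(\le_{\theta})^{\mathrm{op}}$. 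Hence ii) is the dual of i) and iv) is the dual of iii), and it suffices to treat i) and iii).

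For i): the point is that the hypotheses already force $[v]\theta\le_{\theta}[u]\theta$. If $u\in S^{\uparrowtail}$ this is the second clause of~\cref{def:faktorization_order_relation}, applied to $v\in S^{\downarrowtail}$; if $u\in\US$, then $v\in S^{\downarrowtail}$ supplies an $s\in\US$ with $v<s$, whence $v=v_{\theta}\le s\le[\US]^{\theta}=u^{\theta}$. Therefore $[u]\theta\wedge[v]\theta=[v]\theta$. Since $v$ and $w$ lie outside $S$, their classes are the singletons $\{v\}$ and $\{w\}$, and $v\ne w$ yields $[u]\theta\wedge[v]\theta=[v]\theta\ne[w]\theta$. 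Statement ii) then follows by duality.

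For iii): I would assume $[u]\theta\wedge[v]\theta\ne[w]\theta$ with $w=u\wedge v$; this hypothesis is symmetric in $u$ and $v$, so I may relabel them, and it remains to show $u\in\US\cup S^{\uparrowtail}$, $v\in S^{\downarrowtail}\cup S^{\parallel}$ and $v\ne w$. Running through the unordered pairs of blocks containing $u$ and $v$: in the six cases where $u,v$ lie in a common block, or where $\{u,v\}$ meets $\{\US,S^{\uparrowtail}\}$ or $\{S^{\downarrowtail},S^{\parallel}\}$, I claim $[u]\theta\wedge[v]\theta=[w]\theta$, contradicting the hypothesis. The uniform argument is: $\varphi\colon x\mapsto[x]\theta$ is order-preserving, so $[w]\theta$ is a lower bound of $\{[u]\theta,[v]\theta\}$; one then describes the common lower bounds $[p]\theta$ in $\UL/\theta$ -- using that every comparability of $\le_{\theta}$ not already present in $\le$ runs through $S$, together with the facts that $[\US]_{\theta}$ and $[\US]^{\theta}$ are the bottom and top of $\US$, that $S$ is an interval, and, when both $u,v\in S^{\uparrowtail}$ (resp.\ both in $S^{\downarrowtail}$), that $w=u\wedge v$ still lies in $\US\cup S^{\uparrowtail}$ (resp.\ in $\US\cup S^{\downarrowtail}$) -- and checks $[p]\theta\le_{\theta}[w]\theta$ in each case. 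In the four remaining cases, where $\{u,v\}$ meets $\{\US,S^{\downarrowtail}\}$, $\{\US,S^{\parallel}\}$, $\{S^{\uparrowtail},S^{\downarrowtail}\}$ or $\{S^{\uparrowtail},S^{\parallel}\}$, I relabel so that the element in $\US\cup S^{\uparrowtail}$ is $u$; then $u\in\US\cup S^{\uparrowtail}$ and $v\in S^{\downarrowtail}\cup S^{\parallel}$, as required. Finally $v\ne w$ in all of these: if $v=w=u\wedge v$ then $v\le u$, so $v=v_{\theta}\le u\le u^{\theta}$, hence $[v]\theta\le_{\theta}[u]\theta$ and $[u]\theta\wedge[v]\theta=[v]\theta=[w]\theta$, contradicting the hypothesis. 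Statement iv) is the dual of iii).

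The step I expect to be the main obstacle is the case analysis in iii): for each pair of blocks one must first pin down where $w=u\wedge v$ lands and then carefully enumerate the common lower bounds of $[u]\theta$ and $[v]\theta$ in the collapsed order $\le_{\theta}$. The delicate points are the cases involving $S^{\parallel}$ -- whose elements are incomparable to all of $S$ yet may still meet or join elements of $S$ -- and the need to keep the sub-case $u\in\US$ separate from $u\in S^{\uparrowtail}$, since there it is $[\US]^{\theta}$, not $u$ itself, that governs $\le_{\theta}$. Everything else, including the two duality reductions and the singleton observation for classes outside $S$, is routine.
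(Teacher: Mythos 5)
Your proposal is correct and follows essentially the same route as the paper: i) by establishing $[v]\theta\le_{\theta}[u]\theta$ so that the meet equals $[v]\theta\ne[w]\theta$, ii) and iv) by duality, and iii) by contraposition over the partition $\{S,S^{\uparrowtail},S^{\downarrowtail},S^{\parallel}\}$ (both-upper and both-lower blocks preserve the meet). Your sketch of iii) is, if anything, slightly more careful than the paper's, which dismisses the verification that no new common lower bounds arise in $\le_{\theta}$ with the remark that the order between the three elements is unaffected by the factorization.
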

\begin{proof}
	We show i):
	Because $u\in \US\cup S^{\uparrowtail}$ and $v\in  S^{\downarrowtail}$ we have $[v]\theta \le_{\theta} [u]\theta$.
	Since $w<v$ we have $[w]\theta <_{\theta} [v]\theta$.
	This means $[u]\theta\wedge [v]\theta = [v]\theta \not = [w]\theta$.\\
	ii) can be shown analogously.
	
	We show iii):
	We show the contraposition:
	Assumed $v=w$, we have $v\le u$ and therefore $[u]\theta\wedge [v]\theta = [v]\theta = [w]\theta$.
	Thus, let $v\not = w$.
	In case of $u,v\in \US\cup S^{\uparrowtail}$ we have $w\in \US$ or $w\in S^{\uparrowtail}$.
	If $w\in \US$, we have $[u]\theta\wedge[v]\theta=[S]\theta=[w]\theta$,
	if $w\in S^{\uparrowtail}$ the order between the three elements is not affected by the factorization and $[u]\theta\wedge [v]\theta = [w]\theta$ as well.
	In case of $u,v\in S^{\parallel}\cup S^{\downarrowtail}$ we have $w\in S^{\downarrowtail}$ or $w\in S^{\parallel}$.
	The order between the three elements is not affected by the factorization and $[u]\theta\wedge [v]\theta = [w]\theta$.\\
	iv) can be shown analogously.
\end{proof}

As seen in it previous section, crowns play an essential role in determining whether an interval relation is ordered.
Those substructures can also be used to determine the pureness of an interval (relation) as follows: 

\begin{lemma}
	\label{lem:nested=crown}
	Let $\UL$ be a finite lattice and $\US=[S_{\bot},S_{\top}]$ an interval on $\UL$. Then the following equivalence holds:
	\[\US \text{ is nested in } \UL \Leftrightarrow S_{\bot} \text{ and } S_{\top} \text{ are elements of a crown of order $3$ in } \UL\]
\end{lemma}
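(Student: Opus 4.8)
The plan is to prove both implications, leaning on the corollary immediately preceding the lemma, which states that $\US=[S_\bot,S_\top]$ is nested exactly when there exist $x,y\in S^{\parallel}$, $a\in S^{\uparrowtail}$, $v\in S^{\downarrowtail}$ with $y=x\vee v$, $x=y\wedge a$, $y\not\le a$ and $v\not\le x$. The bridge in both directions is the six-element set $\{S_\bot,S_\top,a,x,y,v\}$, which I will show is --- for a suitable matching --- a crown of order $3$; concretely the correspondence with a crown $\{x_1,y_1,x_2,y_2,x_3,y_3\}$ will be $x_1=S_\bot$, $y_1=S_\top$, $y_2=a$, $x_2=x$, $y_3=y$, $x_3=v$.

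For ``$\Rightarrow$'': assuming $\US$ nested, I take $x,y,a,v$ from the corollary and first read off the ``positive'' comparabilities: $S_\bot<S_\top$ (it is an interval), $x<y$ and $v<y$ (from $y=x\vee v$), $x<a$ (from $x=y\wedge a$), $S_\bot<a$ (from $a\in S^{\uparrowtail}$: some $s\in S$ satisfies $s<a$, and $S_\bot\le s$), and $v<S_\top$ (dually from $v\in S^{\downarrowtail}$); strictness holds each time because $a,x,y,v$ lie in pairwise distinct blocks of the partition $\{S,S^{\uparrowtail},S^{\downarrowtail},S^{\parallel}\}$. The substantive step is to check that there are \emph{no other} comparabilities. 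Four of them are immediate: $S_\bot$ and $S_\top$ are incomparable to $x$ and to $y$ since $x,y\in S^{\parallel}$; $a\ge y$ together with $x=y\wedge a$ would give $x=a$ (impossible, different blocks), so with $y\not\le a$ we get $y\parallel a$; and dually $x\parallel v$. The last three pairs $\{S_\top,a\}$, $\{S_\bot,v\}$, $\{a,v\}$ are the heart of the matter and all succumb to one trick: whenever $v\le a$ would hold --- which is what $a\le v$, $S_\top<a$, or $v<S_\bot$ each force, using $v<S_\top$ and $S_\bot<a$ --- one obtains $v\le y\wedge a=x$, contradicting $v\not\le x$; and $a<S_\top$ or $S_\bot<v$ would place $a\in[S_\bot,S_\top]=S$ or $v\in S$, contradicting $a\in S^{\uparrowtail}$ or $v\in S^{\downarrowtail}$. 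In particular $S_\bot\ne S_\top$, so the six elements are pairwise distinct and carry exactly the comparabilities $S_\bot<S_\top$, $S_\bot<a$, $x<a$, $x<y$, $v<y$, $v<S_\top$ --- i.e.\ they form a crown of order $3$ containing $S_\bot$ and $S_\top$.

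For ``$\Leftarrow$'': suppose $S_\bot,S_\top$ belong to a crown $C=\{x_1,y_1,x_2,y_2,x_3,y_3\}$. Since $S_\bot\le S_\top$ are comparable elements of $C$, $S_\bot$ must be one of the minimal elements and $S_\top$ one of the maximal elements of $C$, and after cyclically relabelling the indices I may assume $S_\bot=x_1$ and then either $S_\top=y_1$ or $S_\top=y_2$. Rather than re-deriving the corollary's elements (which would require join/meet identities such as $x_2\vee x_3=y_3$ that need not persist in $\UL$), I would directly exhibit a Penrose crown of intervals with $\US$ as its first member, using ``triangular'' intervals: if $S_\top=y_1$, take $\US=[x_1,y_1]$, $\US_2:=[x_2,y_2]$, $\US_3:=[x_3,y_3]$; if $S_\top=y_2$, take $\US=[x_1,y_2]$, $\US_2:=[x_3,y_1]$, $\US_3:=[x_2,y_3]$. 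Pairwise disjointness drops out of the remark that $w\in[x_i,y_i]\cap[x_j,y_j]$ would force $x_i\le w\le y_j$, hence $x_i\le y_j$, a relation absent from $C$ for the index pairs occurring here; and each condition $[\US_p]_\theta\in\US_{p+1}^{\downarrowtail}$ holds because the bottom of $\US_p$ is a minimal element $x_i$ of $C$ that lies strictly below the top $y_{i'}$ of $\US_{p+1}$ by a crown relation while not lying inside $\US_{p+1}$ (otherwise $x_{i''}\le x_i$ for two distinct minimal elements). Hence $\US$ is nested.

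I expect the main obstacle to be the middle of the ``$\Rightarrow$'' argument: verifying that the six elements produced by the corollary carry \emph{precisely} the crown comparabilities, the delicate cases being $S_\top\parallel a$, $S_\bot\parallel v$ and $a\parallel v$. That is exactly the point where the lattice identities $y=x\vee v$ and $x=y\wedge a$ must be played against the non-relations $y\not\le a$ and $v\not\le x$, and it is also where one sees that a nested interval necessarily has $S_\bot<S_\top$.
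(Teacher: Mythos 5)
Your proof is correct, and the two directions relate to the paper's argument in different ways. For ``$\Rightarrow$'' you do essentially what the paper does --- take $a,v,x,y$ from the preceding corollary and show that $\{S_\bot,S_\top,a,v,x,y\}$ carries exactly the six crown comparabilities --- except that the paper merely asserts that these are ``the only relations'' while you actually verify it; your reduction of the three delicate pairs $\{S_\top,a\}$, $\{S_\bot,v\}$, $\{a,v\}$ either to $v\le a$ (whence $v\le y\wedge a=x$, contradiction) or to forcing $a$ or $v$ into $S=[S_\bot,S_\top]$ supplies precisely the content the paper omits. For ``$\Leftarrow$'' you take a genuinely different route. The paper reads the corollary's data off the crown and claims $x_2=y_2\wedge y_3$ and $y_3=x_2\vee x_3$; but a crown is only a \emph{suborder}, so the meet $y_2\wedge y_3$ computed in $\UL$ is merely $\ge x_2$ and these identities do not follow from the crown relations. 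Your alternative --- exhibiting a Penrose crown of three pairwise disjoint intervals $[x_i,y_j]$ directly, with disjointness coming from the absent relations $x_i\le y_j$ and the cyclic $\downarrowtail$-conditions from the present ones --- targets the definition of ``nested'' itself and sidesteps that issue entirely; it also treats the configuration $S_\top=y_2$ explicitly rather than by an implicit relabelling. What the paper's route buys is brevity and symmetry with the corollary; what yours buys is that every step actually holds in an arbitrary finite lattice.

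Two slips of the pen, neither of which affects the argument: in ruling out comparability of $y$ and $a$ you wrote that $a\ge y$ gives $x=a$; it is $a\le y$ that gives $x=y\wedge a=a$ (impossible, different blocks), whereas $a\ge y$ gives $x=y\wedge a=y$, which contradicts $v\le y$ and $v\not\le x$. And $a\le v$ does not ``force'' $v\le a$; that case is instead excluded because $a\le v<S_\top$ together with $S_\bot<a$ places $a$ in $S$ --- a mechanism you already invoke in the same sentence.
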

\begin{proof}
	"$\Leftarrow$":
	Let $A_3\le \UL$ be a crown consisting of 
	$x_1=S_{\bot},x_2,x_3,y_1=S_{\top},y_2$ and $y_3$.
	By definition of a crown we have $y_2\in S^{\uparrowtail}, x_3\in S^{\downarrowtail}$ and $x_2,y_3\in S^{\parallel}$ with $x_2=y_2\wedge y_3, y_3=x_2\vee x_3, y_3\not\le y_2$ and $x_3\not\le x_2$.
	Thus, $\US$ is nested in $\UL$.\\
	"$\Rightarrow$":
	Let $\US$ be nested with the elements $a,v,x,y$.
	Then we have the relations $x\le a$, $S_{\bot}\le a$, $v\le y$, $v\le S_{\top}$, $x\le y$ and $S_{\bot}\le S_{\top}$ as the only relations between those elements.
	Thus the set $s,v,x,y,S_{\bot},S_{\top}$ is a crown of order $3$ in $\UL$.
\end{proof}

Using this, we can now generalize \cref{lem:crown} to lattice-generating interval relations:

\begin{lemma}
	\label{lem:crown2}
	Let $\UL$ be a lattice and $\theta$ an interval relation on $\UL$.
	If $\UL$ does not contain a crown of order $3$ as a suborder, then $\theta$ is lattice-generating.
\end{lemma}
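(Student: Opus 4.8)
The plan is to derive this from its order-preserving counterpart \cref{lem:crown} by showing that, once $(\UL/\theta,\le_{\theta})$ is known to be a poset, any failure of the lattice law already forces a crown of order $\ge 3$ inside $\UL$. (As in \cref{lem:crown}, I read the hypothesis as: $\UL$ contains no crown of order $l\ge 3$.) First, by \cref{lem:crown}, $\theta$ is order-preserving, so $(\UL/\theta,\le_{\theta})$ is a finite poset; it has least element $[\bigwedge\UL]\theta$ and greatest element $[\bigvee\UL]\theta$, since $(\bigwedge\UL)_{\theta}=\bigwedge\UL\le z^{\theta}$ and $(\bigvee\UL)^{\theta}=\bigvee\UL\ge z_{\theta}$ for every $z\in\UL$. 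Hence it suffices to prove that every two classes have a join. Suppose not, and fix classes $[x]\theta,[y]\theta$ admitting two distinct \emph{minimal} common upper bounds $[a]\theta,[b]\theta$.

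The idea is then to unwind this obstruction into $\UL$. Using the description of $\le_{\theta}$ as the transitive closure of $[x]\theta\le^*[y]\theta:\Leftrightarrow x_{\theta}\le y^{\theta}$, each of the relations $[x]\theta\le_{\theta}[a]\theta$, $[y]\theta\le_{\theta}[a]\theta$, $[x]\theta\le_{\theta}[b]\theta$, $[y]\theta\le_{\theta}[b]\theta$ is witnessed by a ``staircase'' $x_{\theta}\le[\US_{i_1}]^{\theta},\ [\US_{i_1}]_{\theta}\le[\US_{i_2}]^{\theta},\ \dots$ running through some of the intervals $\US_i$ (and one chooses these staircases minimally, so that no interval is traversed twice). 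Collecting the elements $x_\theta,y_\theta,a^\theta,b^\theta$ together with the endpoints $\bigwedge\US_i$ and $\bigvee\US_i$ of the traversed intervals, and forming in $\UL$ whatever joins and meets are needed (possible because $\UL$ is a lattice), one assembles a subposet that is a crown, whose order is governed by the number of intervals involved. This is precisely the mechanism used in the last paragraph of the proof of \cref{lem:verband}: a single interval $\US$ together with the elements $a,v,x,y$ produces a crown of order $3$ -- which is exactly \cref{lem:nested=crown} -- and the present argument carries out the same computation simultaneously around all the intervals the staircases pass through. The order of the extracted crown is at least $3$ because $\UL$ is itself a lattice, so the obstruction cannot ``close up'' with fewer turns: a two-turn cycle would force $\bigwedge\US_i\vee\bigwedge\US_j$ to lie in both $\US_i$ and $\US_j$, contradicting their disjointness (the same observation that makes every interval relation with at most two non-trivial intervals safe).

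The step I expect to be the genuine obstacle is verifying that the assembled configuration is an \emph{induced} crown, i.e.\ that no spurious comparability appears among its elements in $\UL$. For comparabilities between the ``bad'' elements $x_\theta,y_\theta,a^\theta,b^\theta$ this is handled by pushing forward along the order-preserving map $\varphi\colon\UL\to\UL/\theta$: any such comparability would identify or link two vertices of the crown in $\UL/\theta$, contradicting the incomparability and minimality of $[a]\theta$ and $[b]\theta$. The new work is to control the interval endpoints -- one must show, by a case distinction on whether each of the assembled elements lies in $S_i^{\uparrowtail}$, $S_i^{\downarrowtail}$ or $S_i^{\parallel}$ relative to each traversed $\US_i$, that the various $\bigwedge\US_i$ and $\bigvee\US_i$ are incomparable to one another and to the bad elements exactly in the crown pattern; this uses the disjointness of the intervals together with condition (ii) of \cref{lem:beidedef}. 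Carrying out this bookkeeping cleanly is where the real effort lies; granting it, $\UL$ contains a crown of order $\ge 3$, contrary to hypothesis, so $\UL/\theta$ is a lattice and $\theta$ is lattice-generating.
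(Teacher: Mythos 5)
Your proposal is a strategy outline rather than a proof: the decisive step --- extracting an induced crown of order $3$ in $\UL$ from a pair of classes with two incomparable minimal upper bounds in $\UL/\theta$ --- is exactly the content of the lemma, and you explicitly defer it (``granting it\dots''). For a $1$-generated relation this extraction is done in the paper by the contraposition in \cref{lem:verband} together with \cref{lem:nested=crown}, and it already requires a full case analysis on the position of the relevant elements relative to the single interval $\US$. In the general case the four ``staircases'' witnessing $[x]\theta,[y]\theta\le_{\theta}[a]\theta,[b]\theta$ may each traverse several intervals, and you give no invariant that prevents chords among the assembled elements $x_\theta,y_\theta,a^\theta,b^\theta,[\US_i]_\theta,[\US_i]^\theta$; without that, the configuration need not be an induced crown, and nothing in the sketch pins down its order. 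Note also that \cref{fig:pure_intervals_gegenbeispiel} shows the obstruction can arise from the \emph{interaction} of two individually pure intervals, so the reduction to the single-interval mechanism of \cref{lem:verband} is not automatic.

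Two further points would need repair even at the level of the outline. First, the ``two-turn cycle'' argument you invoke to force the crown order up to $3$ is the antisymmetry (Penrose-crown) argument; a failure of joins is not a cycle of the relation $\le^*$, so that argument does not apply to the join obstruction, and you would need a separate reason why the extracted crown cannot degenerate. Second, the hypothesis of the lemma excludes crowns of order exactly $3$, while your construction only promises a crown of order $\ge 3$; you either need to produce order exactly $3$ (as \cref{lem:nested=crown} does in the $1$-generated case) or justify reading the hypothesis as excluding all orders $\ge 3$. Since the paper itself states this lemma without proof as a consequence of \cref{lem:nested=crown}, your sketch at least correctly locates where the real work lies, but as written it does not close the gap.
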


\begin{corollary}
	Let $\UL$ be a lattice and $\theta$ an interval relation on $\UL$.
	If $\UL$ is planar
	$\theta$ is an order-preserving interval relation on $\UL$.
\end{corollary}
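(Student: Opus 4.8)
The plan is to derive this from \cref{lem:crown2} (the lattice-generating analogue of \cref{lem:crown}): its hypothesis is met as soon as $\UL$ contains no crown of order $3$ as a suborder, so the entire task reduces to the implication ``$\UL$ planar $\Rightarrow$ $\UL$ has no crown of order $\ge 3$ as a suborder''. I would establish this along the chain $\UL$ planar $\Rightarrow$ $\UL$ dismantlable $\Rightarrow$ $\UL$ contains no crown, in which the second step is exactly the fact already recorded just above the corollary, namely that a dismantlable lattice never contains a crown~\cite{kelly1974crowns}, and the first step is the classical structure result of Kelly and Rival that every finite planar lattice is dismantlable (a finite planar lattice with more than two elements has a doubly irreducible element, removing it leaves a smaller planar lattice, and iterating terminates). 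Feeding ``no crown of order $3$ as a suborder'' into \cref{lem:crown2} then gives that $\theta$ is a lattice-generating interval relation on $\UL$.

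It remains to pass from ``lattice-generating'' to the wording ``order-preserving'' of the statement, which is a fortiori true: by \cref{def:lattice_interval_relation} the factor $(\UL/\theta,\le_{\theta})$ is a lattice, hence in particular an ordered set, which is precisely the requirement of \cref{def:ordered_new_relation}. If one wants only the stated assertion directly, the very same conclusion ``$\UL$ has no crown of order $l\ge 3$ as a suborder'' can instead be plugged into \cref{lem:crown}, bypassing lattice-generation altogether; I would nonetheless present the lattice-generating route, since it records the stronger fact at no extra cost and parallels the way \cref{lem:crown2} strengthens \cref{lem:crown}.

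The main obstacle will not be any computation but the appeal to outside theory: the step ``planar $\Rightarrow$ no crown as a suborder'' is not developed inside this paper and rests on the Kelly--Rival theory of crowns, fences and dismantlable lattices; everything after it is bookkeeping. One minor point to watch is that the earlier results speak of crowns ``of order $l\ge 3$'' while \cref{lem:crown2} only asks for the absence of a crown of order exactly $3$; since dismantlability forbids crowns of every order $\ge 3$ at once, both phrasings of the hypothesis hold and no edge case arises. It is also worth remarking, to pre-empt confusion, that this corollary does not merely repeat the earlier planar corollary for ordered sets but refines it: for planar \emph{lattices} the factor $(\UL/\theta,\le_\theta)$ is not just an ordered set but again a lattice.
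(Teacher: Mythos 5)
Your proposal is correct and follows exactly the route the paper intends (the paper states this corollary without proof, immediately after \cref{lem:crown2} and after having recorded that dismantlable lattices contain no crowns): planar $\Rightarrow$ dismantlable $\Rightarrow$ no crown of order $\ge 3$ $\Rightarrow$ the hypothesis of \cref{lem:crown}/\cref{lem:crown2} is met. Your side remark is also apt that the argument actually yields the stronger conclusion that $\theta$ is lattice-generating, even though the corollary as printed only claims order preservation.
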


Using a lattice-generating interval relation, a lattice arises by factorization so that exactly the chosen intervals of the original lattice implode.
In the following, we investigate this approach on the context side.

\subsection{Context Construction for Interval Factorization}
\label{subsec:kontext}


Since every finite lattice  is isomorphic to a concept lattice $\UBB(\K)$ of a formal context $\K$ and formal contexts tend to be smaller than their corresponding concept lattices,
we will now discuss the corresponding context constructions of our approach.

\begin{definition}
Let $\K=(G,M,I)$ be a formal context and $\{\US_1,\US_2,\dots,\US_k\}$ a set of pairwise disjoint intervals of $\underline{\BB}(\K)$ with $\US_i=[(A_i,B_i),(C_i,D_i)]$.
The incidence relation 
\[I_{\US_1,\dots,\US_k}\coloneqq I\cup\bigcup_{i=1}^k (C_i\times B_i)\] 
is the \emph{enrichment of relation $I$ by the intervals $\US_1,\dots,\US_k$}.
We call the context $\K_{\US}\coloneqq(G,M,I_{\US})$ the \emph{enrichment} of context $\K$ by the interval $\US$.
\end{definition}

Note that simultaneously considering a set of intervals and the iterative enrichment of a relation generally does not end in the same context.
An example is presented in~\cref{fig:iteratives_enrichment_gegenbeispiel}.

%

\begin{figure}[t]
	\centering
	\begin{minipage}{0.49\textwidth}
		\centering
		{\unitlength 0.6mm
			\begin{picture}(60,45)%
			\put(0,0){%
				\begin{diagram}{60}{45}
				
				\Node{1}{30}{0}
				\Node{2}{0}{15}
				\Node{3}{15}{15}
				\Node{4}{30}{15}
				\Node{5}{45}{15}
				\Node{6}{60}{15}
				\Node{7}{0}{30}
				\Node{8}{15}{30}
				\Node{9}{30}{30}
				\Node{10}{45}{30}
				\Node{11}{60}{30}
				\Node{12}{30}{45}
				
				\Edge{1}{2}
				\Edge{1}{3}
				\Edge{1}{4}
				\Edge{1}{5}
				\Edge{1}{6}
				\Edge{2}{7}
				\Edge{2}{8}
				\Edge{3}{8}
				\Edge{3}{9}
				\Edge{4}{9}
				\Edge{4}{10}
				\Edge{5}{10}
				\Edge{5}{11}				
				\Edge{6}{11}
				\Edge{6}{7}
				\Edge{7}{12}
				\Edge{8}{12}
				\Edge{9}{12}
				\Edge{10}{12}
				\Edge{11}{12}
				
				\NoDots\leftObjbox{2}{3}{-1}{6}
				\NoDots\rightObjbox{3}{3}{-1}{7}
				\NoDots\rightObjbox{4}{3}{-1}{8}
				\NoDots\rightObjbox{5}{3}{-1}{9}
				\NoDots\rightObjbox{6}{3}{-1}{10}
				\NoDots\leftObjbox{7}{3}{-1}{1}
				\NoDots\rightObjbox{8}{3}{-1}{2}
				\NoDots\rightObjbox{9}{3}{-1}{3}
				\NoDots\rightObjbox{10}{3}{-1}{4}
				\NoDots\rightObjbox{11}{3}{-1}{5}
				\NoDots\rightObjbox{12}{3}{-3}{11}
				\NoDots\rightObjbox{1}{3}{-1}{12}
				
				\end{diagram}}
			
			\put(30,15){\ColorNode{red}}
			\put(30,30){\ColorNode{red}}
			\put(15,30){\ColorNode{blue}}
			\put(15,15){\ColorNode{blue}}
			
			\end{picture}}			
	\end{minipage}
	\begin{minipage}{0.49\textwidth}
		\centering
		{\unitlength 0.6mm
			\begin{picture}(60,45)%
			\put(0,0){%
				\begin{diagram}{60}{45}
				
				\Node{1}{30}{0}
				\Node{2}{0}{15}
				\Node{3}{12}{19}
				\Node{4}{33}{26}
				\Node{5}{45}{15}
				\Node{6}{60}{15}
				\Node{7}{0}{30}
				\Node{8}{30}{45}
				\Node{9}{60}{30}
				\Node{10}{45}{30}

				\Edge{1}{2}
				\Edge{1}{5}
				\Edge{1}{6}
				\Edge{2}{7}
				\Edge{2}{3}
				\Edge{3}{4}
				\Edge{4}{10}
				\Edge{5}{10}
				\Edge{5}{9}				
				\Edge{6}{9}
				\Edge{6}{7}
				\Edge{7}{8}
				\Edge{10}{8}
				\Edge{9}{8}
				
				\NoDots\leftObjbox{2}{3}{-1}{6}
				\NoDots\rightObjbox{3}{3}{0}{2,7}
				\NoDots\leftObjbox{4}{3}{-4}{3,8}
				\NoDots\rightObjbox{5}{3}{-1}{9}
				\NoDots\rightObjbox{6}{3}{-1}{10}
				\NoDots\leftObjbox{7}{3}{-1}{1}
				\NoDots\rightObjbox{10}{3}{-1}{4}
				\NoDots\rightObjbox{9}{3}{-1}{5}
				\NoDots\rightObjbox{8}{3}{-3}{11}
				\NoDots\rightObjbox{1}{3}{-1}{12}
				
				\end{diagram}}
			
			\put(12,19){\ColorNode{blue}}
			\put(33,26){\ColorNode{red}}
			
			\end{picture}}			
	\end{minipage}\\
	\vspace*{2em}
	\begin{minipage}{0.48\textwidth}
		\centering
		\setlength{\tabcolsep}{4pt}
		\begin{cxt}%
			\att{1}%
			\att{2}%
			\att{3}%
			\att{4}%
			\att{5}%
			\att{6}%
			\att{7}%
			\att{8}%
			\att{9}%
			\att{10}%
			\att{11}%
			\att{12}%
			\obj{x.........x.}{1} %
			\obj{.xB...B...x.}{2} %
			\obj{..xB...B..x.}{3} %
			\obj{...x......x.}{4} %
			\obj{....x.....x.}{5} %
			\obj{xxB..xB...x.}{6} %
			\obj{.xxB..xB..x.}{7} %
			\obj{..xx...x..x.}{8} %
			\obj{...xx...x.x.}{9} %
			\obj{x...x....xx.}{10} %
			\obj{..........x.}{11} %
			\obj{xxxxxxxxxxxx}{12} %
		\end{cxt}
	\end{minipage}
	\begin{minipage}{0.48\textwidth}
		\centering
		\setlength{\tabcolsep}{4pt}
		\begin{cxt}%
			\att{1}%
			\att{2}%
			\att{3}%
			\att{4}%
			\att{5}%
			\att{6}%
			\att{7}%
			\att{8}%
			\att{9}%
			\att{10}%
			\att{11}%
			\att{12}%
			\obj{x.........x.}{1} %
			\obj{.xBB..BB..x.}{2} %
			\obj{..xB...B..x.}{3} %
			\obj{...x......x.}{4} %
			\obj{....x.....x.}{5} %
			\obj{xxBB.xBB..x.}{6} %
			\obj{.xxB..xB..x.}{7} %
			\obj{..xx...x..x.}{8} %
			\obj{...xx...x.x.}{9} %
			\obj{x...x....xx.}{10} %
			\obj{..........x.}{11} %
			\obj{xxxxxxxxxxxx}{12} %
		\end{cxt}
	\end{minipage}
	\caption{A lattice $\UL$ with a two pure intervals $\US_1$, $\US_2$(red and blue highlighted) (top left) and the lattice $\UL/\theta_{\US_1,\US_2}$(top right).
		For $\K=\GMI$, the generic formal context of $\UL$, the enrichments  
		$(G,M,I_{\US_{1},\US_{2}})$ (bottom left) and $(G,M,(I_{\US_{1}})_{\US_{2}})= (G,M,(I_{\US_{2}})_{\US_{1}})$  (bottom right) are given.
		The incidences that are added by the enrichment are depicted by $\bullet$.
	}
	\label{fig:iteratives_enrichment_gegenbeispiel}	
\end{figure}

Therefore, we present the following statements just for single intervals.
We present a one-to-one correspondence between the set of the enrichments of the incidence relation by an interval for a generic formal context $\K$ and the interval relations $\theta_{\US}$ on $\underline{\BB}(\K)$ in the following lemma. 
Note that 
the statement does not hold for reduced formal concepts in general.
This fact is discussed in~\cref{lem:necessary} in more detail.

\begin{lemma}
	\label{lem:one-to-one}
	Let $\UL$ be a lattice and $\K=(G,M,I)$ its generic formal context.
	If $\theta _{\US}$ is an interval relation on $\UL$, then $I_{\US}=I\cup (C\times B)$ is an enrichment of $I$ by the interval $\US=[(A,B),(C,D)]$.
	Conversely, for every enrichment $I_{\US}$ of $I$ by an interval $\US$ the relation $\theta_{\US}$ is an interval relation on $\underline{\BB}(\K)$.
\end{lemma}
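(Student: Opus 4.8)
The plan is to pull everything back along the canonical identification $\UL\cong\UBB(\K)$, which is available because $\K=(\UL,\UL,\le)$ is the \emph{generic} context of $\UL$, and then to observe that under this identification an interval relation $\theta_\US$ and an enrichment $I_\US$ are parametrised by the same datum: a single interval. Concretely, I would first record the shape of $\K$: for $x\in\UL$ one computes $x^I=[x)$ and $x^{II}=(x]$ (and dually on the attribute side), so $\iota\colon\UL\to\UBB(\K)$, $x\mapsto((x],[x))$, is the order isomorphism recalled in~\cref{sec:foundations}, under which every concept of $\K$ is generated both by a single object and by a single attribute. It follows that a subset of $\UL$ is an interval $[u,v]$ of $\UL$ if and only if its $\iota$-image is the interval $[\,((u],[u)),\,((v],[v))\,]$ of $\UBB(\K)$; writing such an interval as $\US=[(A,B),(C,D)]$ in the lemma's notation, we read off $A=(u]$, $B=[u)$, $C=(v]$, $D=[v)$.

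For the first implication: if $\theta_\US$ is an interval relation on $\UL$, then by~\cref{def:new_relation} $\US$ is an interval of $\UL$; transported by $\iota$ it is an interval $[(A,B),(C,D)]$ of $\UBB(\K)$, hence $C$ is an extent and $B$ an intent of $\K$, and $I_\US=I\cup(C\times B)$ is by definition the enrichment of $I$ by the single interval $\US$ -- the pairwise-disjointness requirement in the definition of the enrichment being vacuous for one interval. For the converse I would argue dually: if $I_\US$ is an enrichment of $I$ by an interval $\US$, then by definition $\US=[(A,B),(C,D)]$ is an interval of $\UBB(\K)$, so via $\iota^{-1}$ it is an interval of the lattice $\UL$; the singleton family $\{\US\}$ is trivially pairwise disjoint, so $\theta_\US$ is a $1$-generated interval relation on $\UL\cong\UBB(\K)$ by~\cref{def:new_relation}. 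Since in both directions the underlying interval $\US$ is untouched, the two constructions match the interval relations on $\UBB(\K)$ with the enrichments of $I$ by an interval, which is the asserted correspondence.

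I do not expect a genuine obstacle here: the content is entirely in keeping~\cref{def:new_relation} and the definition of the enrichment aligned along $\UL\cong\UBB(\K)$. The only point that really needs care -- and the reason the statement is restricted to generic contexts -- is to see where genericity is used: it is precisely the fact that the bottom concept $(A,B)$ of $\US$ is generated by a single object and the top concept $(C,D)$ by a single attribute that makes $C\times B$ the right block of incidences to add. For a reduced (non-generic) context this breaks down, and enriching by $C\times B$ may implode strictly more than $\US$, in the same way that block relations did in~\cref{sec:tolleranz}; that phenomenon is analysed in~\cref{lem:necessary}.
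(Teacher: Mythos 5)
Your proposal is correct and follows essentially the same route as the paper, whose own proof is just the one-line observation that for a single interval the statement is immediate from the two definitions; you have simply spelled out the identification $\UL\cong\UBB(\K)$, $x\mapsto((x],[x))$, under which both constructions are parametrised by the same interval. The extra remark about where genericity enters is a reasonable gloss consistent with the paper's discussion around the subsequent example, though it is not needed for the proof itself.
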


\begin{proof}
	Since $\US$ is a single interval, the statement follows directly from the definitions of enrichments and interval relations. 
\end{proof}

In a generic formal context we can also determine weather an interval is pure or nested in the corresponding concept lattice.

\begin{lemma}
	Let $\UL$ be a finite lattice and $\K=\GMI$ its generic formal context.
	Let $\theta_{\US}$ be an interval relation on $\UL$.
	$\US$ is nested interval in $\UL$ if and only if there exists $\mathbb{S}=[H,N]\le \K$ with $\mathbb{S}$ being a Boolean subcontext of dimension $3$, $[\US]^{\theta}\in N$, $[\US]_{\theta}\in H$ and $[\US]_{\theta}I[\US]^{\theta}$.
\end{lemma}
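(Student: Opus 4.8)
The plan is to reduce to \cref{lem:nested=crown} and then to rephrase ``lying in a crown of order~$3$'' as a statement about the generic context. Identify $\UL$ with $\UBB(\K)$, so that the endpoints $S_\bot\coloneqq[\US]_\theta=\bigwedge\US$ and $S_\top\coloneqq[\US]^\theta=\bigvee\US$ of $\US$ are elements of $\UL$, hence simultaneously objects and attributes of $\K=(\UL,\UL,\le)$, with the incidence being the order. By \cref{lem:nested=crown}, $\US$ is nested iff $S_\bot$ and $S_\top$ are elements of a crown of order~$3$ of $\UL$, and since $S_\bot\le S_\top$ while in a crown the only strict comparabilities run from a minimal (``bottom'') element to a maximal (``top'') one, this forces $S_\bot$ to be a bottom and $S_\top$ a top of that crown. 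Hence it suffices to prove: $S_\bot$ is a bottom and $S_\top$ a top of some crown of order~$3$ in $\UL$ if and only if there is a Boolean subcontext $[H,N]\le\K$ of dimension~$3$ with $S_\bot\in H$, $S_\top\in N$ and $(S_\bot,S_\top)\in I$. (The last condition is automatic from $S_\bot\le S_\top$; inside a dimension-$3$ scale it merely records that $S_\bot$ and $S_\top$ occupy distinct coordinates.)

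For ``$\Rightarrow$'', take a crown $\{x_1,x_2,x_3,y_1,y_2,y_3\}$ of $\UL$ with bottoms $x_i$ and tops $y_j$, where $x_i<y_i$ and $x_i<y_{i+1}$ (indices mod~$3$), and $S_\bot\in\{x_1,x_2,x_3\}$, $S_\top\in\{y_1,y_2,y_3\}$. Set $H\coloneqq\{x_1,x_2,x_3\}\subseteq\UL=G$ and $N\coloneqq\{y_1,y_2,y_3\}\subseteq\UL=M$. In $\K$ the restricted incidence is $x_i\le y_j\iff j\in\{i,i+1\}$, so mapping the object $x_i$ to $i$ and the attribute $y_j$ to $j+1$ (mod~$3$) gives an isomorphism of $[H,N]$ onto $(\{1,2,3\},\{1,2,3\},\ne)$. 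Thus $[H,N]$ is a Boolean subcontext of dimension~$3$, it contains $S_\bot$ as an object and $S_\top$ as an attribute, and $(S_\bot,S_\top)\in I$ since $S_\bot\le S_\top$.

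For ``$\Leftarrow$'', let $[H,N]$ be a Boolean subcontext of dimension~$3$ with $S_\bot\in H$, $S_\top\in N$ and $(S_\bot,S_\top)\in I$, and write $H=\{h_1,h_2,h_3\}$, $N=\{n_1,n_2,n_3\}$ with $h_i\le n_j\iff i\ne j$. The heart of the argument is that these six elements form a crown of order~$3$ with the $h_i$ as bottoms and the $n_j$ as tops; I would establish this from transitivity of $\le$ alone. If $h_i\le h_{i'}$ with $i\ne i'$, then $h_i\le h_{i'}\le n_i$ (as $i'\ne i$), contradicting $h_i\not\le n_i$, so the $h_i$ are pairwise incomparable; likewise $n_j\le n_{j'}$ with $j\ne j'$ would give $h_{j'}\le n_j\le n_{j'}$, contradicting $h_{j'}\not\le n_{j'}$. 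If $n_j\le h_i$, choosing the index $l\notin\{i,j\}$ yields $h_l\le n_j\le h_i$ with $l\ne i$, again impossible. Finally $H\cap N=\emptyset$: $z=h_i=n_j$ forces $z\not\le z$ when $i=j$, and when $i\ne j$ forces $h_l\le n_j=h_i$ for the remaining index $l\ne i$. Hence the comparabilities among $H\cup N$ are exactly $\{h_i\le n_j:i\ne j\}$, which (setting $x_i\coloneqq h_i$ and $y_j\coloneqq n_{j+1}$) is a crown of order~$3$ containing $S_\bot\in H$ and $S_\top\in N$; so $\US$ is nested by \cref{lem:nested=crown}.

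The step I expect to be the main obstacle is the ``$\Leftarrow$'' direction, specifically the verification that the six elements actually form a crown rather than carrying extra comparabilities. For a general reduced context a subcontext isomorphic to $(\{1,2,3\},\{1,2,3\},\ne)$ can fail this, as the object and attribute generators may acquire additional comparabilities in the concept lattice; the statement genuinely relies on $\K$ being the \emph{generic} context, where objects, attributes and incidence are literally the elements and the order of $\UL$, which is precisely what makes the transitivity arguments go through. Everything else is bookkeeping with the $\ne$/crown index pattern.
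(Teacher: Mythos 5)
Your proof is correct and takes essentially the same route as the paper's: reduce to \cref{lem:nested=crown} and then translate ``$S_\bot$ and $S_\top$ lie in a crown of order $3$'' into ``there is a dimension-$3$ Boolean subcontext of the generic context containing $S_\bot$ as an object and $S_\top$ as an attribute''. You simply make explicit the transitivity checks (no extra comparabilities among the six elements, $H\cap N=\emptyset$, and the role of the incidence condition) that the paper's one-paragraph argument leaves implicit.
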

\begin{proof}
	Follows directly from \cref{lem:nested=crown}: 
	A lattice contains a Boolean suborder of dimension $3$ if and only if it contains a crown of order $3$ as a suborder.
	Due to the definition of a generic formal context, there is a Boolean subcontext $[\{a,b,c\},\{x,y,z\}]$ of dimension $3$ in $\K$ precisely if there is a Boolean Suoborder of dimension $3$ and therefore a crown of order $3$ as suborder in the corresponding lattice so that $a,b,c$ are the lower elements of the crown and $x,y,z$ are the upper elements of the crown.
\end{proof}

In~\cref{lem:one-to-one} we considered $\K$ to be generic.
Otherwise, additional reducible
concepts may vanish even if they are not in the chosen interval, as presented in the following example.

\begin{example}
	In~\cref{fig:running_exp_teil3} two contexts that generate (up to isomorphism) the same concept lattice are represented. Both have enrichments by the interval $\US=[(4'',4'),(G,G')]$.
	Consider context $\widetilde{\K}=(\widetilde{G},\widetilde{M},\widetilde{I})$ presented in~\cref{fig:running_exp_teil3} (bottom left) and its corresponding formal context $\UBB(\widetilde{\K})=\underline{\BB}(\K)$ (top right). 
	The enrichment of $\widetilde{I}$ by the red highlighted interval $\US=[(4'',4'),(13'',13')]$ is given by adding the $\bullet$ to $\widetilde{I}$. 
	$\underline{\BB}(\K_{\US})$ is presented in the figure (bottom right).
	It consists of the new generated interval (red) and the remaining concepts in the original order, i.e.
	$\UBB(\widetilde{\K}_{\US})\cong \UBB(\widetilde{\K})/\theta_{\US}$.
	If $\K$ (top middle), the standard context of $\underline{\BB}(\K)$, is considered, the enrichment of the incidence relation by the same interval results in the smaller lattice (top right).
	Since in $\underline{\BB}(\K)$, e.g., the concepts $(5'',5')$ and $(6'',6')$ only differ in an attribute set that is totally included in $\US$, their difference vanishes by the enrichment if no attribute $o$ or $l$ persists to differ them.	
\end{example}

\begin{figure}[tp]
	\centering
	\begin{minipage}{0.35\textwidth}
		\centering
		
		{\unitlength 0.6mm
			
			\begin{picture}(60,80)%
			\put(0,0){%
				
				\begin{diagram}{60}{80}
				\Node{1}{40}{0}
				\Node{2}{15}{15}
				\Node{3}{65}{15}
				\Node{4}{40}{30}
				\Node{5}{28}{55}
				\Node{6}{52}{55}
				\Node{7}{40}{80}
				\Node{8}{3}{40}
				\Node{9}{77}{40}
				\Node{10}{15}{65}
				\Node{11}{65}{65}		
				\Node{12}{27}{40}
				\Node{13}{53}{40}
				\Node{14}{56}{68}
				\Node{15}{60}{73}
				
				\Edge{1}{2}
				\Edge{1}{3}
				\Edge{3}{4}
				\Edge{2}{4}
				\Edge{4}{5}
				\Edge{4}{6}
				\Edge{7}{5}
				\Edge{7}{6}
				\Edge{8}{5}
				\Edge{8}{2}
				\Edge{9}{6}
				\Edge{9}{3}
				\Edge{10}{7}
				\Edge{8}{10}
				\Edge{9}{11}
				\Edge{12}{2}
				\Edge{12}{6}
				\Edge{12}{10}
				\Edge{13}{3}
				\Edge{13}{5}
				\Edge{13}{11}
				\Edge{11}{14}
				\Edge{11}{15}
				\Edge{14}{7}
				\Edge{15}{7}

				\leftObjbox{2}{3}{1}{4}
				\rightObjbox{3}{3}{1}{3}
				\leftObjbox{8}{3}{1}{2}
				\rightObjbox{9}{3}{1}{5}
				\leftObjbox{12}{3}{1}{1}
				\rightObjbox{13}{3}{1}{6}
				\NoDots\leftObjbox{14}{3}{3}{7}
				\rightObjbox{15}{3}{0}{8}
				
				\leftAttbox{10}{3}{1}{a}
				\leftAttbox{5}{3}{1}{c}
				\leftAttbox{6}{3}{1}{b}
				\NoDots\leftAttbox{14}{3}{-2}{d}
				\rightAttbox{15}{3}{1}{e}
				
				{\renewcommand{\AttributeLabelStyle}{\color{blue}}
					\rightAttbox{1}{2}{2}{f}
					\leftAttbox{2}{3}{1}{g}
					\rightAttbox{3}{3}{1}{h}
					\leftAttbox{4}{3}{1}{i}
					\rightAttbox{7}{3}{1}{j}
					\leftAttbox{8}{3}{1}{k}
					\rightAttbox{9}{3}{1}{l}
					\rightAttbox{11}{3}{1}{m}
					\leftAttbox{12}{3}{1}{n}
					\rightAttbox{13}{3}{1}{o}}
				
				{\renewcommand{\ObjectLabelStyle}{\color{blue}}
					\rightObjbox{1}{2}{1}{9}
					\leftObjbox{4}{0}{4}{10}
					\rightObjbox{5}{3}{-2}{11}
					\rightObjbox{6}{2}{-2}{12}
					\leftObjbox{7}{3}{-2}{13}
					\rightObjbox{10}{3}{1}{14}
					\rightObjbox{11}{3}{1}{15}}
				
				\end{diagram}}

			\put(15,65){\ColorNode{red}}		
			\put(15,15){\ColorNode{red}}
			\put(28,55){\ColorNode{red}}
			\put(52,55){\ColorNode{red}}
			\put(40,80){\ColorNode{red}}
			\put(27,40){\ColorNode{red}}
			\put(40,30){\ColorNode{red}}
			\put(3,40){\ColorNode{red}}	
			\end{picture}}			
	\end{minipage}
	\begin{minipage}{0.4\textwidth}
		\centering
		\setlength{\tabcolsep}{4pt}
		\begin{cxt}%
			\att{a}%
			\att{b}%
			\att{c}%
			\att{d}%
			\att{e}%
			\obj{xxB..}{1} %
			\obj{xBx..}{2} %
			\obj{Bxxxx}{3} %
			\obj{xxx..}{4} %
			\obj{BxBxx}{5} %
			\obj{BBxxx}{6} %
			\obj{BBBx.}{7} %
			\obj{BBB.x}{8} %
		\end{cxt}
	\end{minipage}
	\begin{minipage}{0.2\textwidth}
		\centering
		
		{\unitlength 0.6mm
			
			\begin{picture}(40,40)%
			\put(0,0){%
				
				\begin{diagram}{40}{40}
				\Node{1}{0}{40}
				\Node{2}{25}{25}
				\Node{3}{16}{28}
				\Node{4}{20}{33}
				
				\Edge{2}{3}
				\Edge{2}{4}
				\Edge{3}{1}
				\Edge{4}{1}

				\leftObjbox{1}{3}{1}{1,2,4}
				\rightObjbox{2}{3}{1}{3,5,6}
				\NoDots\leftObjbox{3}{3}{3}{7}
				\rightObjbox{4}{3}{0}{8}
				\leftAttbox{1}{3}{1}{a,b,c}
				\NoDots\leftAttbox{3}{3}{-2}{d}
				\rightAttbox{4}{3}{1}{e}
				
				\end{diagram}}
			
			\put(0,40){\ColorNode{red}}	
			
			\end{picture}}			
	\end{minipage}
	%
	%
	%
	%
	%
	%
	%
	%
	%
	%
	%
	%
	%
	%
	%
	%
	%
	\ \\
	\ \\
	\ \\
	\ \\
	\ \\
	\begin{minipage}{0.65\textwidth}
		\centering
		\setlength{\tabcolsep}{4pt}
		\begin{cxt}%
			\att{a}%
			\att{b}%
			\att{c}%
			\att{d}%
			\att{e}%
			\att{f}%
			\att{g}%
			\att{h}%
			\att{i}%
			\att{j}%
			\att{k}%
			\att{l}%
			\att{m}%
			\att{n}%
			\att{o}%
			\obj{xxB...B.BxB..x.}{1} %
			\obj{xBx...B.Bxx..B.}{2} %
			\obj{Bxxxx.BxxxBxxBx}{3} %
			\obj{xxx...x.xxx..x.}{4} %
			\obj{BxBxx.B.BxBxxB.}{5} %
			\obj{BBxxx.B.BxB.xBx}{6} %
			\obj{BBBx..B.BxB..B.}{7} %
			\obj{BBB.x.B.BxB..B.}{8} %
			\obj{xxxxxxxxxxxxxxx}{9} %
			\obj{Bxx...B.xxB..B.}{10} %
			\obj{BBx...B.BxB..B.}{11} %
			\obj{BxB...B.BxB..B.}{12} %
			\obj{BBB...B.BxB..B.}{13} %
			\obj{xBB...B.BxB..B.}{14} %
			\obj{BBBxx.B.BxB.xB.}{15} %
		\end{cxt}
	\end{minipage}
	\begin{minipage}{0.3\textwidth}
		\flushright
		
		{\unitlength 0.6mm
			
			\begin{picture}(50,80)%
			\put(0,0){%
				
				\begin{diagram}{50}{80}
				\Node{1}{10}{0}
				\Node{2}{35}{15}
				\Node{3}{10}{80}
				\Node{4}{47}{40}
				\Node{5}{35}{65}		
				\Node{6}{23}{40}
				\Node{7}{26}{68}
				\Node{8}{30}{73}

				\Edge{1}{2}
				\Edge{4}{2}
				\Edge{4}{5}
				\Edge{6}{2}
				\Edge{6}{5}
				\Edge{5}{7}
				\Edge{5}{8}
				\Edge{7}{3}
				\Edge{8}{3}

				\rightAttbox{1}{2}{2}{f}
				\rightObjbox{1}{2}{1}{9}
				\rightAttbox{2}{3}{1}{h}
				\rightObjbox{2}{3}{1}{3}
				\leftObjbox{3}{3}{-2}{1,2,3,\\10,11,\\12,13,14}
				\rightAttbox{3}{3}{1}{a,b,c,g,i,\\j,k,n}
				\rightAttbox{4}{3}{1}{l}
				\rightObjbox{4}{3}{1}{5}
				\rightAttbox{5}{3}{1}{m}
				\rightObjbox{5}{3}{1}{15}
				\rightAttbox{6}{3}{1}{o}
				\rightObjbox{6}{3}{1}{6}
				\NoDots\leftAttbox{7}{3}{-2}{d}
				\rightAttbox{8}{3}{1}{e}
				\NoDots\leftObjbox{7}{3}{3}{7}
				\rightObjbox{8}{3}{0}{8}
				
				\end{diagram}}
			
			\put(10,80){\ColorNode{red}}	
			
			\end{picture}}			
	\end{minipage}
	\caption{
		A (concept) lattice $\UBB(\K)=\UBB(\widetilde{\K})$ with a pure interval $\US$ highlighted red (top left). The object and attribute labels highlighted in blue are reducible.
		The corresponding reduced formal context $\K=(G,M,I)$ (top middle) and an corresponding
		generic formal context $\widetilde{\K}=(\widetilde{G},\widetilde{M},\widetilde{I})$ (bottom left) have additional incidences marked by $\bullet$, that represent the enrichments of the contexts by $\US$.
		$\UBB(\K_{\US})$ is displayed on the top right, and $\UBB(\widetilde{\K}_{\US})$ on the bottom right.
	}
	\label{fig:running_exp_teil3}
\end{figure}

This illustrates that the lattice, based on the enrichment of an incidence by an interval of a corresponding context, depends on the selection of the context.
It is clear that using the generic formal context leads to an upper bound for the size of the arising lattice, since all concepts are generated by a single object and a single attribute.
In the following, we determine the objects and attributes necessary for generating a lattice isomorphic to the one obtained using the generic formal context. 

\begin{definition}
	Let $\theta_{\US}$ be an interval relation on the lattice $\underline{L}$ with $\US\le\UL$ an interval.
	We call $x\in\UL$ \emph{$\theta$-$\bigvee$-irreducible} if either $x\in J(\UL)$ or for $x\not\in \US$ if $|\{y\in \UL\setminus\US\mid y \text{ is an lower neighbour of } x\}|\le 1$ holds. 
	Analogous
	we call an element $x\in\UL$ \emph{$\theta$-$\bigwedge$-irreducible} if either $x\in M(\UL)$ or for $x\not\in \US$ if $|\{y\in \UL\setminus\US\mid y \text{ is an upper neighbour of } x\}|\le 1$ holds.
\end{definition}
	
\begin{definition}
	Let $\theta_{\US}$ be an interval relation on the lattice $\underline{L}$ with $\US\le\UL$ an interval.
	Let $U=\{x\in \UL\mid x \ is\ \theta-\bigwedge-\text{irreducible}\}$ and $V=\{x\in \UL\mid x \ is\  \theta-\bigvee-irreducible\}$.
	We call a context $\K=(H,N,\le)$ with $V\subseteq H \subseteq L$ and $U\subseteq N \subseteq L$ a \emph{$\theta$-irreducible} context of $\UL$.
\end{definition}

\begin{lemma}
	\label{lem:necessary}
	Let $\UL$ be a lattice, $\theta _{\US}$ an interval relation on $\UL$, $\K=\GMI$ the generic context of $\UL$, and $\K=(H,N,\le)$ a $\theta$-irreducible context of $\UL$.
	Then $\underline{\BB}(H,N,\le)\cong\underline{\BB}(G,M,I_{\US})$ holds.
	
\end{lemma}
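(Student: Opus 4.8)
My plan is to read the $\theta$-irreducible context $(H,N,\le)$ as the subcontext $[H,N]$ of the \emph{enriched generic context} $\K_{\US}=(G,M,I_{\US})$ and to show it carries the same concept lattice as $\K_{\US}$ itself. By the density form of the basic/reduction theorem of FCA~\cite{fca-book}, this holds as soon as the object concepts $\{\gamma_{\K_{\US}}(h)\mid h\in H\}$ are $\bigvee$-dense and the attribute concepts $\{\mu_{\K_{\US}}(n)\mid n\in N\}$ are $\bigwedge$-dense in $\UBB(\K_{\US})$; since $V\subseteq H$ and $U\subseteq N$ and every $\bigvee$-irreducible (resp.\ $\bigwedge$-irreducible) concept is an object (resp.\ attribute) concept, it suffices to prove that every object of $\K_{\US}$ lying outside $V$ is reducible in $\K_{\US}$ and, dually, that every attribute outside $U$ is reducible. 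Lattice duality lets me treat only the objects.

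Next I would make the incidence explicit. Write $u=[\US]_{\theta}$ and $v=[\US]^{\theta}$, so that $S=[u,v]$, the top concept of $\US$ has extent $(v]$ and its bottom has intent $[u)$ in the generic context, hence $I_{\US}={\le}\cup\bigl((v]\times[u)\bigr)$ and the object row of $g$ in $\K_{\US}$ is $g^{I_{\US}}=[g)$ when $g\not\le v$ and $g^{I_{\US}}=[g)\cup[u)$ when $g\le v$. Two elementary facts then do most of the work: the identity $\bigcap_{i}\bigl([g_{i})\cup B_{0}\bigr)=[\bigvee_{i}g_{i})\cup B_{0}$ for a fixed attribute set $B_{0}$; and the observation that in a finite lattice two distinct lower neighbours of an element are incomparable, so the join of two or more distinct lower neighbours of $x$ lies strictly above each of them (and is therefore not itself a lower neighbour of $x$). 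With these, fix an object $x\notin V$; since $J(\UL)\subseteq V$ we have $x\notin J(\UL)$, so $x=\bigvee Z$ for $Z$ the set of lower neighbours of $x$ in $\UL$, $|Z|\ge2$. If $x\in S$ then $g^{I_{\US}}=[u)$ for every $g\in S$, so $x$ has the same row as $u$ (and, if $x=u$, as $\{y\mid y<u\}$), hence is reducible. If $x\in S^{\downarrowtail}$ then every $z\in Z$ satisfies $z<x\le v$, so by the identity $Z^{I_{\US}}=[x)\cup[u)=x^{I_{\US}}$ while $x\notin Z$, so $x$ is reducible.

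The delicate case, where the clause defining $V$ really enters, is $x\in S^{\uparrowtail}$ (the case $x\in S^{\parallel}$ runs along the same lines). Here $x^{I_{\US}}=[x)$, but almost every element below $x$ carries the extra block $[u)$; choosing $X=\{y\mid y<x\}$ one gets, via the identity, $X^{I_{\US}}=[x)\cup[u\vee q)$ with $q=\bigvee\{y<x\mid y\not\le v\}$, so one has to show $u\vee q=x$. First one checks that an element of $S^{\uparrowtail}$ has no lower neighbour in $S^{\downarrowtail}$, so the lower neighbours of $x$ outside $S$ are exactly the $z\in Z$ with $z\not\le v$ and $q$ is their join; since $x\notin V$ and $x\notin S$, by definition there are at least two such, whence by the lower-neighbour fact $q$ lies strictly above each of them and is not a lower neighbour of $x$. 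Now if $u\vee q<x$, then $u\vee q$ lies below some lower neighbour $z'$ of $x$ with $z'\ge q$, forcing $z'\not\le v$, hence $z'\le q$, hence $z'=q$ -- contradicting that $q$ is not a lower neighbour. So $u\vee q=x$, $X^{I_{\US}}=[x)=x^{I_{\US}}$, and $x$ is reducible in $\K_{\US}$. This exhausts $\UL\setminus V$; the dual argument (ideals for filters, $u$ and $v$ interchanged, upper neighbours for lower) handles $\UL\setminus U$, and the density theorem of~\cite{fca-book} then yields $\UBB(H,N,\le)\cong\UBB(G,M,I_{\US})$. I expect the $S^{\uparrowtail}$/$S^{\parallel}$ analysis to be the main obstacle: it is the only place the ``at most one lower neighbour outside $S$'' clause is used, and it needs the combinatorial lemma on lower neighbours to exclude the degenerate configuration in which the relevant join collapses onto a single cover; the other regions reduce at once to the displayed identity and the collapse of $S$.
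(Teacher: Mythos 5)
Your overall strategy---read $(H,N,\le)$ as a subcontext of the enriched context $\K_{\US}$, show that every object outside $V$ (dually, every attribute outside $U$) is reducible in $\K_{\US}$, and conclude by density---is sound, and it is in fact more concrete than the paper's own proof, which argues via lower neighbours in the factor set $\UL/\theta$ rather than computing in $\UBB(\K_{\US})$ itself. Your row computations and the cases $x\in \US$, $x\in S^{\downarrowtail}$ and $x\in S^{\uparrowtail}$ are correct. The gap is the assertion that $x\in S^{\parallel}$ ``runs along the same lines'': the structural fact that carries the $S^{\uparrowtail}$ case---no lower neighbour of $x$ lies in $S^{\downarrowtail}$, so every lower neighbour outside $\US$ satisfies $z\not\le v$, two of them already join to $x$, and hence $q=x$---fails for $x\in S^{\parallel}$. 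Such an $x$ may have exactly two lower neighbours $z_0\in S^{\downarrowtail}$ (so $z_0\le v$ and its row carries the extra block $[u)$) and $z_1\not\le v$; then $q=z_1$, your identity gives $X^{I_{\US}}=[x)\cup[u\vee z_1)$, and the required inequality $u\vee z_1\ge x$ amounts to $u\vee z_1\ge z_0$, which need not hold. Concretely, take the eight-element lattice of \cref{fig:kein_verband} with atoms $p_1,p_2,p_3$ and coatoms the pairwise joins, and $\US=[p_1,\,p_1\vee p_2]$. The element $x=p_2\vee p_3\in S^{\parallel}$ has lower neighbours $p_2\in S^{\downarrowtail}$ and $p_3\in S^{\parallel}$, both outside $\US$, so $x\notin V$; yet $x^{I_{\US}}=\{x,1\}$, the only other objects whose rows contain $x$ are $0,p_2,p_3$, and their common derivation is $\{p_1\vee p_3,\;x,\;1\}\supsetneq\{x,1\}$. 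Hence $x$ is an irreducible object of $\K_{\US}$, and the claimed isomorphism fails for $H=V$, $N=U$.

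Note what this counterexample is: the configuration that defeats your argument ($x,z_1\in S^{\parallel}$, $z_0\in S^{\downarrowtail}$, $x=z_0\vee z_1$, $z_1=x\wedge(u\vee z_1)$ with $u\vee z_1\in S^{\uparrowtail}$ and $x\not\le u\vee z_1$) is exactly a witness that $\US$ is nested in the sense of \cref{def:new_relation2}. So the $S^{\parallel}$ case can be repaired only under the additional hypothesis that $\US$ is pure: then either at least two lower neighbours of $x$ are $\not\le v$ (whence $q=x$), or the unique lower neighbour $z_0\le v$ satisfies $z_0\le u\vee z_1$ by non-nestedness, which restores $u\vee q\ge x$. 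Neither the statement of the lemma nor your proof invokes purity, so as written the argument (and indeed the lemma) does not go through; you need to add the purity hypothesis and the corresponding non-nestedness step, which is precisely where the combinatorics of $S^{\parallel}$ genuinely differs from $S^{\uparrowtail}$.
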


\begin{proof}
	We show that every object $g\in G$ with $g\not\in H$ is reducible in $(G,M,I_{\US})$.
	If $g\not\in H$, we have $c=(g'',g')$ is not $\theta$-$\bigvee$-irreducible in $\UL\cong \UBB(\K)$.
	Let $c_1,\dots,c_l\not\in\US$ with $l\ge 2$ be the lower neighbors of $c$ in $\underline{\BB}(\K)$.
	Since the original order relation is preserved by the factorization, $[c_1]\theta,\dots,[c_l]\theta$ are lower neighbors of $[c]\theta$ in $\underline{\BB}(\K)/\theta$.
	Therefore $g$ is reducible in $(G,M,I_{\US})$.
	Analogous $\theta$-$\bigwedge$-reducible elements are unnecessary for the attribute set.
\end{proof}

It follows that not the whole generic context has to be considered in the following but only the context containing all $\theta$-$\bigvee$-irreducible elements as the object set and all $\theta$-$\bigwedge$-irreducible elements as the attribute set.
E.g. the concept $(15'',15')=(m',m'')$ in~\cref{fig:running_exp_teil3} is neither $\theta$-$\bigvee$-irreducible nor $\theta$-$\bigwedge$-irreducible.
Therefore, object $15$ and attribute $m$ have no impact on the factor set.

\begin{lemma}
	\label{lem:lattice-context_zusammenhang}
	Let $\UL$ be a finite lattice, $\K=\GMI$ a $\theta$-irreducible context of $\UL$, and
	$\US\le\UL$ an interval.
	Then:
	\begin{itemize}
		\item[i)] $\US$ is pure $\Leftrightarrow$ $\underline{\BB}(\K_{\US})\cong \UL/\theta_{\US}$
		\item[ii)] $\US$ is nested $\Leftrightarrow$ $\underline{\BB}(\K_{\US}) \not\cong \UL/\theta_{\US}$ and $\underline{\BB}(\K_{\US})$ is the Dedekind–MacNeille completion of $\UL/\theta_{\US}$.
	\end{itemize}
\end{lemma}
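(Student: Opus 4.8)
The plan is to reduce to the enriched \emph{generic} context and then identify it, up to clarification, with the generic context of the quotient poset $(\UL/\theta_{\US},\le_{\theta})$; both claims then fall out of the classical fact that the concept lattice of the context $(Q,Q,\le_Q)$ of a finite poset $(Q,\le_Q)$ is its Dedekind--MacNeille completion. Concretely, by \cref{lem:necessary} the lattice $\underline{\BB}(\K_{\US})$ is the same whether $\K$ is the given $\theta$-irreducible context or the generic context of $\UL$, so I would assume from the start that $\K=(\UL,\UL,\le)$. Identifying each concept of the generic context with the element of $\UL$ generating it, write $\US=[s_{\bot},s_{\top}]$; then $(A,B)=((s_{\bot}],[s_{\bot}))$, $(C,D)=((s_{\top}],[s_{\top}))$, and the enriched relation is $I_{\US}={\le}\cup\bigl((s_{\top}]\times[s_{\bot})\bigr)$.

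The heart of the argument is the equivalence
\[ g\mathbin{I_{\US}}m\ \Longleftrightarrow\ [g]\theta\le_{\theta}[m]\theta\qquad\text{for all }g,m\in\UL. \]
The left-hand side unfolds to ``$g\le m$, or $g\le s_{\top}$ and $m\ge s_{\bot}$'', and the right-hand side unfolds, by \cref{def:faktorization_order_relation}, to ``$g_{\theta}\le m^{\theta}$, or $g\in S^{\downarrowtail}$ and $m\in S^{\uparrowtail}$''. I would verify both implications by a case distinction according to whether $g$ and $m$ lie in $\US$, using $\US\cup S^{\downarrowtail}=(s_{\top}]$ and $\US\cup S^{\uparrowtail}=[s_{\bot})$ (so $S^{\parallel}$ meets neither $(s_{\top}]$ nor $[s_{\bot})$), together with $g_{\theta}=g$, $m^{\theta}=m$ for elements outside $\US$ and $g_{\theta}=s_{\bot}$, $m^{\theta}=s_{\top}$ for elements inside $\US$. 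This is elementary, but it is where essentially all of the work lies.

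Granting the equivalence, the surjection $\pi\colon\UL\to\UL/\theta_{\US}$, $x\mapsto[x]\theta$, applied simultaneously on the object and the attribute side, preserves and reflects the incidence of $(\UL,\UL,I_{\US})$ relative to the context $(\UL/\theta_{\US},\UL/\theta_{\US},\le_{\theta})$. Hence any two objects (resp.\ attributes) in the same $\theta$-class have identical rows (resp.\ columns), so clarifying them away turns $(\UL,\UL,I_{\US})$ into $(\UL/\theta_{\US},\UL/\theta_{\US},\le_{\theta})$ while leaving the concept lattice unchanged. Since $\theta_{\US}$ is $1$-generated, $\le_{\theta}$ is a partial order (\cref{lem:ordnung}), and therefore $\underline{\BB}(\K_{\US})\cong\underline{\BB}(\UL/\theta_{\US},\UL/\theta_{\US},\le_{\theta})$ is the Dedekind--MacNeille completion of $(\UL/\theta_{\US},\le_{\theta})$.

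It then remains to read off i) and ii). By \cref{lem:verband}, $\US$ is pure iff $\theta_{\US}$ is lattice-generating iff $(\UL/\theta_{\US},\le_{\theta})$ is a lattice; a finite poset is a lattice exactly when it equals its own Dedekind--MacNeille completion, whereas a concept lattice is always a complete lattice. Combining this with the previous paragraph gives $\underline{\BB}(\K_{\US})\cong\UL/\theta_{\US}$ iff $\US$ is pure, which is i); and since $\underline{\BB}(\K_{\US})$ is \emph{always} the Dedekind--MacNeille completion of $\UL/\theta_{\US}$, part ii) follows because ``$\US$ nested'' is equivalent to ``not pure'', hence by i) to ``$\underline{\BB}(\K_{\US})\not\cong\UL/\theta_{\US}$'' (the reverse implication in ii) being immediate, as its right-hand side already contains this inequivalence). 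The main obstacle is the incidence identity of the second paragraph: getting the four-region case analysis ($\US$, $S^{\uparrowtail}$, $S^{\downarrowtail}$, $S^{\parallel}$) exactly right, in particular the disjointness of $S^{\parallel}$ from $(s_{\top}]$ and $[s_{\bot})$. A minor point is to cite, rather than reprove, the reduction through \cref{lem:necessary}, the invariance of the concept lattice under clarification, and the identification of $\underline{\BB}(Q,Q,\le_Q)$ with the Dedekind--MacNeille completion of a finite poset.
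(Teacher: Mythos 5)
Your proposal is correct and follows essentially the same route as the paper: reduce to the generic context, identify the enriched incidence $I_{\US}$ with the quotient order $\le_{\theta}$ via the case analysis over $\US$, $S^{\uparrowtail}$, $S^{\downarrowtail}$, $S^{\parallel}$, and invoke the Dedekind--MacNeille characterization of $\underline{\BB}(Q,Q,\le_Q)$ together with \cref{lem:verband}. Your version is somewhat cleaner in that it proves the completion statement once and reads off both i) and ii), but the substance is the same.
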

\begin{proof}
	i): 
	"$\Rightarrow$:" 
	We assume $\K=(\UL,\UL,\le)$ to be the generic context of $\UL$ and thus $\K_S=(\UL,\UL,\le_{\US})$.
	The factor lattice $\UL/\theta_{\US}$ is isomorphic to the concept lattice of its generic context $(\UL/\theta_{\US},\US/\theta_{\US},\le_{\theta})$.
	Via definition of the order $\le_{\theta}$, for two elements $[g]\theta,[m]\theta\in\UL/\theta$ we have $[x]\theta\le_{\theta}[y]\theta$ if and only if $x\le y$ or $x\le [\US]^{\theta}$ and $y\ge [\US]_{\theta}$ in $\UL$.
	Considering $\K_S$, for two elements $x,y\in \L$ we have $x\le_{\US} y$ if and only if $x\le y$ or $x\in \{c\in\UL\mid c\le [\US]^{\theta}\}$ and $y\in \{c\in\UL\mid c\ge [\US]_{\theta}\}$.
	By identifying each element $x\in \UL$ with the equivalence class $[x]\theta\in\UL/\theta$, the isomorphism between $\UBB(\K_{\US})$ and $\UL/\theta$ follows.
%
	
	"$\Leftarrow$:" If $\underline{\BB}(\K_{\US})\cong \UL/\theta_{\US}$ holds, $\UL/\theta_{\US}$ is a lattice and therefore $\US$ is pure.
	
	ii):
	"$\Rightarrow$:" 
	If $\US$ is nested, $\UL/\theta_S$ is an ordered set but no lattice and thus $\underline{\BB}(\K_{\US}) \not\cong \UL/\theta_S$ holds.
	With~\cite[Theorem 4]{fca-book} follows, that the formal context $(\UL/\theta_S,\UL/\theta_S,\le_{\theta})$ corresponds to the Dedekind–MacNeille completion of $\UL/\theta_S$.
	Further, as seen before, the concept lattices corresponding to $\K_{\US}$  and $(\UL/\theta_S,\UL/\theta_S,\le_{\theta})$ are isomorphic.
	
	"$\Leftarrow$:" 
	If  $\underline{\BB}(\K_{\US})$ is the Dedekind–MacNeille completion of $\UL/\theta_S$, $\underline{\BB}(\K_{\US})$ is isomorphic to the concept lattice $\underline{\BB}((\UL/\theta_{\US},\UL/\theta_{\US},\le))$.
	Since $\underline{\BB}(\K_{\US}) \not\cong \UL/\theta_{\US}$ holds, $\UL/\theta_{\US}$ is not a lattice and therefore $\US$ is nested in $\UL$.
\end{proof}

Therefore we can transfer the statement of $\varphi:\UL\rightarrow\UL/\theta,~x\mapsto[x]\theta$ being surjective and order-preserving to formal contexts:

\begin{lemma}
	Let $\UL$ be a lattice, $\US=[(A_S,B_S),(C_S,D_S)]\le\UL$ a pure interval, and $\K=\GMI$ a $\theta$-irreducible context of $\UL$.
	Then the map
	\begin{align*}
	\varphi:\underline{\BB}(\K)&\rightarrow\underline{\BB}((G,M,I_{\US}))\\
	(A,B)&\mapsto \begin{cases}
	(C_S,B_S) & , (A,B)\in\US \\
	(A,B\cup B_S) & , (A,B)\in S^{\downarrowtail} \\
	(A\cup C_S,B)  & , (A,B)\in S^{\uparrowtail}\\
	(A,B) & , (A,B)\in S^{\parallel}
	\end{cases}
	\end{align*}
	is surjective and order preserving.
\end{lemma}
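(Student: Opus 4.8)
The plan is to verify the three assertions in turn: that $\varphi$ is well defined (each of the four prescribed pairs is a concept of $\K_{\US}=(G,M,I_{\US})$), that it is surjective, and that it is order preserving. As a frame, recall that the canonical quotient $\UL\cong\UBB(\K)\to\UL/\theta_{\US}$, $c\mapsto[c]\theta$, is surjective and order preserving (cf.\ \cref{def:faktorization_order_relation} and \cref{lem:ordnung}), and that, since $\US$ is pure, \cref{lem:lattice-context_zusammenhang}(i) identifies $\UBB(\K_{\US})$ with $\UL/\theta_{\US}$; in particular $|\UBB(\K_{\US})|=|\UL/\theta_{\US}|=|\UL|-|\US|+1$. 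Morally $\varphi$ is just this quotient read off on the context side, but I would argue the three properties directly so as not to have to spell out the isomorphism of \cref{lem:lattice-context_zusammenhang}(i).

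The main obstacle is well-definedness. Here one computes $I_{\US}$-derivations using only $I_{\US}=I\cup(C_S\times B_S)$ and, for $(A,B)\in\US$, the bounds $A_S\subseteq A\subseteq C_S$ and $D_S\subseteq B\subseteq B_S$. One first checks $C_S^{I_{\US}}=B_S$ and $B_S^{I_{\US}}=C_S$: the adjoined pairs give $B_S\subseteq C_S^{I_{\US}}$ and $C_S\subseteq B_S^{I_{\US}}$, while for an attribute outside $B_S$ (resp.\ an object outside $C_S$) the derivation is unchanged, so the only possible extra elements, $D_S\setminus B_S$ (resp.\ $A_S\setminus C_S$), are empty; this settles the case $(A,B)\in\US$. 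For $(A,B)\in S^{\downarrowtail}$ one uses that the underlying element $x$ satisfies $x\le[\US]^{\theta}$ but $x\not\ge[\US]_{\theta}$ to prove $A^{I_{\US}}=B\cup B_S$ and $(B\cup B_S)^{I_{\US}}=A$ -- the delicate point being that $C_S$ is \emph{not} swallowed into the new extent, which relies on finding a meet-irreducible $m\ge x$ with $m\not\ge[\US]_{\theta}$, i.e.\ on $M(\UL)\subseteq M$; the case $(A,B)\in S^{\uparrowtail}$ is dual, and for $(A,B)\in S^{\parallel}$ both derivations are unchanged. I expect the bookkeeping across these four regions -- and the care needed to see that the new incidences do not merge concepts they should leave apart -- to be the bulk of the work, and it is here that $\theta$-irreducibility (not mere reducedness) is used: the example around \cref{fig:running_exp_teil3} shows that over an arbitrary reduced context $\UBB(\K_{\US})$ can be strictly smaller than $\UL/\theta_{\US}$, so the four-case description breaks down.

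It then remains to deduce surjectivity and order preservation. For surjectivity I would note that $\varphi$ is constant ($=(C_S,B_S)$) on $\US$, is injective on each of $S^{\downarrowtail}$, $S^{\uparrowtail}$, $S^{\parallel}$ (its value determines $A$, resp.\ $B$, resp.\ the whole pair), does not identify concepts from two distinct regions (an image lying in both $\varphi(S^{\downarrowtail})$ and $\varphi(S^{\uparrowtail})$ would have extent $\supseteq C_S$ and intent $\supseteq B_S$, hence come from $\US$; the other cross-region collisions are excluded the same way), and never equals $(C_S,B_S)$ outside $\US$; thus $|\varphi(\UBB(\K))|=|\UBB(\K)|-|\US|+1=|\UBB(\K_{\US})|$, so $\varphi$ is onto. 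For order preservation, $(A_1,B_1)\le(A_2,B_2)$ in $\UBB(\K)$ means $A_1\subseteq A_2$, and a short case analysis on the regions of the two concepts gives $\varphi(A_1,B_1)\le\varphi(A_2,B_2)$: the only region pairs that could violate the order do not arise between comparable concepts (an $S^{\uparrowtail}$-element cannot lie below an $S^{\downarrowtail}$-, $S$-, or $S^{\parallel}$-element, and neither an $S$- nor an $S^{\parallel}$-element can lie below an $S^{\downarrowtail}$-element), and in every remaining case the required extent inclusion follows from $A_1\subseteq A_2$ and the explicit form of $\varphi$ (e.g.\ $A_1\subseteq A_2\subseteq A_2\cup C_S$ for the pair $S^{\downarrowtail}\to S^{\uparrowtail}$, and $A_1\subseteq C_S$ for $S^{\downarrowtail}\to\US$ because an $S^{\downarrowtail}$-element lies below $[\US]^{\theta}$).
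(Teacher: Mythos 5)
Your route is genuinely different from the paper's: the paper disposes of this lemma in one line, reading $\varphi$ as the canonical quotient map transported through the isomorphism $\UL/\theta_{\US}\cong\UBB(\K_{\US})$ already established in \cref{lem:lattice-context_zusammenhang}(i), whereas you verify the explicit four-case formula from scratch (well-definedness, surjectivity by a counting argument, order preservation by a region-by-region case analysis), using \cref{lem:lattice-context_zusammenhang} only for the cardinality $|\UBB(\K_{\US})|=|\UL|-|\US|+1$. Your surjectivity and order-preservation steps are sound -- the analysis of which region pairs can be comparable is correct (the pairs you omit, e.g.\ an $\US$-element below an $S^{\parallel}$-element, are excluded by the same kind of argument) -- and spelling out well-definedness is more informative than the paper's appeal to the earlier isomorphism.

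There is, however, one concretely wrong step at the hardest point. For $(A,B)\in S^{\downarrowtail}$ you say that $(B\cup B_S)^{I_{\US}}=A$ ``relies on finding a meet-irreducible $m\ge x$ with $m\not\ge[\US]_{\theta}$, i.e.\ on $M(\UL)\subseteq M$.'' That is not the right mechanism, and $M(\UL)\subseteq M$ (i.e.\ reducedness) is provably insufficient: what is needed is that for \emph{every} object $g$ with $g\le[\US]^{\theta}$ and $g\not\le x$ there is an attribute $m\in M$ with $m\ge x$, $m\not\ge[\US]_{\theta}$ and $g\not\le m$, and such an $m$ need not be $\bigwedge$-irreducible in $\UL$. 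Concretely, take the eight-element lattice with covers $\bot\prec x\prec u$, $\bot\prec a\prec s$, $x\prec s$, $u\prec b$, $s\prec b$, $u\prec z$, $b\prec\top$, $z\prec\top$, and the pure interval $\US=[a,b]=\{a,s,b\}$. Then $M(\UL)=\{a,s,b,z\}$, the only $\bigwedge$-irreducible above $x$ and not above $a$ is $z$, and $u\le z$; over the standard context the enrichment therefore swallows $u$ into the extent of $x$, so $(A,B\cup B_S)$ is not a concept there. The attribute that rescues the computation in a $\theta$-irreducible context is $x$ (or $u$) itself, which is only $\theta$-$\bigwedge$-irreducible, not $\bigwedge$-irreducible. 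So the well-definedness must genuinely be argued from $U\subseteq N$ -- for instance by showing that every $\bigwedge$-irreducible of $\UL/\theta_{\US}$ is of the form $[m]\theta$ with $m\in U$, or by reducing to the generic context via \cref{lem:necessary}. Your following sentence, attributing the needed hypothesis to $\theta$-irreducibility rather than reducedness, is the right instinct, but the justification you actually give for the key derivation would not go through as stated.
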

\begin{proof} 
	follows directly from~\cref{lem:lattice-context_zusammenhang}.
\end{proof}

\longversion{
	Note that the approach in~\cref{subsec:kontext} does relate to the one presented in~\cref{sec:contranominal_scales}.
	For a selected Boolean sublattice $\UL\le\US$ in the corresponding induced concept lattice besides $\psi(\US)=[N,H]$ additional incidences are added.
	This is done to ensure that an element $\UL$ smaller than a specific part of $\US$ is also smaller than the newly generated element.
	Therefore, for every object $g\in G$ with $x\subseteq g'$ the incidences in the subcontext $[g,N]$ are added where $x=int(\bigvee coatoms(\US))$.
	The analogous approach is made with attributes and the extent of the infimum of all atoms of $\US$.

}

\section{Discussion and Conclusion}
\label{sec:discussion}

In this work, we presented methods to factorize a lattice so that selected intervals implode.
We started with the investigation of factor lattices generated by complete congruence relations in~\cref{sec:congruenz} and presented an approach to find the finest congruence relation, i.e., the one with as many different congruence classes as possible, to implode a selected interval.
Since every congruence relation is an equivalence relation, the elements of the original lattice can be mapped to the elements of the factor lattice in a unique way.
This property does not hold when using complete tolerance relations, a generalization of the complete congruence relations, instead.
In both cases, the generated factor lattice preserves the original $\bigwedge$- and $\bigvee$-relations.
However, both approaches can result in an over-aggressive reduction of the lattice size, imploding not only the selected interval because these construction results in bigger classes.
To overcome this problem, we introduced another kind of factorization based on newly introduced interval relations in~\cref{sec:new_relation}.
Its equivalence classes include precisely the selected intervals so that
it is possible to implode selected disjoint intervals while preserving all other elements of the original lattice and their order.
As a trade-off, the original $\bigwedge$ and $\bigvee$ operations are no longer preserved in this case.
To ensure that a lattice arises as the factor set, we restrict the approach to single pure intervals.
In this case, by taking advantage of the one-to-one correspondence between interval relations and enrichments of incidence relations by intervals in the corresponding context, we get the corresponding context of the factor set directly.




\longversion{\begin{figure}[t]
		\begin{minipage}{0.48\textwidth}
			\centering
			\begin{cxt}%
				\att{a}%
				\att{b}%
				\att{c}%
				\att{d}%
				\att{e}%
				\att{f}%
				\attc{g}%
				\att{h}%
				\att{i}%
				\attc{j}%
				\att{k}%
				\att{l}%
				\att{m}%
				\att{n}%
				\att{o}%
				\obj{xxB...B.BxB..x.}{1} %
				\obj{xBx...B.Bxx..B.}{2} %
				\obj{Bxxxx.BxxxBxxBx}{3} %
				\objc{xxx...x.xxx..x.}{4} %
				\obj{BxBxx.B.BxBxxB.}{5} %
				\obj{BBxxx.B.BxB.xBx}{6} %
				\obj{BBBx..B.BxB..B.}{7} %
				\obj{BBB.x.B.BxB..B.}{8} %
				\obj{xxxxxxxxxxxxxxx}{9} %
				\obj{Bxx...B.xxB..B.}{10} %
				\obj{BBx...B.BxB..B.}{11} %
				\obj{BxB...B.BxB..B.}{12} %
				\objc{BBB...B.BxB..B.}{13} %
				\obj{xBB...B.BxB..B.}{14} %
				\obj{BBBxx.B.BxB.xB.}{15} %
			\end{cxt}
		\end{minipage}
		\begin{minipage}{0.5\textwidth}
			\centering
			
			{\unitlength 0.6mm
				
				\begin{picture}(80,80)%
				\put(0,0){%
					
					\begin{diagram}{80}{80}
					\Node{1}{40}{0}
					\Node{2}{15}{15}
					\Node{3}{65}{15}
					\Node{4}{40}{30}
					\Node{5}{28}{55}
					\Node{6}{52}{55}
					\Node{7}{40}{80}
					\Node{8}{3}{40}
					\Node{9}{77}{40}
					\Node{10}{15}{65}
					\Node{11}{65}{65}		
					\Node{12}{27}{40}
					\Node{13}{53}{40}
					\Node{14}{56}{68}
					\Node{15}{60}{73}
					
					\Edge{1}{2}
					\Edge{1}{3}
					\Edge{3}{4}
					\Edge{2}{4}
					\Edge{4}{5}
					\Edge{4}{6}
					\Edge{7}{5}
					\Edge{7}{6}
					\Edge{8}{5}
					\Edge{8}{2}
					\Edge{9}{6}
					\Edge{9}{3}
					\Edge{10}{7}
					\Edge{8}{10}
					\Edge{9}{11}
					\Edge{12}{2}
					\Edge{12}{6}
					\Edge{12}{10}
					\Edge{13}{3}
					\Edge{13}{5}
					\Edge{13}{11}
					\Edge{11}{14}
					\Edge{11}{15}
					\Edge{14}{7}
					\Edge{15}{7}

					\leftObjbox{2}{3}{1}{4}
					\rightObjbox{3}{3}{1}{3}
					\leftObjbox{8}{3}{1}{2}
					\rightObjbox{9}{3}{1}{5}
					\leftObjbox{12}{3}{1}{1}
					\rightObjbox{13}{3}{1}{6}
					\NoDots\leftObjbox{14}{3}{3}{7}
					\rightObjbox{15}{3}{0}{8}
					
					\leftAttbox{10}{3}{1}{a}
					\leftAttbox{5}{3}{1}{c}
					\leftAttbox{6}{3}{1}{b}
					\NoDots\leftAttbox{14}{3}{-2}{d}
					\rightAttbox{15}{3}{1}{e}
					
					\rightAttbox{1}{2}{2}{f}
					\leftAttbox{2}{3}{1}{g}
					\rightAttbox{3}{3}{1}{h}
					\leftAttbox{4}{3}{1}{i}
					\rightAttbox{7}{3}{1}{j}
					\leftAttbox{8}{3}{1}{k}
					\rightAttbox{9}{3}{1}{l}
					\rightAttbox{11}{3}{1}{m}
					\leftAttbox{12}{3}{1}{n}
					\rightAttbox{13}{3}{1}{o}
					
					\rightObjbox{1}{2}{1}{9}
					\leftObjbox{4}{0}{4}{10}
					\rightObjbox{5}{3}{-2}{11}
					\rightObjbox{6}{2}{-2}{12}
					\leftObjbox{7}{3}{-2}{13}
					\rightObjbox{10}{3}{1}{14}
					\rightObjbox{11}{3}{1}{15}
					
					\end{diagram}}
				
				\put(15,65){\ColorNode{red}}		
				\put(15,15){\ColorNode{red}}
				\put(28,55){\ColorNode{red}}
				\put(52,55){\ColorNode{red}}
				\put(40,80){\ColorNode{red}}
				\put(27,40){\ColorNode{red}}
				\put(40,30){\ColorNode{red}}
				\put(3,40){\ColorNode{red}}		
				
				\end{picture}}			
		\end{minipage}
		\ \\
		\ \\
		\ \\
		\ \\
		\begin{minipage}{0.25\textwidth}
			\centering
			
			{\unitlength 0.6mm
				
				\begin{picture}(80,80)%
				\put(0,0){%
					
					\begin{diagram}{80}{80}
					\Node{1}{40}{0}
					\Node{2}{65}{15}
					\Node{3}{40}{80}
					\Node{4}{77}{40}
					\Node{5}{65}{65}		
					\Node{6}{53}{40}
					\Node{7}{56}{68}
					\Node{8}{60}{73}

					\Edge{1}{2}
					\Edge{4}{2}
					\Edge{4}{5}
					\Edge{6}{2}
					\Edge{6}{5}
					\Edge{5}{7}
					\Edge{5}{8}
					\Edge{7}{3}
					\Edge{8}{3}

					\rightAttbox{1}{2}{2}{f}
					\rightObjbox{1}{2}{1}{9}
					\rightAttbox{2}{3}{1}{h}
					\rightObjbox{2}{3}{1}{3}
					\leftObjbox{3}{3}{-2}{1,2,3,10,\\ 11,12,13,14}
					\rightAttbox{3}{3}{1}{a,b,c,g,i,\\j,k,n}
					\rightAttbox{4}{3}{1}{l}
					\rightObjbox{4}{3}{1}{5}
					\rightAttbox{5}{3}{1}{m}
					\rightObjbox{5}{3}{1}{15}
					\rightAttbox{6}{3}{1}{o}
					\rightObjbox{6}{3}{1}{6}
					\NoDots\leftAttbox{7}{3}{-2}{d}
					\rightAttbox{8}{3}{1}{e}
					\NoDots\leftObjbox{7}{3}{3}{7}
					\rightObjbox{8}{3}{0}{8}
					
					\end{diagram}}
				
				\put(40,80){\ColorNode{red}}	
				
				\end{picture}}			
		\end{minipage}
		\begin{minipage}{0.2\textwidth}
			\centering
			
			{\unitlength 0.6mm
				
				\begin{picture}(80,80)%
				\put(0,0){%
					
					\begin{diagram}{80}{80}
					\Node{1}{40}{80}
					\Node{2}{65}{65}
					\Node{3}{56}{68}
					\Node{4}{60}{73}
					
					\Edge{2}{3}
					\Edge{2}{4}
					\Edge{3}{1}
					\Edge{4}{1}

					\leftObjbox{1}{3}{1}{1,2,3,4}
					\rightObjbox{2}{3}{1}{3,5,6}
					\NoDots\leftObjbox{3}{3}{3}{7}
					\rightObjbox{4}{3}{0}{8}
					\leftAttbox{1}{3}{1}{a,b,c}
					\NoDots\leftAttbox{3}{3}{-2}{d}
					\rightAttbox{4}{3}{1}{e}
					
					\end{diagram}}
				
				\put(40,80){\ColorNode{red}}	
				
				\end{picture}}			
		\end{minipage}
		\caption{A generic formal context $\K=\GMI$ (top left) with its corresponding concept lattice $\underline{\BB}(\K)$ (top right) with a red highlighted three-dimensional Boolean sublattice.
			The objects (attributes) that generate $c_1=(4'',4')$ and $c_2=(13'',13')$ are highlighted in $\K$.
			Adding the $\bullet$-marked incidences to $\K$ results in $\K^{[c_1,c_2]}$, the contexts collapsing the interval $[c_1,c_2]$.
			The factor set on the bottom left is the corresponding concept lattice $\underline{\BB}(\K^{[c_1,c_2]})$.
			The factor set on the bottom right is the corresponding concept lattice of the context collapsing the interval $[c_1,c_2]$ if $\K$ is reduced at first (see~\cref{fig:running_exp}).}
		\label{running_exp2}
	\end{figure}
	
	\begin{figure}[t]
		\begin{minipage}{0.4\textwidth}
			\centering
			\begin{cxt}%
				\att{a}%
				\att{b}%
				\att{c}%
				\attc{d}%
				\att{e}%
				\att{f}%
				\att{g}%
				\attc{h}%
				\att{i}%
				\att{j}%
				\att{k}%
				\att{l}%
				\att{m}%
				\obj{xxPxP..PxPP.x}{1} %
				\obj{xPxxP..PPxP.x}{2} %
				\objc{xxxxx..xxxx.x}{3} %
				\obj{....xx......x}{4} %
				\obj{Pxxxxx.PPPxxx}{5} %
				\obj{....x.......x}{6} %
				\obj{Pxxxx..PPPx.x}{7} %
				\obj{xxxxxxxxxxxxx}{8} %
				\obj{xPPxP..PPPP.x}{9} %
				\obj{PxPxP..PPPP.x}{10} %
				\obj{PPxxP..PPPP.x}{11} %
				\objc{PPPxP..PPPP.x}{12} %
				\obj{............x}{13} %
			\end{cxt}
		\end{minipage}
		\begin{minipage}{0.58\textwidth}
			\centering
			
			{\unitlength 0.6mm
				
				\begin{picture}(75,75)%
				\put(0,0){%
					
					\begin{diagram}{75}{75}
					\Node{1}{30}{0}
					\Node{2}{15}{15}
					\Node{3}{45}{15}
					\Node{4}{0}{30}
					\Node{5}{15}{30}
					\Node{6}{30}{30}
					\Node{7}{67.5}{37.5}
					\Node{8}{0}{45}
					\Node{9}{15}{45}
					\Node{10}{30}{45}
					\Node{11}{52.5}{52.5}		
					\Node{12}{15}{60}
					\Node{13}{30}{75}
					
					\Edge{1}{2}
					\Edge{1}{3}
					\Edge{2}{4}
					\Edge{2}{5}
					\Edge{2}{6}
					\Edge{3}{6}
					\Edge{3}{7}
					\Edge{4}{8}
					\Edge{4}{9}
					\Edge{5}{8}
					\Edge{5}{10}
					\Edge{6}{9}
					\Edge{6}{10}
					\Edge{6}{11}
					\Edge{7}{11}
					\Edge{8}{12}
					\Edge{9}{12}
					\Edge{10}{12}
					\Edge{12}{13}
					\Edge{11}{13}

					\leftObjbox{2}{3}{1}{3}
					\rightObjbox{3}{3}{1}{5}
					\leftObjbox{4}{3}{1}{1}
					\leftObjbox{5}{3}{1}{2}
					\NoDots\leftObjbox{6}{3}{-2}{7}
					\rightObjbox{7}{3}{1}{4}
					\rightObjbox{1}{3}{1}{8}
					\rightObjbox{8}{3}{1}{9}
					\rightObjbox{9}{3}{1}{10}
					\rightObjbox{10}{3}{1}{11}
					\rightObjbox{11}{3}{1}{6}
					\rightObjbox{12}{3}{1}{12}
					\rightObjbox{13}{3}{1}{13}
					
					\leftAttbox{8}{3}{1}{a}
					\leftAttbox{9}{3}{1}{b}
					\NoDots\leftAttbox{10}{3}{-2}{c}
					\leftAttbox{12}{3}{1}{d}
					\rightAttbox{11}{3}{1}{e}
					\rightAttbox{7}{3}{1}{f}
					\rightAttbox{1}{3}{1}{g}
					\rightAttbox{2}{3}{1}{h}
					\rightAttbox{4}{3}{1}{i}
					\rightAttbox{5}{3}{1}{j}
					\rightAttbox{6}{3}{1}{k}
					\rightAttbox{13}{3}{1}{m}
					\rightAttbox{3}{3}{1}{l}
					
					\end{diagram}}
				
				\put(15,15){\ColorNode{red}}
				\put(0,30){\ColorNode{red}}
				\put(15,30){\ColorNode{red}}
				\put(30,30){\ColorNode{red}}
				\put(0,45){\ColorNode{red}}
				\put(15,45){\ColorNode{red}}
				\put(30,45){\ColorNode{red}}
				\put(15,60){\ColorNode{red}}			
				\end{picture}}			
		\end{minipage}
		\ \\
		
		\ \\
		\begin{minipage}{0.5\textwidth}
			\centering
			
			{\unitlength 0.6mm
				
				\begin{picture}(40,75)%
				\put(0,0){%
					
					\begin{diagram}{40}{75}
					\Node{1}{0}{0}
					\Node{2}{37.5}{37.5}
					\Node{3}{15}{15}
					\Node{4}{22.5}{52.5}
					\Node{5}{0}{75}
					\Node{6}{0}{30}

					\Edge{1}{3}
					\Edge{3}{6}
					\Edge{3}{2}
					\Edge{6}{4}
					\Edge{2}{4}
					\Edge{4}{5}

					\rightObjbox{3}{3}{1}{5}
					\NoDots\leftObjbox{6}{3}{1}{1,2,3,7,\\9,10,11,12}
					\rightObjbox{2}{3}{1}{4}
					\rightObjbox{1}{3}{1}{8}
					\rightObjbox{4}{3}{1}{6}
					\rightObjbox{5}{3}{1}{13}

					\rightAttbox{4}{3}{1}{e}
					\rightAttbox{2}{3}{1}{f}
					\rightAttbox{1}{3}{1}{g}
					\leftAttbox{6}{3}{1}{a,b,c,d,\\h,i,j,k}
					\rightAttbox{5}{3}{1}{m}
					\rightAttbox{3}{3}{1}{l}
					
					\end{diagram}}
				
				\put(0,30){\ColorNode{red}}			
				\end{picture}}			
		\end{minipage}
		\begin{minipage}{0.4\textwidth}
			\centering
			
			{\unitlength 0.6mm
				
				\begin{picture}(75,75)%
				\put(0,0){%
					
					\begin{diagram}{40}{75}
					\Node{1}{0}{30}
					\Node{2}{37.5}{37.5}
					\Node{3}{15}{15}
					\Node{4}{22.5}{52.5}

					\Edge{3}{1}
					\Edge{3}{2}
					\Edge{1}{4}
					\Edge{2}{4}

					\rightObjbox{3}{3}{1}{5}
					\NoDots\leftObjbox{1}{3}{1}{1,2,3}
					\rightObjbox{2}{3}{1}{4}

					\rightAttbox{4}{3}{1}{e}
					\rightAttbox{2}{3}{1}{f}
					\leftAttbox{1}{3}{1}{a,b,c,d}
					
					\end{diagram}}
				
				\put(0,30){\ColorNode{red}}			
				\end{picture}}			
		\end{minipage}
		\caption{Anderes Beispiel für die neue Relation}
		\label{fig:Bsp2_neu}
\end{figure}}

\bibliographystyle{splncs04}
\bibliography{paper.bib}

\begin{thebibliography}{10}
\providecommand{\url}[1]{\texttt{#1}}
\providecommand{\urlprefix}{URL }
\providecommand{\doi}[1]{https://doi.org/#1}

\bibitem{STUMME2002189}
et~al., G.S.: Computing iceberg concept lattices with titanic. Data \&
  Knowledge Engineering  \textbf{42}(2),  189 -- 222 (2002)

\bibitem{berghammer2012ordnungen}
Berghammer, R.: Ordnungen, Verb{\"a}nde und Relationen mit Anwendungen.
  Springer (2012)

\bibitem{sampling}
Boley, M., G{\"{a}}rtner, T., Grosskreutz, H.: Formal concept sampling for
  counting and threshold-free local pattern mining. In: {SIAM} International
  Conference on Data Mining, ({SDM} 2010). pp. 177--188. {SIAM} (2010)

\bibitem{czedli1982factor}
Cz{\'e}dli, G.: Factor lattices by tolerances. Acta Sci. Math.(Szeged)
  \textbf{44}(1-2),  35--42 (1982)

\bibitem{diasreducing}
Dias, S., Vieira, N.: Reducing the size of concept lattices: The {JBOS}
  approach. In: 7th International Conference on Concept Lattices and Their
  Applications (CLA 2010). {CEUR} Workshop Proceedings, vol.~672, pp. 80--91
  (2010)

\bibitem{Durrschnabel}
D{\"u}rrschnabel, D., Koyda, M., Stumme, G.: Attribute selection using
  contranominal scales. In: Graph-Based Representation and Reasoning. pp.
  127--141. Springer International Publishing, Cham (2021)

\bibitem{fca-book}
Ganter, B., Wille, R.: Formal Concept Analysis - Mathematical Foundations.
  Springer (1999)

\bibitem{hanika2019relevant}
Hanika, T., Koyda, M., Stumme, G.: Relevant attributes in formal contexts. In:
  24th International Conference on Conceptual Structures, ({ICCS} 2019).
  Lecture Notes in Computer Science, vol. 11530, pp. 102--116. Springer (2019)

\bibitem{kelly1974crowns}
Kelly, D., Rival, I.: Crowns, fences, and dismantlable lattices. Canadian
  Journal of Mathematics  \textbf{26}(5),  1257--1271 (1974).
  \doi{10.4153/CJM-1974-120-2}

\bibitem{kolibiar1987congruence}
Kolibiar, M.: Congruence relations and direct decompositions of ordered sets.
  Acta Sci. Math.(Szeged)  \textbf{51}(1-2),  129--135 (1987)

\bibitem{Kuitche2018}
Kuitch{\'{e}}, R., Temgoua, R., Kwuida, L.: A similarity measure to generalize
  attributes. In: 14th International Conference on Concept Lattices and Their
  Applications ({CLA} 2018). {CEUR} Workshop Proceedings, vol.~2123, pp.
  141--152 (2018)

\bibitem{Kumar}
Kumar, C.: Knowledge discovery in data using formal concept analysis and random
  projections. International Journal of Applied Mathematics and Computer
  Science  \textbf{21}(4),  745--756 (2011)

\bibitem{kuzuetsov1990stability}
Kuznetsov, S.: Stability as an estimate of the degree of substantiation of
  hypotheses derivedon the basis of operational similarity.
  Nauchno-Tekhnicheskaya Informatsiya, Seriya 2  (1990)

\bibitem{ordercongruence}
Moorthy, C., Karpagavalli, S.: A congruence relation in partially ordered sets
  \textbf{13},  401--405 (08 2015)

\bibitem{qi2019multi}
Qi, J., Wei, L., Wan, Q.: Multi-level granularity in formal concept analysis.
  Granular Computing  \textbf{4}(3),  351--362 (2019)

\bibitem{snavsel2009congruences}
Sn{\'a}{\v{s}}el, V., Jukl, M.: Congruences in ordered sets and lu compatible
  equivalences. Acta Universitatis Palackianae Olomucensis. Facultas Rerum
  Naturalium. Mathematica  \textbf{48}(1),  153--156 (2009)

\bibitem{zadeh1997toward}
Zadeh, L.A.: Toward a theory of fuzzy information granulation and its
  centrality in human reasoning and fuzzy logic. Fuzzy Sets Syst.
  \textbf{90}(2),  111--127 (1997)

\end{thebibliography}
\end{document}
